\newif\ifignore 
\newcommand{\auxproof}[1]{\ifignore
\mbox{}\newline
\textbf{PROOF:} \dotfill\newline
{\small#1}\mbox{}\newline
\textbf{ENDPROOF}\dotfill\newline
\fi}
\theoremstyle{plain}
\newtheorem{proposition}[theorem]{Proposition}
\theoremstyle{definition}
\newtheorem{nremark}[theorem]{Remark} 
\newtheorem{notation}[theorem]{Notation}
\newcounter{enumTempi}
\newcommand{\nsp}{\mathrm{nsp}}
\newenvironment{Auxproof}
  {\par\noindent\textbf{BEGIN: AUX-PROOF}\dotfill\par\footnotesize}
  {\normalsize\par\noindent\textbf{END: AUX-PROOF}\dotfill\par}
\providecommand{\phantomsection}{}
\newcommand{\mylabel}[2]{\raisebox{.7\normalbaselineskip}{\phantomsection}#1%
  \def\@currentlabel{#1}\textlabel{#2}}
\newcommand{\MyRightLabel}[2]{\RightLabel{(\mylabel{#1}{rule@#2})}}
\newcommand{\rulelabel}[1]{\label{rule@#1}}
\newcommand{\ruleref}[1]{(\ref{rule@#1})}
\renewcommand{\emptyset}{\varnothing}
\newcommand{\N}{\mathbb{N}}
\newcommand{\C}{\mathbb{C}}
\DeclarePairedDelimiter{\abs}{\lvert}{\rvert}
\newcommand{\linf}{\ell^{\infty}}
\DeclarePairedDelimiter{\sem}{\llbracket}{\rrbracket}
\DeclarePairedDelimiter{\bra}{\langle}{\rvert}
\DeclarePairedDelimiter{\ket}{\lvert}{\rangle}
\DeclarePairedDelimiter{\tup}{\langle}{\rangle}
\newcommand{\Mat}{\mathcal{M}}
\DeclarePairedDelimiterX{\innp}[2]{\langle}{\rangle}%
{#1\,\delimsize\vert\,\mathopen{}#2}
\DeclarePairedDelimiterX{\braket}[3]{\langle}{\rangle}%
{#1\delimsize\vert\mathopen{}#2\delimsize\vert\mathopen{}#3}
\DeclarePairedDelimiterX{\inprod}[2]{\langle}{\rangle}%
{#1\delimsize\vert\mathopen{}#2}
\newcommand{\To}{\Rightarrow}
\newcommand{\longto}{\longrightarrow}
\newcommand{\id}{\mathrm{id}}
\newcommand{\op}{\mathrm{op}}
\newcommand{\bang}{\mathord{!}}
\newcommand{\invbangsub}[2]{\rotatebox[origin=c]{180}{$#1#2$}}
\newcommand{\invbang}{\mathord{\mathpalette\invbangsub{!}}}
\newcommand{\vNA}{\mathbf{vNA}}
\newcommand{\vNAMIU}{\vNA_{\mathrm{MIU}}}
\newcommand{\vNACPsU}{\vNA_{\mathrm{CPsU}}}
\newcommand{\Set}{\mathbf{Set}}
\newcommand{\qcomp}{\mathcal{F}}
\newcommand{\incfun}{\mathcal{J}}
\newcommand{\frexp}[2]{{#2}^{*#1}}
\newcommand{\lem}{\mathcal{L}}
\newcommand{\scrA}{\mathscr{A}}
\newcommand{\scrB}{\mathscr{B}}
\newcommand{\scrC}{\mathscr{C}}
\newcommand{\monmap}{\triangledown}
\newcommand{\prob}{\mathop{\mathrm{prob}}\nolimits}
\newcommand{\Prob}{\mathop{\mathrm{Prob}}\nolimits}
\newcommand{\mergmap}{\mathrm{merge}}
\newcommand{\splmap}{\mergmap}
\DeclareMathOperator{\FV}{FV}
\newcommand{\oc}{\mathord{!}}
\newcommand{\unittype}{\top}
\newcommand{\unitterm}{\mathord{*}}
\newcommand{\subtype}{\mathrel{<\vcentcolon}}
\newcommand{\limp}{\mathbin{\multimap}}
\newcommand{\qclos}[3]{[\,#1,#2,#3\,]}
\newcommand{\tlam}[3][]{\lambda^{#1}#2.#3}
\newcommand{\ttup}[1]{\langle#1\rangle}
\DeclareDocumentCommand{\tinl}{ O{} O{} m }
{\mathop{\mathtt{inl}}\nolimits^{#1}_{#2}(#3)}
\DeclareDocumentCommand{\tinr}{ O{} O{} m }
{\mathop{\mathtt{inr}}\nolimits^{#1}_{#2}(#3)}
\newcommand{\tletin}[3]{\mathtt{let}\;{#1}={#2}\;\mathtt{in}\;{#3}}
\newcommand{\tmatchwith}[6][]{\mathtt{match}\;{#2}\;\mathtt{with}
\if\relax\detokenize{#1}\relax
\;\else^{#1}\fi
({#3}\;{\mapsto}\;{#4}
\;|\;{#5}\;{\mapsto}\;{#6})}
\newcommand{\tifthen}[3]{\mathtt{if}\;{#1}\;\mathtt{then}
\;{#2}\;\mathtt{else}\;{#3}}
\newcommand{\yields}{\mathrel{\vartriangleright}}
\newcommand{\tbit}{\mathtt{bit}}
\newcommand{\tzero}{\mathtt{f\!f}}
\newcommand{\tone}{\mathtt{t\!t}}
\newcommand{\tqbit}{\mathtt{qbit}}
\newcommand{\tnew}{\mathop{\mathtt{new}}\nolimits}
\newcommand{\tmeas}{\mathop{\mathtt{meas}}\nolimits}
\newcommand{\tsubst}[3]{#1[#3/#2]}
\newcommand{\tsubsttwo}[5]{#1[#3/#2,#5/#4]}
\newcommand{\redu}[1][]{\to_{#1}}
\newcommand{\bigredu}[2]{#1\Downarrow#2}
\newcommand{\permop}[1]{\bar{#1}}
\newcommand{\Iop}{\mathcal{I}}
\newcommand{\tfst}[2][]{\mathop{\mathtt{fst}}\nolimits_{#1}(#2)}
\newcommand{\tsnd}[2][]{\mathop{\mathtt{snd}}\nolimits_{#1}(#2)}
\newcommand{\transl}[1]{#1^{\dagger}}
\title{Von Neumann Algebras form a
   Model for the Quantum Lambda Calculus\footnote{%
The research leading to these results has received funding from the
European Research Council under the European Union's Seventh Framework
Programme (FP7/2007-2013) / ERC grant agreement n\textsuperscript{o} 320571}}
\author{Kenta Cho}
\author{Abraham Westerbaan}
\affil{Institute for Computing and Information Sciences\\
  Radboud University, Nijmegen, the Netherlands\\
  \texttt{\{K.Cho,awesterb\}@cs.ru.nl}}
\authorrunning{K. Cho and A. Westerbaan} 
\subjclass{F.3.2 Semantics of Programming Language}
\keywords{quantum lambda calculus, von Neumann algebras}
\begin{document}

\maketitle

\begin{abstract}
We present a model of Selinger and Valiron's quantum lambda calculus
based on von Neumann algebras,
and show that the model is adequate
with respect to the operational semantics.
\end{abstract}

\section{Introduction}

In 1925,
Heisenberg realised,
pondering upon the problem of the spectral lines
of the hydrogen atom,
that a physical quantity
such as the $x$-position of an electron
orbiting a proton
is best described
not by a real number
but by an infinite array
of complex numbers~\cite{heisenberg1925}.
Soon afterwards, Born and Jordan noted that these arrays
should be multiplied as matrices are~\cite{born1925}.
Of course,
multiplying
such infinite matrices
may  lead to mathematically dubious situations,
which spurred von Neumann
to replace
the infinite matrices by operators on a Hilbert space~\cite{neumann1927}.
He organised
these into \emph{rings of operators}~\cite{murray1936},
now called \emph{von Neumann algebras},
and thereby set off an explosion
of research 
(also into related structures such as Jordan algebras~\cite{jordan1933},
orthomodular lattices~\cite{birkhoff1936},
$C^*$-algebras~\cite{segal1947},
$AW^*$-algebras~\cite{kaplansky1951},
order unit spaces~\cite{kadison1951},
Hilbert $C^*$-modules~\cite{paschke1973},
operator spaces~\cite{ruan1988},
effect algebras~\cite{foulis1994},
\dots), which continues even to this day.

One current line of research
(with old roots~\cite{guichardet1966,dauns1972,lorenz1969,dauns1978})
is the study of
von Neumann algebras
from a categorical 
perspective (see e.g.~\cite{cho2014,rennela2014,cho2015}).
One example
relevant to this paper
is Kornell's proof
that the opposite of the category~$\vNAMIU$
of von Neumann algebras
with the obvious structure preserving maps
(i.e.~the unital normal $*$-homomorphisms)
is monoidal closed
when 
endowed with the spatial tensor product~\cite{Kornell2012}.
He argues
that~$\vNAMIU^\op$
should 
be thought of as the quantum
version of~$\Set$.
We would like 
to focus instead on the category of von Neumann algebras
and completely positive normal subunital
maps, $\vNACPsU$,
as it seems more appropriate for modelling quantum computation:
the full subcategory of~$\vNACPsU^\op$ consisting of finite dimensional 
von Neumann algebras
is equivalent to Selinger's category~$\mathbf{Q}$~\cite{selinger2004},
which is used
to model first order quantum programming languages.

\begin{wrapfigure}{r}{.3\columnwidth}
\vspace{-.5em}
\centering
$
\xymatrix{
\mathscr{B} 
\ar@/^1em/[r]
\ar@{}[r]|\bot
&
\mathscr{C} 
\ar@/^1em/[r]
\ar@{}[r]|\bot
\ar@/^1em/[l]
&
\mathscr{D}
\ar@/^1em/[l]
}
$
\caption{General shape of a  model
of the QLC}
\label{fig:model}
\vspace{-1.5em}
\end{wrapfigure}
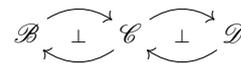
On the syntactic side,
in 2005, Selinger and Valiron~\cite{SelingerV2005,SelingerV2006} proposed a typed%
\footnote{%
An untyped quantum lambda calculus
had already been proposed by Van Tonder~\cite{vantonder2004}.}
lambda calculus for quantum computation,
and they studied it in a series of
papers~\cite{SelingerV2008a,SelingerV2008,SelingerV2009}.
A striking feature
of this quantum lambda calculus
is that functions naturally appear as data in the description
of the Deutch--Jozsa algorithm,
teleportation algorithm
and Bell's experiment.
Although 
Selinger and Valiron
gave a precise formulation
of what might constitute a model of the quantum lambda calculus
---
basically a pair
of adjunctions,
see Figure~\ref{fig:model},
with some additional properties~\cite[\S1.6]{SelingerV2009} ---
the existence of such a model
(other than the term model)
was an open problem for several years until Malherbe
constructed a model in his thesis
using presheaves~\cite{Malherbe2010}.
The construction
of Malherbe's
model is quite abstract,
and it is (perhaps because of this)
not yet known whether his model is adequate
with respect to the operational semantics
defined by Selinger and Valiron in \cite{SelingerV2006}
(see also \cite{SelingerV2009}).
While several
 adequate
models for variations on the quantum lambda calculus
have been proposed
in the meantime
(using the geometry of interaction in~\cite{HasuoH2011},
and quantitative semantics in~\cite{PaganiSV2014}),
Malherbe's model
remains the only model of the 
original quantum lambda calculus~\cite{SelingerV2006} known in the literature,
and so the existence of an adequate model
for the quantum lambda calculus is still open.

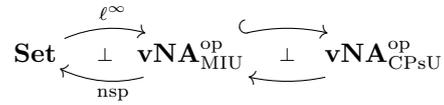
\begin{wrapfigure}{r}{.45\columnwidth}
\vspace{-1em}
\centering
$
\xymatrix{
\mathbf{Set}
\ar@/^1em/[r]^{\ell^\infty}
\ar@{}[r]|(.45){\bot}
&
\vNAMIU^\mathrm{op}
\ar@{^(->}@/^1em/[r]
\ar@{}[r]|\bot
\ar@/^1em/[l]^{\nsp}
&
\vNACPsU^\mathrm{op}
\ar@/^1em/[l]
}
$
\caption{A  model of the QLC
using von Neumann algebras}
\label{fig:neumann-model}
\vspace{-.5em}
\end{wrapfigure}
In this paper,
we present the model
 of Selinger and Valiron's quantum lambda calculus,
based on von Neumann algebras,
see Figure~\ref{fig:neumann-model},
and
we show that the model is adequate
with respect to the operational semantics.
We should note that it is possible
to extend the quantum lambda calculus
with recursion and inductive types,
but that we have not yet been able
to include these features in our model.

The paper is divided in six sections.
We begin with a short
review of  quantum computation
(in Section~\ref{sec:quantum-computation}),
and  the quantum lambda calculus and its 
operational semantics
(in Section~\ref{sec:quantum-lambda-calculus-op-sem}).
We give the
denotational semantics
for the quantum lambda calculus 
using von Neumann algebras
and prove its adequacy
in Section~\ref{sec:denotational-semantics}.
For this
we use several technical results
about the categories~$\vNAMIU$ and~$\vNACPsU$
of von Neumann algebras,
which we will discuss in Section~\ref{technical-result-vNA}.
We end with a conclusion in Section~\ref{sec:final-remarks}.

\section{Quantum Computation}
\label{sec:quantum-computation}

In a nutshell,
one gets
the quantum lambda calculus
by taking the simply typed lambda calculus
with products and coproducts
and adding a qubit type.
This single ingredient
dramatically changes the flavour of the whole system
e.g.~forcing one to make the type system linear,
so we will spend some words on the behaviour of qubits
in this section.
For more details on quantum computation, see~\cite{nielsen2010}.

A state of an isolated qubit
is a vector~$\ket\psi$ of length~$1$ 
in the Hilbert space~$\mathbb{C}^2$,
and can be written
as
a complex linear combination
(``superposition'')
$\ket\psi = \alpha \ket0 + \beta \ket1$,
since 
the vectors  $\ket0 = (1,0)$
and $\ket1 = (0,1)$
form an orthonormal basis
for~$\mathbb{C}^2$.

When qubits are combined to form a larger system,
one can sometimes no longer speak about the state of the individual qubits,
but only of the state of the whole system 
(in which case the qubits are ``entangled'').
The  state of a register of~$n$ qubits
is a vector~$\ket\psi$ of length~$1$ in
the $n$-fold tensor product 
$(\mathbb{C}^2)^{\otimes n}\cong \smash{\mathbb{C}^{2^n}}$,
which has 
as an orthonormal basis
the vectors
of the form
$\ket{w}\equiv \ket{w_1}\otimes \dotsb\otimes \ket{w_n}$
where $w\equiv w_1\dotsb w_n \in 2^n$.

For the purposes of this paper
there are three basic operations 
 on registers of qubits.
\begin{enumerate}
\item
One can add a new qubit
in state~$\ket0$
to a register of~$n$ qubits in state~$\ket\psi$,
turning it to a register of~$n+1$ qubits
in state $\ket\psi \otimes \ket 0$.
A qubit in state~$\ket1$
can be added similarly.

\item
One can apply
a unitary $2^n\times 2^n$ matrix~$U$
to a register of~$n$ qubits
in state~$\ket\psi$
turning the state to~$U\ket\psi$.

\item
One can test
the first qubit
in the register.
If the state
of the register is written as  $\ket\psi 
\equiv \alpha\, \ket 0 \otimes \ket{\psi_0}
\,+\, \beta\, \ket1 \otimes \ket{\psi_1}$
where the length of~$\ket{\psi_0}$
and~$\ket{\psi_1}$ is~$1$,
then the test comes out negative
and
changes
the state of the register to~$\ket0 \otimes \ket{\psi_0}$
with probability~$|\alpha|^2$,
and comes out positive
with probability~$|\beta|^2$
changing the state to~$\ket1 \otimes \ket{\psi_1}$.

Measurement of the $i$-th qubit in the register
is also possible and behaves similarly.
\end{enumerate}
A predicate on a register of~$n$ qubits
is a $2^n\times 2^n$ matrix~$P$
such that both $P$ and~$I-P$ are positive
(which is the case when~$P$ is a projection).
The probability that~$P$ holds in
state~$\ket\psi$ is $\braket{\psi}{P}{\psi}$.
For example,
given a state~$\ket\psi$ of a qubit,
the projection $\ket{\psi}\bra{\psi}$
(which maps $\ket\xi$ to~$\inprod{\psi}{\xi}\ket{\psi}$)
represents the predicate ``the qubit is in state~$\ket\psi$''.

Thus the predicates
on a qubit are part of the algebra~$\Mat_2$
of $2\times 2$ complex matrices.
There is also an algebra 
for the bit,
namely~$\mathbb{C}^2$.
A predicate on a bit
is an element~$(x,y)\equiv v\in \mathbb{C}^2$ with $0\leq v \leq 1$,
which is interpreted as
``the bit is true with probability~$y$,
false with probability~$x$,
and undefined with probability $1-x-y$''.

An operation on a register of qubits
may not only be described by the effect it has
on states (Schr\"odinger's view),
but also by its action on predicates (Heisenberg's view).
\begin{enumerate}
\item
The operation
which takes a bit~$b$
and returns a qubit in state $\ket{b}$
is represented by the map $f_{\tnew}\colon \Mat_2\to \mathbb{C}^2$
given by~$f_{\tnew}(A) = (\,\braket{0}{A}{0},\,\braket{1}{A}{1}\,)$.

\item
The operation
which applies a unitary~$U$
to a register of~$n$ qubits
is represented by the map~$f_{U}\colon \Mat_{2^n}\to\Mat_{2^n}$
given by~$f_U(A) = U^*AU$.

\item
The operation
which tests a qubit
and returns the outcome
is represented by the map~$f_{\tmeas}\colon \mathbb{C}^2\to\Mat_2$
given by~$f_{\tmeas}(\lambda,\varrho)= \lambda\ket{0}\bra{0}
+\varrho \ket{1}\bra{1}$.
\end{enumerate}

A general operation between \emph{finite dimensional} quantum data
types
is usually taken 
to be a completely positive subunital linear map (see below)
between direct sums
of matrix algebras, $\bigoplus_{i=1}^n \Mat_{m_i}$.
The category formed by these operations
is equivalent to~$\mathbf{Q}^\op$~\cite[Th.~8.4]{cho2014}.

Von Neumann algebras
are a generalisation
of direct sums of matrix algebras
to infinite dimensions.
Formally,
a von Neumann 
algebra~$\mathscr{A}$
is a linear subspace
of the bounded operators 
on a Hilbert space~$\mathscr{H}$,
which contains the identity operator, $1$, 
is closed under multiplication,
involution, $(-)^*$,
and is closed
in the weak operator topology,
i.e.~the topology
generated by the seminorms $|\braket{x}{-}{x}|$
where~$x\in\mathscr{H}$
(cf.~\cite{kadison1997,murray1936}).

We believe that
the opposite
$\vNACPsU^\op$
of the category 
of von Neumann algebras
and normal completely positive subunital maps
(definitions are given below)
might 
turn out to be the most suitable
extension
of~$\mathbf{Q}$
to describe
operations
between
(possibly infinite dimensional)
quantum data types.
Indeed,
to support this thesis,
we will show that~$\vNACPsU^\op$
gives a model of the quantum lambda calculus.

Let us end this section
with the definitions
that are necessary to understand~$\vNACPsU$.
An element~$a$ of
a von Neumann algebra~$\mathscr{A}$ is  \emph{self-adjoint}
if $a^*=a$,
and \emph{positive} if~$a\equiv b^*b$ for some~$b\in \mathscr{A}$.
The self-adjoint elements
of a von Neumann algebra~$\mathscr{A}$
are partially ordered by: $a\leq b$ iff $b-a$ is positive.
Any upwards directed bounded
subset~$D$ of self-adjoint 
elements of a von Neumann algebra~$\mathscr{A}$ has a supremum~$\bigvee D$
in the set of self-adjoint elements 
of~$\mathscr{A}$~\cite[Lem.~5.1.4]{kadison1997}.
(So a von Neumann algebra
resembles a domain.)

The linear maps between von Neumann algebras
which preserve the multiplication,
involution,~$(-)^*$, and unit,~$1$,
are called unital $*$-homomorphisms in the literature
and \emph{MIU-maps} by us.
A linear map~$f$ between von Neumann algebras
is \emph{positive}
if it maps positive elements to positive elements,
\emph{unital}
if it preserves the unit,
\emph{subunital}
if~$f(1)\leq 1$,
and \emph{normal}
if~$f$ is positive and 
preserves suprema of bounded directed sets of self-adjoint
elements. (If subunital maps
are akin to partial maps between sets,
then the unital maps are the total maps.
Normality is the incarnation
of Scott continuity
in this setting,
and coincides
with continuity with respect
to the $\sigma$-weak = ultraweak = weak* topology~\cite[Th.~1.13.2]{Sakai2012}.)

Given a von Neumann algebra~$\mathscr{A}$
on a Hilbert space~$\mathscr{H}$,
and a von Neumann algebra~$\mathscr{B}$
on a Hilbert space~$\mathscr{K}$,
the \emph{spatial tensor product}
$\mathscr{A}\otimes \mathscr{B}$
of~$\mathscr{A}$ and~$\mathscr{B}$
is the least von Neumann algebra on~$\mathscr{H}\otimes \mathscr{K}$
which contains all operators of the form~$a\otimes b$
where $(a\otimes b)(x\otimes y) = a(x)\otimes b(y)$
for all~$a\in\mathscr{A}$, $b\in \mathscr{B}$,
$x\in\mathscr{H}$ and~$y\in\mathscr{K}$~\cite[\S11.2]{KadisonRingrose2}.
(The tensor product~$\mathscr{A}\otimes\mathscr{B}$
may be physically interpreted as the composition
of the systems~$\mathscr{A}$ and~$\mathscr{B}$ --- recall
that a register of two
qubits is represented by the von Neumann algebra
$\Mat_2\otimes \Mat_2$.)

Given normal positive $f\colon\mathscr{A}\to\mathscr{B}$
and $g\colon \mathscr{C}\to\mathscr{D}$
there might
be a normal positive linear map
$f\otimes g\colon \mathscr{A} \otimes \mathscr{C}\to
\mathscr{B}\otimes\mathscr{D}$
given by $(f\otimes g)(a\otimes c)=f(a)\otimes g(c)$.
An interesting,
and annoying,
phenomenon
is that such $f\otimes g$ 
need not exist for all~$f$ and~$g$.
This warrants
the following definition:
if $f\colon\mathscr{A}\to\mathscr{B}$ 
is a positive linear map
such that for every natural number~$n$
the map
$\Mat_n(f)\colon \Mat_n(\mathscr{A})
\to\Mat_n(\mathscr{B})$
is  positive,
then~$f$ is called \emph{completely positive}~\cite{paulsen2002}.
Here~$\Mat_n(\mathscr{A})$
is the von Neumann algebra of~$n\times n$ matrices
with entries drawn from~$\mathscr{A}$,
and~$\Mat_n(f)(A)_{ij}=f(A_{ij})$
for all~$i,j$ and~$A\in\Mat_n(\mathscr{A})$.
If~$f$ and~$g$ are normal and completely positive,
then~$f\otimes g$ exists,
and is completely positive~\cite[Prop.~IV/5.13]{Takesaki1}.

\section{The Quantum Lambda Calculus and its Operational Semantics}
\label{sec:quantum-lambda-calculus-op-sem}

We review the quantum lambda calculus for which we will give
a denotational semantics. The language and its operational semantics
are basically the same as Selinger and Valiron's ones~\cite{SelingerV2006},
but with sum type $\oplus$~\cite{SelingerV2009}
and `indexed' terms~\cite{SelingerV2008},
see Remark~\ref{rem:indexedterms} and Notation~\ref{notation:sumtype} below.
For space reasons we omit many details,
and refer to \cite{SelingerV2006,SelingerV2008,SelingerV2009}.

\subsection{Syntax and Typing Rules}

\begin{table}[tb]
\small
\setlength{\abovedisplayskip}{0pt}
\setlength{\belowdisplayskip}{0pt}
\setlength{\abovedisplayshortskip}{0pt}
\setlength{\belowdisplayshortskip}{0pt}
\begin{gather*}
\text{\emph{Type}}\;\;
A,B
\Coloneqq
\tqbit\mid
\unittype\mid
\oc A\mid
A\limp B\mid
A\otimes B\mid
A\oplus B
\\
\begin{aligned}
\text{\emph{Term}}\;\;
M,N,L \Coloneqq{}&
x^A\mid
\tnew^A\mid
\tmeas^A\mid
U^A\mid
\tlam[n]{x^A}{M}\mid
MN\mid
\unitterm^n \mid
\tletin{\ttup{x^A,y^B}^n}{N}{M}
\\ \mid{}&
\ttup{M,N}^n \mid
\tinl[n][A,B]{M} \mid
\tinr[n][A,B]{N} \mid
\tmatchwith[n]{L}{x^A}{M}{y^B}{N}
\end{aligned}
\\
\text{\emph{Value}}\;\;
V,W \Coloneqq{}
x^A\mid
\tnew^A\mid
\tmeas^A\mid
U^A\mid
\unitterm^n \mid
\tlam[n]{x^A}{M}\mid
\ttup{V,W}^n \mid
\tinl[n][A,B]{V} \mid
\tinr[n][A,B]{W}
\end{gather*}
\caption{Types, terms and values of the quantum lambda calculus}
\label{tab:types-terms}
\vspace{-2ex}
\end{table}

The language consists of
\emph{types}, \emph{terms} and \emph{values} defined
in Table~\ref{tab:types-terms}.
We use obvious shorthand $\oc^n A=\oc\dotsb\oc A$
and $A^{\otimes n}=A\otimes\dotsb\otimes A$.
The \emph{subtyping relation} $\subtype$ on types
is defined by the rules
shown in Table~\ref{tab:typing-rules}(a).
In the definition of terms and values,
$n\in\N$ is a natural number; $x$ ranges over variables;
and $U$ ranges over $2^k\times 2^k$ unitary matrices for $k\ge 1$.
The (nullary) constructors $\tnew,\tmeas,U$ are called \emph{constants}
and sometimes referred to by $c$.
Clearly, values form a subclass of terms.
As usual, we identify terms up to $\alpha$-equivalence.

\begin{nremark}
\label{rem:indexedterms}
The terms are \emph{indexed terms} of~\cite{SelingerV2008},
which have explicit type annotations
(cf.\ Church-style vs. Curry-style in the simply-typed lambda calculus).
A typing derivation for an indexed term is unique in a suitable sense,
so that we can more easily obtain Lemma~\ref{lem:interpretation-well-defd}.
In fact, for the language of~\cite{SelingerV2008}
we can safely remove the type annotations~\cite[Corollary~1]{SelingerV2008}.
We conjecture that the same is true for our language,
which is left as a future work.
\end{nremark}

\begin{notation}
\label{notation:sumtype}
Following~\cite{SelingerV2009}
(and \cite{HasuoH2011,PaganiSV2014}), the language
has sum type $\oplus$ instead of the $\tbit$ type (which exists in~\cite{SelingerV2006}).
The $\tbit$ type and its constructors are emulated by
$\tbit\coloneqq\unittype\oplus\unittype$;
$\tzero^n\coloneqq\tinl[n][\unittype,\unittype]{\unitterm^n}$;
$\tone^n\coloneqq\tinr[n][\unittype,\unittype]{\unitterm^n}$;
and $\tifthen{L}{M}{N}\coloneqq\tmatchwith[0]{L}{x^{\unittype}}{M}{y^{\unittype}}{N}$,
with fresh variables $x,y$.
\end{notation}

The set $\FV(M)$ of \emph{free variables} is defined in the usual way.
A \emph{context} is a list $\Delta=x_1:A_1,\dotsc,x_n:A_n$
of variables $x_i$ and types $A_i$ where the variables $x_i$ are distinct.
We write $\abs{\Delta}=\{x_1,\dotsc,x_n\}$ and $\oc\Delta=x_1:\oc A_1,\dotsc,x_n:\oc A_n$.
We also write $\Delta|_M=\Delta\cap\FV(M)$ for the context
restricted to the free variables of $M$.

A \emph{typing judgement}, written as $\Delta\yields M:A$,
consists of a context $\Delta$, a term $M$ and a type $A$.
A typing judgement is \emph{valid} if it can be derived by
the \emph{typing rules} shown in Table~\ref{tab:typing-rules}(b).
In the rule \ruleref{ax2}, $c$ ranges over $\tnew$, $\tmeas$
and $2^k\times 2^k$ unitary matrices $U$;
and the types $A_c$ are defined as follows:
$A_{\tnew}=\tbit\limp\tqbit$,
$A_{\tmeas}=\tqbit\limp\oc\tbit$,
$A_U=\tqbit^{\otimes k}\limp\tqbit^{\otimes k}$.

\begin{table}[tb]
\newcommand{\MySkipAmount}{\vskip2.0ex plus.8ex minus.4ex}
\small
{\centering
  \AxiomC{\vphantom{$A_1$}}
  \UnaryInfC{$\oc^n\tqbit\subtype\oc^m\tqbit$}
  \DisplayProof
\quad
  \AxiomC{\vphantom{$A_1$}}
  \UnaryInfC{$\oc^n\unittype\subtype\oc^m\unittype$}
  \DisplayProof
\quad
  \AxiomC{$A_1\subtype B_1$}
  \AxiomC{$A_2\subtype B_2$}
  \BinaryInfC{$\oc^n(A_2\limp B_1)\subtype\oc^m(A_1\limp B_2)$}
  \DisplayProof
\MySkipAmount
  \AxiomC{$A_1\subtype B_1$}
  \AxiomC{$A_2\subtype B_2$}
  \BinaryInfC{$\oc^n(A_1\otimes A_2)\subtype\oc^m(B_1\otimes B_2)$}
  \DisplayProof
\quad
  \AxiomC{$A_1\subtype B_1$}
  \AxiomC{$A_2\subtype B_2$}
  \BinaryInfC{$\oc^n(A_1\oplus A_2)\subtype\oc^m(B_1\oplus B_2)$}
  \DisplayProof
\par}

\vspace{1ex}
\textbf{(a)} Rules for subtyping, with a condition
$(n=0\Rightarrow m=0)$ for each rule
\vspace{2ex}

{\centering
  \MyRightLabel{$\mathit{ex}$}{ex}
  \AxiomC{$\Delta,x:A,y:B,\Gamma\yields M:C$}
  \UnaryInfC{$\Delta,y:B,x:A,\Gamma\yields M:C$}
  \DisplayProof
\quad
  \MyRightLabel{$\mathit{ax_1}$}{ax1}
  \AxiomC{$A\subtype B$}
  \UnaryInfC{$\Delta,x:A\yields x^B:B$}
  \DisplayProof
\quad
  \MyRightLabel{$\mathit{ax_2}$}{ax2}
  \AxiomC{$\oc A_c\subtype B$}
  \UnaryInfC{$\Delta\yields c^B:B$}
  \DisplayProof
\MySkipAmount
  \MyRightLabel{$\limp.I_1$}{limp.I1}
  \AxiomC{$\Delta,x:A\yields M:B$}
  \UnaryInfC{$\Delta\yields \tlam[0]{x^A}{M}:A\limp B$}
  \DisplayProof
\quad
  \MyRightLabel{$\limp.I_2$}{limp.I2}
  \AxiomC{$\Gamma,\oc\Delta,x:A\yields M:B$}
  \AxiomC{$\FV(M)\cap\abs{\Gamma}=\emptyset$}
  \BinaryInfC{$\Gamma,\oc\Delta\yields \tlam[n+1]{x^A}{M}:\oc^{n+1}(A\limp B)$}
  \DisplayProof
\MySkipAmount
  \MyRightLabel{$\limp.E$}{limp.E}
  \AxiomC{$\oc\Delta,\Gamma_1\yields M:A\limp B$}
  \AxiomC{$\oc\Delta,\Gamma_2\yields N:A$}
  \BinaryInfC{$\oc\Delta,\Gamma_1,\Gamma_2\yields MN:B$}
  \DisplayProof
\MySkipAmount
  \MyRightLabel{$\top$}{top}
  \AxiomC{\vphantom{$\Gamma_1\oc^n A$}}
  \UnaryInfC{$\Delta\yields \unitterm^n:\oc^{n}\unittype$}
  \DisplayProof
\quad
  \MyRightLabel{$\otimes.I$}{tens.I}
  \AxiomC{$\oc\Delta,\Gamma_1\yields M:\oc^n A$}
  \AxiomC{$\oc\Delta,\Gamma_2\yields N:\oc^n B$}
  \BinaryInfC{$\oc\Delta,\Gamma_1,\Gamma_2\yields \ttup{M,N}^n:\oc^n(A\otimes B)$}
  \DisplayProof
\MySkipAmount
  \MyRightLabel{$\otimes.E$}{tens.E}
  \AxiomC{$\oc\Delta,\Gamma_1,x:\oc^n A,y:\oc^n B\yields M:C$}
  \AxiomC{$\oc\Delta,\Gamma_2\yields N:\oc^n (A\otimes B)$}
  \BinaryInfC{$\oc\Delta,\Gamma_1,\Gamma_2\yields
    \tletin{\ttup{x^A,y^B}^n}{N}{M}:C$}
  \DisplayProof
\MySkipAmount
  \MyRightLabel{$\oplus.I_1$}{sum.I1}
  \AxiomC{$\Delta\yields M:\oc^n A$}
  \UnaryInfC{$\Delta\yields\tinl[n][A,B]{M}:\oc^n(A\oplus B)$}
  \DisplayProof
\quad
  \MyRightLabel{$\oplus.I_2$}{sum.I2}
  \AxiomC{$\Delta\yields N:\oc^n B$}
  \UnaryInfC{$\Delta\yields\tinr[n][A,B]{N}:\oc^n(A\oplus B)$}
  \DisplayProof
\MySkipAmount
  \MyRightLabel{$\oplus.E$}{sum.E}
  \AxiomC{$\oc\Delta,\Gamma_1,x:\oc^n A\yields M:C$}
  \AxiomC{$\oc\Delta,\Gamma_1,y:\oc^n B\yields N:C$}
  \AxiomC{$\oc\Delta,\Gamma_2\yields L:\oc^n(A\oplus B)$}
  \TrinaryInfC{$\oc\Delta,\Gamma_1,\Gamma_2\yields
    \tmatchwith[n]{L}{x^A}{M}{y^B}{N}:C$}
  \DisplayProof
\par}

\vspace{1ex}
\textbf{(b)} Typing rules

\caption{Subtyping relation and typing rules}
\label{tab:typing-rules}
\vspace{-2ex}
\end{table}

The type system is affine (weak linear).
Each variable may occur at most once,
unless it has a duplicable type $\oc A$.
Substitution of the following form is admissible.

\begin{lemma}[Substitution]
\label{lem:subst}
If $\oc\Delta,\Gamma_1, x:A\yields M:B$
and $\oc\Delta,\Gamma_2\yields V:A$,
where $V$ is a value and
$\abs{\Gamma_1}\cap\abs{\Gamma_2}=\emptyset$,
then $\oc\Delta,\Gamma_1,\Gamma_2\yields \tsubst{M}{x}{V}:B$.
\qed
\end{lemma}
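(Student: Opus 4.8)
The plan is to prove the statement by induction on the derivation of $\oc\Delta,\Gamma_1,x:A\yields M:B$, case-splitting on the final typing rule. The governing distinction is whether the type $A$ of the substituted variable is \emph{duplicable}, i.e.\ of the form $\oc A'$. If $A$ is not duplicable then $x$ lives in the linear zone of every context, so it meets at most one premise of each rule and is replaced by a single copy of $V$; if $A=\oc A'$ then $x$ sits in the shared $\oc$-zone, may occur in several premises at once, and $V$ must be copied. Two standard auxiliary facts will be used throughout: \emph{weakening} (appending fresh variables to a context preserves validity, which holds because the system is affine), and \emph{subsumption} (if $\Theta\yields M:B$ and $B\subtype B'$ then $\Theta\yields M:B'$), a routine admissible property. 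The exchange rule \ruleref{ex} is absorbed trivially.

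The routine cases are the axioms and the rules in which $x$ meets a single premise. In \ruleref{ax1} with $M=x^{B}$ we have $A\subtype B$ and $\tsubst{M}{x}{V}=V$, so the goal $\oc\Delta,\Gamma_1,\Gamma_2\yields V:B$ follows from the hypothesis on $V$ by weakening (to insert $\Gamma_1$) and subsumption (to lift $A$ to $B$); if instead $M=y^{B}$ with $y\neq x$ the substitution is vacuous and weakening alone suffices, and \ruleref{ax2} and \ruleref{top} are similar. In the multi-premise rules \ruleref{limp.E}, \ruleref{tens.I}, \ruleref{tens.E} and \ruleref{sum.E} with $A$ non-duplicable, $x$ lies in the linear context and hence in exactly one premise; applying the induction hypothesis to that premise, recombining with the untouched premises, and distributing $\Gamma_1,\Gamma_2$ along the existing split yields the result. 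Rule \ruleref{limp.I1} and the unary rules \ruleref{sum.I1}, \ruleref{sum.I2} are immediate from the induction hypothesis.

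The genuinely delicate situation is $A=\oc A'$, for then $x$ inhabits the shared zone and substitution may create several copies of $V$, one per premise in \ruleref{limp.E}, \ruleref{tens.I}, \ruleref{sum.E}, and in particular one inside a \ruleref{limp.I2} promotion. Copying $V$ is legitimate only if the linear context $\Gamma_2$ typing it is empty up to duplicability, and this is exactly what the value hypothesis buys. I would therefore first establish the key auxiliary fact, \emph{value promotion}: if $V$ is a value and $\oc\Delta,\Gamma_2\yields V:\oc A'$, then $V$ may be re-typed in the wholly duplicable context $\oc\Delta$ alone, the linear part $\Gamma_2$ being discarded. This is proved by induction on the structure of the value: for a promoted abstraction $\tlam[n+1]{y^{C}}{N}$ the side condition $\FV(N)\cap\abs{\Gamma}=\emptyset$ of \ruleref{limp.I2} already confines its free variables to the $\oc$-zone; for $\ttup{V_1,V_2}^{n}$, $\tinl{V_1}$, $\tinr{V_2}$ with $n\ge 1$ one recurses into the components; for constants and $\unitterm^{n}$ it is immediate; and for a variable $x^{B}$ of duplicable type the subtyping side condition $(n=0\To m=0)$ forces its declared type to be duplicable as well.

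With value promotion in hand the duplicable cases close: in each multi-premise rule $x:\oc A'$ occurs in both premises, the induction hypothesis substitutes a copy of $V$ into each, and since $V$ is typeable in $\oc\Delta$ (hence freely copyable) the copies coexist without breaking affinity, $\Gamma_2$ being merely weakened into the conclusion. I expect the \ruleref{limp.I2} case to be the main obstacle: after substituting into the body one must re-apply the promotion rule, which demands that every free variable of the new body lie in the $\oc$-zone; this is precisely guaranteed because the inserted $V$ carries no linear context, so it is here that the value restriction and the $\oc$-discipline interact most tightly and where the proof must be handled with care.
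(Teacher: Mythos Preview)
The paper does not actually prove this lemma: it is stated with a bare \qed and accompanied only by a pointer to \cite[\S2.5]{SelingerV2008} and \cite[\S9.1.4]{Valiron2008PhD}, together with a warning that the substitution $\tsubst{M}{x}{V}$ must be \emph{defined} with care for indexed terms. Your inductive strategy on the typing derivation, with the duplicable/non-duplicable case split and the value-promotion auxiliary, is the standard argument one finds in those references, so in spirit you are aligned with what the paper intends.

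There is, however, a genuine slip in your base case. The ``subsumption'' principle you invoke---if $\Theta\yields M:B$ and $B\subtype B'$ then $\Theta\yields M:B'$---is \emph{false} in this calculus, precisely because terms carry type annotations: $x^{B}$ and $x^{B'}$ are distinct terms, and for instance from $x{:}A\yields x^{A}:A$ one cannot derive $x{:}A\yields x^{A}:B$ for any $B\neq A$, since only \ruleref{ax1} applies to a variable and it produces $x^{B}$, not $x^{A}$. The paper's remark immediately after the lemma is flagging exactly this issue. The correct fix is not to postulate subsumption but to build the type adjustment into the \emph{definition} of substitution: in the \ruleref{ax1} case $\tsubst{x^{B}}{x}{V}$ is not $V$ but the term obtained from $V$ by lifting its head annotation along $A\subtype B$ (e.g.\ $\tsubst{x^{B}}{x}{z^{A}}=z^{B}$, as in the paper's example). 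One then proves, as a small separate lemma about values, that this annotation-coerced value is well-typed at~$B$ in the same context. This is the content your appeal to subsumption was meant to supply, but it is a statement about a specific coercion on values, not a general admissible typing rule. With substitution defined this way the remainder of your induction goes through essentially as you describe.
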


\noindent
Note, however, that we need to define
the substitution $\tsubst{M}{x}{V}$ with care.
For example, if $A\subtype A'$, $M=y^{A'\limp B}x^{A'}$
and $V=z^A$, then we substitute $z^{A'}$
(not $z^A$) for $x^{A'}$ in $M$.
See~\cite[\S2.5]{SelingerV2008}
or~\cite[\S9.1.4]{Valiron2008PhD}
for details.

\subsection{Operational Semantics}

The operational semantics is taken from~\cite{SelingerV2006,SelingerV2009},
but is adapted for indexed terms.

\begin{definition}
A \emph{quantum closure} is a triple $\qclos{\ket{\psi}}{\Psi}{M}$
with $m\in\N$ where:
\begin{itemize}
\item
$\ket{\psi}$ is a normalised vector of the Hilbert space $(\C^2)^{\otimes m}\cong\C^{2^m}$.
\item
$\Psi$ is a list of $m$ distinct variables,
written as $\ket{x_1\dotso x_m}$.
We write $\abs{\Psi}=\{x_1,\dotsc,x_m\}$,
and $\Psi(x_i)=i$ for the position of a variable in the list.
\item
$M$ is a term with $\FV(M)\subseteq\abs{\Psi}$.
\end{itemize}
We say a quantum closure $P=\qclos{\ket{\psi}}{\ket{x_1\dotso x_m}}{M}$
is \emph{well-typed of type $A$},
written as $P:A$,
if the typing judgement $x_1:\tqbit,\dotsc,x_m:\tqbit\yields M:A$ is valid.
We call $\qclos{\ket{\psi}}{\Psi}{V}$
a \emph{value closure} if $V$ is a value.
\end{definition}

\auxproof{We identify quantum closures up to the obvious $\alpha$-equivalence.}%

\begin{table}
\small
\setlength{\abovedisplayskip}{0pt}
\setlength{\belowdisplayskip}{0pt}
\setlength{\abovedisplayshortskip}{0pt}
\setlength{\belowdisplayshortskip}{0pt}
\begin{gather*}
\tag{$\limp$}\rulelabel{redu:beta.limp}
\qclos{\ket{\psi}}{\Psi}{(\tlam[0]{x^A}{M})V}
\redu[1]
\qclos{\ket{\psi}}{\Psi}{\tsubst{M}{x}{V}}
\\
\tag{$\otimes$}\rulelabel{redu:beta.tens}
\qclos{\ket{\psi}}{\Psi}{\tletin{\ttup{x^A,y^B}^n}{\ttup{V,W}^n}{M}}
\redu[1]
\qclos{\ket{\psi}}{\Psi}{\tsubsttwo{M}{x}{V}{y}{W}}
\\
\tag{$\oplus_1$}\rulelabel{redu:beta.sum1}
\qclos{\ket{\psi}}{\Psi}{\tmatchwith[n]{\tinl[n][A,B]{V}}{x^A}{M}{y^B}{N}}
\redu[1]
\qclos{\ket{\psi}}{\Psi}{\tsubst{M}{x}{V}}
\\
\tag{$\oplus_2$}\rulelabel{redu:beta.sum2}
\qclos{\ket{\psi}}{\Psi}{\tmatchwith[n]{\tinr[n][A,B]{W}}{x^A}{M}{y^B}{N}}
\redu[1]
\qclos{\ket{\psi}}{\Psi}{\tsubst{N}{y}{W}}
\end{gather*}

\vspace{1ex}
\textbf{(a)} Classical control

\begin{gather*}
\tag{$U$}\rulelabel{redu:U}
\qclos{\ket{\psi}}{\Psi}{U^{\tqbit^{\otimes k}\limp\tqbit^{\otimes k}}
\ttup{x_1^{\tqbit},\dotsc,x_k^{\tqbit}}^0}
\redu[1]
\qclos{\ket{\psi'}}{\Psi}{\ttup{x_1^{\tqbit},\dotsc,x_k^{\tqbit}}^0}
\\
\tag{$\tmeas_0$}\rulelabel{redu:meas0}
\qclos{\ket{\psi}}{\ket{x_1\dotso x_m}}{\tmeas^{\tqbit\limp\oc^n\tbit}
x_i^{\tqbit}}
\redu[p_0]
\qclos{\ket{\psi_0}}{\ket{x_1\dotso x_m}}{\tzero^n}
\\
\tag{$\tmeas_1$}\rulelabel{redu:meas1}
\qclos{\ket{\psi}}{\ket{x_1\dotso x_m}}{\tmeas^{\tqbit\limp\oc^n\tbit}
x_i^{\tqbit}}
\redu[p_1]
\qclos{\ket{\psi_1}}{\ket{x_1\dotso x_m}}{\tone^n}
\\
\tag{$\tnew_0$}\rulelabel{redu:new0}
\qclos{\ket{\psi}}{\ket{x_1\dotso x_m}}{\tnew^{A\limp\tqbit}\tilde{\tzero}}
\redu[1]
\qclos{\ket{\psi}\ket{0}}{\ket{x_1\dotso x_m y}}{y^{\tqbit}}
\\
\tag{$\tnew_1$}\rulelabel{redu:new1}
\qclos{\ket{\psi}}{\ket{x_1\dotso x_m}}{\tnew^{A\limp\tqbit}\tilde{\tone}}
\redu[1]
\qclos{\ket{\psi}\ket{1}}{\ket{x_1\dotso x_m y}}{y^{\tqbit}}
\end{gather*}

\vspace{1ex}
\textbf{(b)} Quantum data

\vspace{2ex}
\par If $\qclos{\ket{\psi}}{\Psi}{M}\redu[p]\qclos{\ket{\psi'}}{\Psi'}{M'}$,
the following are valid reductions (if well-formed).
\vspace{1ex}
\begin{gather*}
\qclos{\ket{\psi}}{\Psi}{MN}
    \redu[p]
    \qclos{\ket{\psi'}}{\Psi'}{M'N}
\qquad
\qclos{\ket{\psi}}{\Psi}{VM}
    \redu[p]
    \qclos{\ket{\psi'}}{\Psi'}{VM'}
\\
\qclos{\ket{\psi}}{\Psi}{\ttup{M,N}^n}
    \redu[p]
    \qclos{\ket{\psi'}}{\Psi'}{\ttup{M',N}^n}
\qquad
\qclos{\ket{\psi}}{\Psi}{\ttup{V,M}^n}
    \redu[p]
    \qclos{\ket{\psi'}}{\Psi'}{\ttup{V,M'}^n}
\\
\qclos{\ket{\psi}}{\Psi}{\tletin{\ttup{x^A,y^B}^n}{M}{N}}
    \redu[p]
    \qclos{\ket{\psi'}}{\Psi'}{\tletin{\ttup{x^A,y^B}^n}{M'}{N}}
\\
\qclos{\ket{\psi}}{\Psi}{\tinl[n][A,B]{M}}
    \redu[p]
    \qclos{\ket{\psi'}}{\Psi'}{\tinl[n][A,B]{M'}}
\\
\qclos{\ket{\psi}}{\Psi}{\tinr[n][A,B]{M}}
    \redu[p]
    \qclos{\ket{\psi'}}{\Psi'}{\tinr[n][A,B]{M'}}
\\
\qclos{\ket{\psi}}{\Psi}{\tmatchwith[n]{M}{x^A}{N}{y^B}{L}}
    \redu[p]
    \qclos{\ket{\psi'}}{\Psi'}{\tmatchwith[n]{M'}{x^A}{N}{y^B}{L}}
\end{gather*}

\vspace{1ex}
\textbf{(c)} Congruence rules
\caption{Reduction rules}
\label{tab:reduction-rules}
\vspace{-2ex}
\end{table}

\begin{definition}
A (small-step) \emph{reduction} $P\redu[p] Q$
consists of quantum closures $P,Q$ and $p\in[0,1]$,
meaning that $P$ reduces to $Q$ with probability $p$.
The valid reductions $P\redu[p] Q$ are given inductively by
the \emph{reduction rules} shown in Table~\ref{tab:reduction-rules}.
In the rules, $V$ and $W$ refer to values.
\auxproof{The rules in (a) and (b) are base cases respectively for classical
control and quantum data. The rules in (c) are inductive steps,
which reduce terms in a call-by-value strategy.}%
The `quantum data' rules (b)
correspond to the three basic operations
explained in \S\ref{sec:quantum-computation}.
In the rule \ruleref{redu:U},
$\ket{\psi'}$ is the state obtained by applying
the $2^k\times 2^k$ unitary matrix $U$
to the $k$ qubits of the position $\Psi(x_1),\dotsc,\Psi(x_k)$ in $\ket{\psi}$.
\auxproof{Specifically,
$\ket{\psi'}=\permop{\sigma}(U\otimes\Iop_{m-k}) \permop{\sigma}^{-1}\ket{\psi}$,
where $\permop{\sigma}\colon(\C^2)^{\otimes m}\to(\C^2)^{\otimes m}$
is an operator satisfying
$\permop{\sigma}\ket{\varphi_1}\dotsb\ket{\varphi_m}
=\ket{\varphi_{\sigma(1)}}\dotsb\ket{\varphi_{\sigma(m)}}$
for a permutation $\sigma$
of $\{1,\dotsc,m\}$ with $\sigma(i)=\Psi(x_i)$ for $i\le k$.\todo{introduce $\Iop$}}%
In the rule \ruleref{redu:meas0}, $p_0$ is the probability that
we obtain $0$ (`negative' in terms of \S\ref{sec:quantum-computation})
by measuring the $i$-th qubit of $\ket{\psi}$;
and $\ket{\psi_0}$ is the state after that.
The rule \ruleref{redu:meas1} is similar.
\auxproof{In the rules \ruleref{redu:meas0}
and \ruleref{redu:meas1},
for each $b\in\{0,1\}$,
$p_b$ is the probability that we observe $b$
by measuring the $i$-th qubit of $\ket{\psi}$ by basis $\{\ket{0},\ket{1}\}$;
$\ket{\psi_b}$ is the state after we observe $b$.
Specifically:
$p_b=
\braket{\psi}{P^i_b}{\psi}$ and
$\ket{\psi_b}
=P^i_b \ket{\psi}/\sqrt{p_b}$,
using the projection
$P^i_b=\Iop_{i-1}
\otimes\ket{b}\bra{b}\otimes
\Iop_{m-i}$.}%
In the rule \ruleref{redu:new0},
we denote by $\tilde{\tzero}$
any term of the form $\tinl[n][\oc^k\unittype,\oc^h\unittype]{\unitterm^{n+k}}$
(cf.\ Notation~\ref{notation:sumtype}).
The term $\tilde{\tone}$ in \ruleref{redu:new1} is similar.
\end{definition}

Reduction satisfies the following properties.

\begin{lemma}[Subject reduction]
If $P:A$ and $P\redu[p] Q$, then $Q:A$.
\qed
\end{lemma}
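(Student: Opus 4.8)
The plan is to prove the statement by rule induction on the derivation of the reduction $P \redu[p] Q$, following the structure of Table~\ref{tab:reduction-rules}. For each base case in groups (a) and (b) I would show that if the left-hand closure is well-typed of type $A$, then so is the right-hand closure; for the congruence rules in group (c) I would apply the induction hypothesis to the subterm being reduced and then rebuild the typing derivation around it. Throughout, a well-typed closure $\qclos{\ket{\psi}}{\ket{x_1\dotso x_m}}{M}:A$ means precisely that $x_1:\tqbit,\dotsc,x_m:\tqbit \yields M:A$ is valid, so I am really tracking how the typing judgement of the term evolves, with the qubit context $x_1:\tqbit,\dotsc,x_m:\tqbit$ changing only in the $\tnew$ rules (where a fresh $y:\tqbit$ is appended) and staying fixed everywhere else.

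First I would handle the classical-control reductions \ruleref{redu:beta.limp}, \ruleref{redu:beta.tens}, \ruleref{redu:beta.sum1}, \ruleref{redu:beta.sum2}. These are the $\beta$-redexes, and each reduces to a term obtained by substituting a value into the body of an abstraction, a pair eliminator, or a match branch. Here the key tool is the Substitution Lemma (Lemma~\ref{lem:subst}): I would invert the typing derivation of the redex (e.g.\ for \ruleref{redu:beta.limp}, the derivation of $(\tlam[0]{x^A}{M})V:A'$ must end in \ruleref{limp.E} applied to a judgement $\oc\Delta,\Gamma_1 \yields \tlam[0]{x^A}{M}:A\limp A'$ and $\oc\Delta,\Gamma_2 \yields V:A$, and the former must itself come from \ruleref{limp.I1} giving $\oc\Delta,\Gamma_1,x:A \yields M:A'$), and then apply Lemma~\ref{lem:subst} to conclude $\oc\Delta,\Gamma_1,\Gamma_2 \yields \tsubst{M}{x}{V}:A'$. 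The tensor and sum cases are analogous, using \ruleref{tens.E}/\ruleref{tens.I} and \ruleref{sum.E}/\ruleref{sum.I1}, \ruleref{sum.I2}, with the two-variable version of substitution for \ruleref{redu:beta.tens}.

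Next I would treat the quantum-data reductions in group (b). These do not alter the term's type in any deep way: \ruleref{redu:U} changes only the state $\ket{\psi}$, leaving the term $\ttup{x_1^{\tqbit},\dotsc,x_k^{\tqbit}}^0$ and its type $\tqbit^{\otimes k}$ untouched, so type preservation is immediate; \ruleref{redu:meas0} and \ruleref{redu:meas1} reduce a term of type $\oc^n\tbit$ to $\tzero^n$ respectively $\tone^n$, which by Notation~\ref{notation:sumtype} are $\tinl$/$\tinr$ of $\unitterm^n$ and are readily typed at $\oc^n\tbit$; and \ruleref{redu:new0}, \ruleref{redu:new1} append a fresh $y:\tqbit$ to the context and return $y^{\tqbit}$ of type $\tqbit$, matching the codomain $\tqbit$ of $A_{\tnew}$. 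For the congruence rules (c), the argument is uniform: the typing derivation of the compound term factors through a sub-derivation for the reducing subterm, I invoke the induction hypothesis to retype that subterm at the same type (and possibly extended qubit context), and then reassemble the surrounding rule instance; the only bookkeeping is checking that the side conditions on contexts (the $\oc\Delta$, $\Gamma_1$, $\Gamma_2$ splits and the freshness conditions) are undisturbed, which they are since the fresh qubit variable introduced by $\tnew$ can always be incorporated into the shared $\oc\Delta$ or a linear context on the reducing branch.

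The main obstacle, I expect, is the bookkeeping around subtyping and the indexed-term annotations rather than any conceptual difficulty. Because of the subtyping rules in Table~\ref{tab:typing-rules}(a) and the subtlety flagged after Lemma~\ref{lem:subst} — that substituting $V=z^A$ for $x^{A'}$ with $A\subtype A'$ produces $z^{A'}$ rather than $z^A$ — the inversion steps for the $\beta$-rules are not purely syntactic: a derivation of an application need not end in the "expected" rule on the nose, since \ruleref{ax1} and \ruleref{ax2} may interpose subtyping coercions. Consequently the delicate part is proving a suitably strong \emph{generation} (inversion) lemma that lets me read off, from a valid typing of each redex, the typing of its immediate constituents up to the relevant subtyping relation, and then checking that Lemma~\ref{lem:subst} applies with exactly the types dictated by the careful definition of $\tsubst{M}{x}{V}$. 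Once that inversion infrastructure is in place, every individual case reduces to a short, mechanical reassembly of a derivation.
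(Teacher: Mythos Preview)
Your proposal is correct and takes essentially the same approach as the paper, which simply records the proof as ``by induction on the derivation of the reduction $P\redu[p] Q$''. Your elaboration of the base cases via Lemma~\ref{lem:subst} and inversion, the direct handling of the quantum-data rules, and the congruence cases via the induction hypothesis is exactly what that one-line proof unpacks to.
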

\auxproof{\begin{proof}
By induction on the derivation of the reduction $P\redu[p] Q$.
\end{proof}}%

\begin{lemma}[Progress]
\label{lem:progress}
Let $P:A$ be a well-typed quantum closure.
Then either $P$ is a value closure,
or there exists a quantum closure $Q$ such that $P\redu[p] Q$.
In the latter case,
there are at most two (up to $\alpha$-equivalence) single-step reductions from $P$,
and the total probability of all the single-step reductions from $P$ is $1$.
\qed
\end{lemma}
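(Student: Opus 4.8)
The plan is to prove all three assertions simultaneously by structural induction on $M$, using inversion of the typing rules of Table~\ref{tab:typing-rules}(b). Since $P$ is a value closure exactly when $M$ is a value, the dichotomy to establish is ``$M$ is a value, or $M$ admits a reduction''. The value grammar pins down a unique evaluation position in any non-value $M$, so there is never a choice of which subterm is active; this is what will give ``at most two'' reductions. The one auxiliary input I expect to need is a \emph{canonical forms} lemma obtained by inverting the value typing rules: in the present setting, where every context variable has type $\tqbit$, a value of type $\tqbit$ is a variable $x_i^\tqbit$; a value of type $\tqbit^{\otimes k}$ is a tuple $\ttup{x_{i_1}^\tqbit,\dotsc,x_{i_k}^\tqbit}^0$ of distinct qubit variables; a value of type $A'$ with $A'\subtype\oc^n\tbit$ has the form $\tilde\tzero$ or $\tilde\tone$; a value of type $\oc^n(A\otimes B)$ is a pair $\ttup{V,W}^n$; a value of type $\oc^n(A\oplus B)$ is an injection $\tinl[n][A,B]{V}$ or $\tinr[n][A,B]{W}$; and a value of a bang-free function type $A\limp B$ is one of $\tlam[0]{x^A}{N}$, $\tnew$, $\tmeas$, or $U$. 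Throughout I use that weakening is admissible -- the axioms \ruleref{ax1} and \ruleref{ax2} already carry an arbitrary extra context $\Delta$ -- so that an active subterm of $M$ may be retyped in the full context $x_1:\tqbit,\dotsc,x_m:\tqbit$ and paired with $\ket\psi$ and $\Psi$ into a well-typed sub-closure on which to invoke the induction hypothesis.

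If $M$ is a value, $P$ is a value closure and we are done; otherwise I follow the call-by-value order encoded by the congruence rules of Table~\ref{tab:reduction-rules}(c). When the active subterm is not yet a value -- e.g.\ $M=\ttup{M_1,M_2}^n$ with $M_1$ a non-value, or with $M_1$ a value and $M_2$ a non-value, and likewise for $\tinl$, $\tinr$, an application $M_1M_2$, a pair-elimination $\tletin{\ttup{x^A,y^B}^n}{N}{M'}$ and a match $\tmatchwith[n]{L}{x^A}{M'}{y^B}{N'}$ -- I apply the induction hypothesis to the sub-closure on that subterm and lift each of its reductions through the unique applicable congruence rule, which transmits the number of reductions and their probabilities unchanged. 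The genuinely new redexes appear once the active subterm is a value: for an application $VW$, canonical forms on $V$ give $V=\tlam[0]{x^A}{N}$ (fire \ruleref{redu:beta.limp}), $V=\tnew$ with $W$ of the form $\tilde\tzero$ or $\tilde\tone$ (fire \ruleref{redu:new0} or \ruleref{redu:new1}), $V=U$ with $W=\ttup{x_{i_1},\dotsc,x_{i_k}}^0$ (fire \ruleref{redu:U}), or $V=\tmeas$ with $W=x_i^\tqbit$ (fire \emph{both} \ruleref{redu:meas0} and \ruleref{redu:meas1}); for a $\tletin{\ttup{x^A,y^B}^n}{N}{M'}$ with $N$ a value, canonical forms give $N=\ttup{V,W}^n$ and fire \ruleref{redu:beta.tens}; and for a $\tmatchwith[n]{L}{x^A}{M'}{y^B}{N'}$ with $L$ a value, canonical forms give $L=\tinl[n][A,B]{V}$ or $L=\tinr[n][A,B]{W}$ and fire \ruleref{redu:beta.sum1} or \ruleref{redu:beta.sum2}.

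Reading off the two quantitative claims from this analysis is then routine: every rule except measurement is deterministic with probability $1$, so the number of one-step reductions from $P$ is at most two and equals two exactly when the active redex is a measurement, where the two probabilities $p_0$ and $p_1$ sum to $1$ because $\ket\psi$ is normalised and $\{\ket0,\ket1\}$ is an orthonormal basis of the measured qubit's tensor factor; the congruence cases inherit these facts from the induction hypothesis. I expect the real work to be in the canonical forms lemma rather than this bookkeeping, because the subtyping allowed by \ruleref{ax1} and \ruleref{ax2} lets a value carry an annotation strictly above its true type, so the inversion must track $\subtype$ with care -- for instance to confirm that a value of function type is one of the four listed forms, that the argument of $\tnew$ really reduces to a $\tilde\tzero$/$\tilde\tone$, and that the $\oc$-indices on pairs and injections match those demanded by the eliminators so that a genuine $\beta$-redex is formed. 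The only point where linearity is essential is the unitary case: distinctness of $x_{i_1},\dotsc,x_{i_k}$, needed for \ruleref{redu:U} to apply $U$ to $k$ separate qubits, follows from affineness, since the disjoint contexts in \ruleref{tens.I} forbid any qubit from being reused.
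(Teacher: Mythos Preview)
Your approach is correct and is the standard one for progress lemmas. The paper itself gives no proof at all---the statement ends with a bare \qed, relying on the corresponding result in the Selinger--Valiron papers it adapts. So there is no ``paper's proof'' to compare against; what you have written is exactly the kind of argument one would supply.

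Two small points worth tightening in a full write-up. First, your canonical-forms clause for the constants should take the annotation into account: for instance $\tmeas$ can be annotated $\tmeas^{\oc^j\tqbit\limp\oc^m\tbit}$ with $j\geq 1$, and $U$ can be annotated $U^{C\limp\tqbit^{\otimes k}}$ with $C$ a strict subtype of $\tqbit^{\otimes k}$ carrying extra $\oc$'s; in those cases the base reduction rules \ruleref{redu:meas0}/\ruleref{redu:meas1} and \ruleref{redu:U} do not literally match. This is harmless, because in a pure-$\tqbit$ context there are \emph{no} values of type $\oc^j\tqbit$ (for $j\geq 1$) or of any $C$ containing a banged qubit, so the argument $W$ is necessarily a non-value and you fall back on the induction hypothesis via the congruence rule. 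You already flag this kind of issue in your final paragraph, but it is worth saying explicitly that ``no values exist of the offending type'' is the mechanism that closes these sub-cases. Second, for ``at most two'' note that when one measurement outcome has probability zero the post-measurement state is undefined, so formally there may be only one reduction in that case; either way the bound and the total-probability claim still hold.
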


The next definitions follow~\cite{SelingerV2006,SelingerV2008a}.

\begin{definition}
We define the \emph{small-step reduction probability}
$\prob(P,Q)\in[0,1]$ for well-typed quantum closures
$P,Q$ by: $\prob(P,Q)=p$ if $P\to_p Q$;
$\prob(V,V)=1$ if $V$ is a value closure;
$\prob(P,Q)=0$ otherwise.
Lemma~\ref{lem:progress} guarantees that $\prob$
is a probabilistic system in a suitable sense.
For a well-typed quantum closure $P$
and a well-typed value closure $Z$,
the \emph{big-step reduction probability} $\Prob(P,Z)\in[0,1]$
is defined by
$\Prob(P,Z)=\lim_{n\to\infty}\prob^n(P,Z)$,
where
$\prob^1(P,Z)=\prob(P,Z)$
and $\prob^{n+1}(P,Z)=
\sum_{Q}\prob(P,Q)
\prob^{n}(Q,Z)$.
\end{definition}

\begin{definition}
For each $b\in\{\tzero^0,\tone^0\}$,
we define $\bigredu{P}{b}=\sum_{Z\in U_b}\Prob(P,Z)$,
where $U_b$ is the set of well-typed quantum closures
of the form $\qclos{\ket{\psi}}{\Psi}{b}$.
\end{definition}

We will use a strong normalisation result.
The proof is similar to~\cite[Lemma~33]{PaganiSV2014}.

\begin{lemma}[Strong normalisation]
\label{lem:str-norm}
Let $P:A$ be a well-typed quantum closure.
Then there is no infinite sequence of reductions
$P\redu[p_1] P_1\redu[p_2]P_2\redu[p_3]\dotsb$.
\end{lemma}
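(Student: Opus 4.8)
The plan is to prove strong normalisation by the reducibility method of Tait and Girard, adapted to the call-by-value, probabilistic, affine setting. First I would record two preliminary observations. By the Progress lemma (Lemma~\ref{lem:progress}) each closure has at most two one-step reducts, so the reduction relation is finitely branching; hence, by König's lemma, ``no infinite reduction sequence from $P$'' is equivalent to well-foundedness of the reduction tree rooted at $P$, and I may freely argue by induction along $\redu{}$. Writing $\CSN$ for the set of such strongly normalising well-typed closures, I would next note that the quantum-data rules of Table~\ref{tab:reduction-rules}(b) each strictly decrease a crude syntactic measure: the rule \ruleref{redu:U} erases a constant, and \ruleref{redu:meas0}, \ruleref{redu:meas1}, \ruleref{redu:new0}, \ruleref{redu:new1} replace a constant applied to a value by a strictly smaller term, while the underlying state $\ket{\psi}$ is inert for termination (it only weights the branching probabilities and never decides whether a reduct exists). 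Thus the only reductions that can threaten $\CSN$ are the classical-control rules \ruleref{redu:beta.limp}, \ruleref{redu:beta.tens}, \ruleref{redu:beta.sum1}, \ruleref{redu:beta.sum2}, which are exactly the $\beta$/\texttt{let}/\texttt{match} reductions of a simply typed call-by-value calculus.

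Next I would define the reducibility predicate $\RED_A$ on closures by induction on the type $A$. On the base types I set $\RED_{\tqbit}=\RED_{\unittype}=\CSN$. A closure is in $\RED_{A\limp B}$ iff it lies in $\CSN$ and, for every abstraction $\tlam[0]{x^A}{M'}$ it reduces to and every reducible value $V$ of type $A$, the closure with body $\tsubst{M'}{x}{V}$ lies in $\RED_B$. At $A\otimes B$ (resp.\ $A\oplus B$) a closure is reducible iff it is in $\CSN$ and every value-closure it reduces to is a pair $\ttup{V,W}^n$ with both components reducible (resp.\ an injection $\tinl[n][A,B]{V}$ or $\tinr[n][A,B]{W}$ with reducible argument). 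Since at the level of reduction the exponential is pure bookkeeping, I put $\RED_{\oc A}=\RED_A$. The point of these definitions is that each $\RED_A$ is a reducibility candidate: I would verify the three standard closure properties, namely (CR1) $\RED_A\subseteq\CSN$, (CR2) $\RED_A$ is closed under $\redu{}$, and (CR3) a closure in $\CSN$ all of whose one-step reducts lie in $\RED_A$ is itself in $\RED_A$, where (CR3) must quantify over both measurement branches, which is exactly where finite branching is used.

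With the candidates in place, the heart of the argument is the fundamental lemma: if $\oc\Delta,\Gamma\yields M:A$ is derivable and we substitute for the free variables reducible value-closures of the appropriate types (duplicating those of $\oc$-type as the context permits), then the resulting closure lies in $\RED_A$. This I would prove by induction on the typing derivation, one case per rule of Table~\ref{tab:typing-rules}(b). The constants are handled directly from their reduction rules: $\tnew$, $\tmeas$ and $U$ each take a reducible value to a closure that reduces in one step into an already-reducible closure, so (CR3) applies. The introduction rules for $\limp$, $\otimes$ and $\oplus$ follow from the definition of the corresponding $\RED$ clause together with (CR3); the elimination rules \ruleref{limp.E}, \ruleref{tens.E}, \ruleref{sum.E} follow by unfolding the definition at the reducible value-closures produced by the sub-derivations and invoking the Substitution lemma (Lemma~\ref{lem:subst}). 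Finally, taking the identity value-substitution on a closed closure $P:A$, whose free variables are all qubit variables and hence value-closures already in $\CSN$ and so trivially reducible, the fundamental lemma gives $P\in\RED_A\subseteq\CSN$, which is the assertion.

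The main obstacle is organising the reducibility predicate so that three features coexist cleanly. First, the call-by-value discipline forces reducibility to be driven by the \emph{values} a closure reduces to, and the substitution it uses is the subtlety-laden value substitution of \cite[\S2.5]{SelingerV2008} with its subtyping coercions, so the arrow clause and the \ruleref{limp.E} case must respect that coercion exactly as in Lemma~\ref{lem:subst}. Second, the probabilistic branching means strong normalisation is a property of the whole binary reduction tree, so $\CSN$ and property (CR3) must range over \emph{both} outcomes of a measurement rather than a single successor. Third, the affine context splitting $\oc\Delta,\Gamma_1,\Gamma_2$ in the binary rules requires the reducible value-substitution to be partitioned and the duplicable $\oc\Delta$-part shared, which is precisely where one must check that reducibility is stable under the exponential; setting $\RED_{\oc A}=\RED_A$ keeps this painless, and the genuinely quantum content, the state $\ket{\psi}$, never enters, being inert for termination. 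These are bookkeeping hurdles rather than conceptual ones, and the overall skeleton follows \cite[Lemma~33]{PaganiSV2014}.
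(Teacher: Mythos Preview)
Your plan is sound but takes a different route from the paper. The paper does not run a reducibility argument on quantum closures; instead it first observes that the quantum state $\ket{\psi}$ is inert for termination and projects the reduction relation to a non-deterministic rewrite on \emph{terms}, replaces the free qubit variables by a fresh constant $c^{\tqbit}$, and then defines a translation $\transl{(-)}$ into an ordinary simply-typed lambda calculus with products, sums, unit and uninterpreted constants $\tnew,\tmeas,U,c^{\tqbit}$, erasing the $\oc$ modality and encoding $\tletin{\ttup{x,y}}{N}{M}$ as $(\tlam{z}{(\tlam{x}{\tlam{y}{\transl{M}}})\tfst{z}\tsnd{z}})\transl{N}$. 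Strong normalisation of the target calculus is then quoted as standard, and since each source step induces at least one target step the result pulls back. Your direct candidates-on-closures argument is more self-contained and makes the Tait--Girard machinery explicit, whereas the translation is shorter and sidesteps exactly the bookkeeping you list as obstacles: once $\oc$ is erased there is no duplicable context to share, the subtyping coercions in the indexed substitution disappear, and treating the quantum rules as non-deterministic term rewrites removes the need for (CR3) to quantify over both measurement branches. One small point in your formulation: the arrow clause of $\RED_{A\limp B}$ mentions only abstractions $\tlam[0]{x^A}{M'}$, so constants of arrow type must be handled by a side case in the \ruleref{limp.E} step, as you note; the paper's translation avoids this asymmetry by leaving $\tnew$, $\tmeas$, $U$ as inert constants in the target.
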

\begin{proof}[Proof (Sketch)]
Clearly it suffices to prove
the strong normalisation for
the underlying (non-deterministic) reductions $M\redu N$ on
terms. We add a constant $c^{\tqbit}$ to replace
free variables $x^{\tqbit}$.
We then define a translation $\transl{(-)}$
from the quantum lambda calculus (with $c^{\tqbit}$)
to a simply-typed lambda calculus
with product, unit, sum types and
constants $\tnew,\tmeas,U,c^{\tqbit}$.
The translation $\transl{(-)}$ forgets
the $\bang$ modality, and translates the let constructor via
$\transl{(\tletin{\ttup{x,y}}{N}{M})}=
(\tlam{z}{(\tlam{x}{\tlam{y}}{\transl{M}}})
\tfst{z}\tsnd{z})\transl{N}$.
We can prove the strong normalisation for
the simply-typed lambda calculus via standard techniques.
\end{proof}

\section{Denotational Semantics Using von Neumann Algebras}
\label{sec:denotational-semantics}

\subsection{Facts about von Neumann Algebras}

We need the following notation and facts
concerning von Neumann algebras.
Those facts for which we could not find 
proof
in the literature will be discussed in  the next section.

Let $(\vNAMIU,\otimes,\C)$ be the symmetric monoidal category (SMC) of von Neumann algebras
and normal MIU-maps~\cite[Prop.~7.2]{Kornell2012},
and $(\vNACPsU,\otimes,\C)$ the SMC of von Neumann algebras and
normal CPsU-maps
(where~$\otimes$ is the spatial tensor product)~\cite{cho2014}.
Note that the unit $\C$ is initial in $\vNAMIU$ (but not in $\vNACPsU$).
Both categories have products given by direct sums $\oplus$
(with the supremum norm~\cite[Def.~3.4]{Takesaki1}).
To interpret the quantum lambda calculus,
we will use the following pair of (lax) symmetric monoidal adjunctions,
\begin{equation}
\label{eq:monoidal-adjunctions}
\xymatrix@C=3pc{
(\Set^{\op},\times,1)
\ar@/_2ex/[r]_-{\linf}
\ar@{}[r]|-{\bot}
&
\ar@/_2ex/[l]_-{\nsp}
(\vNAMIU,\otimes,\C)
\ar@/_2ex/@{^{ (}->}[r]_{\incfun}
\ar@{}[r]|{\bot}
&
(\vNACPsU,\otimes,\C)
\ar@/_2ex/[l]_{\qcomp}
}
\end{equation}
where $\Set^{\op}$ is the opposite of the category
$\Set$ of sets and functions, considered as a SMC
via cartesian products (i.e.\ coproducts in $\Set^{\op}$).
The functor $\incfun$ is the inclusion functor;
the other functors are explained in the next section.
Note that $\incfun$ is strict symmetric monoidal
and strictly preserves products.
The following facts are important:
\begin{itemize}
\item
$\vNAMIU$ is a
co-closed SMC~\cite{Kornell2012}.
This means the endofunctor $(-)\otimes\scrA$ on $\vNAMIU$
has a left adjoint $\frexp{\scrA}{(-)}$.
The von Neumann algebra $\frexp{\scrA}{\scrB}$ is called
the \emph{free exponential} in~\cite{Kornell2012}.
\item
The counit of the adjunction $\nsp\dashv \linf$ is an isomorphism
(see Corollary~\ref{cor:first-adjunction-unit-iso}).
\item
The functors $\nsp,\linf$
and the adjunction $\nsp\dashv \linf$ are \emph{strong} monoidal
(see Corollary~\ref{cor:first-adjunction-monoidal}).
\item
Moreover, the functors $\nsp$ and $\linf$ preserves products
(see Cor.~\ref{cor:sp-preserves-coproducts-and-tensors} and
Lem.~\ref{lem:ell-tensor}).
\end{itemize}

The tensor product $\otimes$ distributes over products $\oplus$ in $\vNAMIU$,
as $\scrA\otimes(\scrB\oplus \scrC)\cong
(\scrA\otimes\scrB)\oplus(\scrA\otimes \scrC)$,
since $\scrA\otimes(-)$ is a right adjoint
and thus preserves products.
We denote the canonical isomorphism
by $\theta_{\scrA,\scrB,\scrC}\colon(\scrA\otimes\scrB)\oplus(\scrA\otimes \scrC)\to\scrA\otimes(\scrB\oplus \scrC)$.

\begin{wrapfigure}{r}{.45\columnwidth}
\centering
\vspace{-.5em}
\alwaysDoubleLine
\AxiomC{$
\scrB\overset{f}{\longto} \scrC\otimes\scrA
$ in $\vNACPsU$}
\UnaryInfC{$
\qcomp\scrB\longto \scrC\otimes\scrA
$ in $\vNAMIU$}
\UnaryInfC{$
\frexp{\scrA}{(\qcomp\scrB)}=
\scrA\limp\scrB\overset{g}{\longto} \scrC
$ in $\vNAMIU$}
\DisplayProof
\vspace{-.5em}
\end{wrapfigure}
We define a `Kleisli co-exponential' $\limp$ by
$\scrA\limp\scrB\coloneqq\frexp{\scrA}{(\qcomp\scrB)}$.
We have the bijective correspondence
as shown on the right.
We write $\Lambda f=g$ for the 
MIU-map $\scrA\limp\scrB\to \scrC$ corresponding to $f$.
We also write
$\varepsilon_{\scrA,\scrB}=\Lambda^{-1}\id\colon\scrB\to (\scrA\limp\scrB)\otimes\scrA$
for the co-evaluation map, i.e.\ the CPsU-map corresponding to
$\id\colon \scrA\limp\scrB\to\scrA\limp\scrB$.
Then $(\Lambda f\otimes\id)\circ\varepsilon=f$
by the naturality of the bijective correspondence.

We write $\lem=\linf\circ \nsp$ for the strong symmetric
monoidal monad on $\vNAMIU$ induced by
the left-hand adjunction of \eqref{eq:monoidal-adjunctions}.
The unit and multiplication are denoted by $\eta$ and $\mu$ respectively.
From the fact that the counit of $\nsp\dashv \linf$ is an isomorphism,
it easily follows that $\lem$ is an idempotent monad, i.e.\ the multiplication
$\mu\colon \lem^2\To\lem$ is an isomorphism.
Note also that $\lem$ preserves products.
We denote the structure isomorphisms by:
$d^{\lem}_{\C}\colon\C\to\lem\C$;
$d^{\lem}_{\scrA,\scrB}\colon\lem\scrA\otimes\lem\scrB\to
\lem(\scrA\otimes\scrB)$; and
$e^{\lem}_{\scrA,\scrB}\colon\lem\scrA\oplus\lem\scrB
\to\lem(\scrA\oplus\scrB)$.

Because the adjunction $\nsp\dashv \linf$ satisfies a dual condition
to a linear-non-linear model~\cite{Benton1995} (see also~\cite{Mellies2002,Schalk2004}),
the monad $\lem$ has a property which is dual to a \emph{linear exponential comonad}.
Thus each object of the form $\lem \scrA$ is equipped with
a map $\monmap_{\scrA}\colon\lem\scrA\otimes\lem\scrA\to\lem\scrA$
which, with a unique map $\invbang_{\lem\scrA}\colon\C \to\lem\scrA$,
makes $\lem \scrA$ into a $\otimes$-monoid in $\vNAMIU$.

\begin{nremark}
\label{rem:concrete-model-of-qlc}
One can summarise these facts by saying that
the opposite $\vNAMIU^{\op}$ is a
\emph{(weak) linear category for duplication}~\cite{SelingerV2008,SelingerV2009};
and moreover $\vNAMIU^{\op}$ is
a \emph{concrete model of the quantum lambda calculus}
defined by Selinger and Valiron~\cite[\S1.6.8]{SelingerV2009}.
Although they gave the definition of concrete models of the quantum lambda calculus,
results on them (e.g.\ how to interpret the quantum lambda calculus;
adequacy of models) have never been given.
In the remainder of the section, therefore, we will give
the interpretation of the language in von Neumann algebras
concretely, and then prove its adequacy.
\end{nremark}

\subsection{The Interpretation of Types and Typing Judgements}

We interpret types as von Neumann algebras,
i.e.\ objects in $\vNAMIU$ / $\vNACPsU$, as follows.
\begin{align*}
\sem{\tqbit}&=\Mat_2
&
\sem{\unittype}&=\C
&
\sem{\oc A}&=\lem\sem{A}
\\
\sem{A\limp B}&=\sem{A}\limp\sem{B}
&
\sem{A\otimes B}&=\sem{A}\otimes\sem{B}
&
\sem{A\oplus B}&=\sem{A}\oplus\sem{B}
\end{align*}

\begin{remark}
One  familiar
with  Fock space
might be surprised to realise that
$\sem{\oc \tqbit} = \{0\}$,
because
there is no normal MIU-map $\varphi\colon \Mat_2\to\mathbb{C}$.
The intuition here may be 
that no part of a qubit can be duplicated,
and so the assumption of a duplicable qubit
amounts to nothing.
This is also the interpretation
of $\oc\tqbit$ intended by
Selinger and Valiron, see~\cite[\S5]{SelingerV2008}.
\end{remark}
\begin{remark}
The interpretation
of a function type~$A\limp B$
is obtained by abstract means,
and at this point we know very little about
it. (Might it be
as intangible as an ultrafilter?)
However,
applying~$\oc$ makes the function type almost trivial: 
after \S4,
it will be clear that
\begin{equation*}
\sem{\oc(A\limp B)} 
\ =\  \linf(\{\, f\colon \sem{B} \stackrel{\mathrm{CPsU}}{
	\longrightarrow}\sem{A} \,\}).
\end{equation*}
\end{remark}

The interpretation of the subtyping relation $A\subtype B$
is a `canonical' map $\sem{B}\to\sem{A}$ in $\vNAMIU$,
which exists uniquely by a coherence property for
an idempotent (co)monad; see~\cite[\S8.3.2]{Valiron2008PhD} for details.
For instance, we have $\sem{A\limp\oc B\subtype \oc A\limp \oc\oc B}=
\eta_{\sem{A}}\limp\mu_{\sem{B}}$.

Contexts $\Delta=x_1:A_1,\dotsc,x_n:A_n$
are interpreted as $\sem{\Delta}=\sem{A_1}\otimes\dotsb\otimes\sem{A_n}$.
We shall treat the monoidal structure $(\otimes,\C)$
as if it were strict monoidal,
which is justified by the coherence theorem for monoidal categories.

The interpretations
$\sem{\tnew}$, $\sem{\tmeas}$ and $\sem{U}$ of constants
are defined using the maps $f_{\tnew}\colon \Mat_2\to \C^2$,
$f_{\tmeas}\colon\C^2\to \Mat_2$
and $f_{U}\colon\Mat_2^{\otimes k}
\to \Mat_2^{\otimes k}$
given in \S\ref{sec:quantum-computation},
as follows.
\begin{align*}
\sem{\tnew}&=
\eta_{\C}^{-1}
\circ\lem\Lambda f_{\tnew}
\colon
\sem{\oc A_{\tnew}}
=\lem(\C^2\limp \Mat_2)
\longto \C
\\
\sem{\tmeas}&=
\eta_{\C}^{-1}\circ
\lem\Lambda (f_{\tmeas}\circ \eta_{\C^2}^{-1})
\colon
\sem{\oc A_{\tmeas}}=
\lem(\Mat_2\limp\lem\C^2) \longto \C
\\
\sem{U}&=
\eta_{\C}^{-1}
\circ
\lem\Lambda f_U
\colon
\sem{\oc A_U}=
\lem(\Mat_2^{\otimes k}\limp\Mat_2^{\otimes k}) \longto \C
\end{align*}

We now give the interpretation $\sem{\Delta\yields M:A}$ of a typing judgement
as a map $\sem{A}\to\sem{\Delta}$ in $\vNACPsU$.
The definition is similar to~\cite{HasuoH2014}.
First we define a normal CPsU-map
$\sem{\Delta\yields M:A}^{\FV}\colon \sem{A}\to\sem{\Delta|_M}$
(recall that $\Delta|_M=\Delta\cap\FV(M)$)
by induction on the derivation of the typing judgement
as shown in Table~\ref{tab:denotation-typing-judgements}.
We then define $\sem{\Delta\yields M:A}
\coloneqq
(\sem{A}
\xrightarrow{\sem{\Delta\yields M:A}^{\FV}}
\sem{\Delta|_M}
\xrightarrow{\iota}
\sem{\Delta})$.
Here and in Table~\ref{tab:denotation-typing-judgements},
we use the following notations
(often suppressing subscripts).
Let $\gamma_{\scrA,\scrB}\colon\scrA\otimes\scrB\to\scrB\otimes\scrA$
denote the symmetry isomorphism.
For contexts $\Delta\subseteq \Gamma$,
we write $\iota\colon\sem{\Delta}\to\sem{\Gamma}$ for the `injection' map
defined via unique MIU-maps $\invbang_{\scrA}\colon\C\to\scrA$.
\auxproof{
For instance:
\[
\iota=
\bigl(\sem{A}\otimes\sem{C}\cong\sem{A}\otimes\C\otimes\sem{C}
\xrightarrow{\id\otimes\invbang\otimes\id}
\sem{A}\otimes\sem{B}\otimes\sem{C}\bigr)
\]}%
For a context $\oc\Delta,\Gamma_1,\Gamma_2$,
we define the map
$\mergmap\colon\sem{\oc\Delta,\Gamma_1}\otimes\sem{\oc\Delta,\Gamma_2}\to\sem{\oc\Delta,\Gamma_1,\Gamma_2}$
via monoid structures $\monmap_{\sem{A}}\colon \sem{\oc A}\otimes\sem{\oc A}\to\sem{\oc A}$
and symmetry maps $\gamma$.
The map $d^{\lem}_{\Delta}\colon\sem{\oc\Delta}\to\lem\sem{\Delta}$
can be defined using
$d^{\lem}_{\scrA,\scrB}\colon\lem \scrA\otimes\lem \scrB\to\lem(\scrA\otimes \scrB)$.
We write $\mu_{\Delta}\colon\sem{\oc\oc\Delta}\to\sem{\oc\Delta}$
for $\mu_{\sem{A_1}}\otimes\dotsb\otimes\mu_{\sem{A_n}}$;
$d^{\lem^n}\colon\lem^n \scrA\otimes\lem^n \scrB\to\lem^n(\scrA\otimes \scrB)$
for $\lem^{n-1}d^{\lem}\circ\dotsb\circ d^{\lem}$;
and $e^{\lem^n}\colon\lem^n \scrA\oplus\lem^n \scrB\to\lem^n(\scrA\oplus \scrB)$
similarly.
Projection maps and tupling for direct sums, products in $\vNACPsU$,
are denoted by $\pi_i\colon \scrA_1\oplus\scrA_2\to \scrA_i$
and $\tup{f,g}\colon \scrA\to \scrB\oplus\scrC$.

\auxproof{Note that $d^{\lem}_{\C}=\eta_{\C}\colon\C\to\lem\C$}%

\begin{table}
\newcommand{\MySkipAmount}{\vskip2.0ex plus.8ex minus.4ex}
\small\centering
  \AxiomC{$
    \sem{\Delta,x:A,y:B,\Gamma\yields M:C}^{\FV}
    =
    \sem{C}
    \xrightarrow{f}
    \sem{(\Delta,x:A,y:B,\Gamma)|_M}
    $}
  \UnaryInfC{$
    \sem{\Delta,y:B,x:A,\Gamma\yields M:C}^{\FV}
    =
    (\id_{\sem{\Delta|_M}}\otimes\gamma\otimes\id_{\sem{\Gamma|_M}})\circ f
    \enspace\text{(if $x,y\in\FV(M)$)};\enspace
    f\enspace\text{(otherwise)}$}
  \DisplayProof
\MySkipAmount
  $\sem{\Delta,x:A\yields x^B:B}^{\FV}
  =
  \sem{B}
  \xrightarrow{\sem{A\subtype B}}
  \sem{A}$
\qquad
  $\sem{\Delta\yields c^B:B}^{\FV}
  =
  \sem{B}
  \xrightarrow{\sem{\oc A_c\subtype B}}
  \sem{\oc A_c}
  \xrightarrow{\sem{c}}
  \C$
\MySkipAmount
  $\sem{\Delta\yields \unitterm^n:\oc^n \unittype}^{\FV}
  =
  \sem{\oc^n\top}
  \xrightarrow{\sem{\oc\top\subtype\oc^n\top}}
  \lem\C
  \xrightarrow{(d^{\lem}_{\C})^{-1}}
  \C$
\MySkipAmount
  \AxiomC{$
    \sem{\Delta,x:A\yields M:B}^{\FV}=
    \sem{B}
    \xrightarrow{f}
    \sem{(\Delta,x:A)|_M}
    $}
  \UnaryInfC{$
    \sem{\Delta\yields \tlam[0]{x^A}{M}:A\limp B}^{\FV}
    =
    \sem{A}\limp\sem{B}
    \xrightarrow{\Lambda f'}
    \sem{\Delta|_M}
    $}
  \DisplayProof
\quad where:
\smash{$\vcenter{\xymatrix@C=.7pc@R=.5pc{
\sem{B}\ar[r]^-{f}
\ar[dr]_-{f'}&
\sem{(\Delta,x:A)|_M}\ar[d]^{\iota}\\
&
\sem{\Delta|_M}\otimes \sem{A}\\
}}$}
\MySkipAmount
  \AxiomC{$
    \sem{\Gamma,\oc\Delta,x:A\yields M:B}^{\FV}=
    \sem{B}\xrightarrow{f}
    \sem{(\Gamma,\oc\Delta,x:A)|_M}
    $}
  \UnaryInfC{$
    \begin{aligned}
      &\sem{\Gamma,\oc\Delta\yields \tlam[n+1]{x^A}{M}:A\limp B}^{\FV}
      =\sem{\oc^{n+1}(A\limp B)}
      \xrightarrow{\sem{\oc(A\limp B)\subtype\oc^{n+1}(A\limp B)}}
      \lem(\sem{A}\limp\sem{B})
      \\
      &\qquad
      \xrightarrow{\lem(\Lambda f')}
      \lem\sem{\oc\Delta|_M}
      \xrightarrow{(d^{\lem})^{-1}}
      \sem{\oc\oc\Delta|_M}
      \xrightarrow{\mu}
      \sem{\oc\Delta|_M}
      =\sem{(\oc\Delta,\Gamma)|_M}
\quad\text{($f'$ defined similarly)}
    \end{aligned}
    $}
  \DisplayProof
\MySkipAmount
  \AxiomC{$
    \begin{aligned}
      \sem{\oc\Delta,\Gamma_1
        \yields M:A\limp B}^{\FV}
      &=\sem{A\limp B}
      \xrightarrow{f}
      \sem{(\oc\Delta,\Gamma_1)|_M}
      \\
      \sem{\oc\Delta,\Gamma_2
        \yields N:A}^{\FV}
      &=\sem{A}
      \xrightarrow{g}
      \sem{(\oc\Delta,\Gamma_2)|_N}
    \end{aligned}
    $}
  \UnaryInfC{$
    \begin{aligned}
      &\sem{\oc\Delta,\Gamma_1,\Gamma_2
        \yields MN:B}^{\FV}
      =
      \sem{B}
      \xrightarrow{\varepsilon}
      \sem{A\limp B}\otimes\sem{A}
      \xrightarrow{f\otimes g}
      \sem{(\oc\Delta,\Gamma_1)|_M}
      \otimes
      \sem{(\oc\Delta,\Gamma_2)|_N}
      \\
      &\qquad\qquad
      \xrightarrow{\iota\otimes\iota}
      \sem{(\oc\Delta,\Gamma_1)|_{MN}}
      \otimes
      \sem{(\oc\Delta,\Gamma_2)|_{MN}}
      \xrightarrow{\splmap}
      \sem{(\oc\Delta,\Gamma_1,\Gamma_2)|_{MN}}
    \end{aligned}
    $}
  \DisplayProof
\MySkipAmount
  \AxiomC{$
    \begin{aligned}
      \sem{\oc\Delta,\Gamma_1
        \yields M:\oc^n A}^{\FV}
      =\sem{\oc^n A}
      \xrightarrow{f}
      \sem{(\oc\Delta,\Gamma_1)|_M}
      \qquad
      \sem{\oc\Delta,\Gamma_2
        \yields N:\oc^n B}^{\FV}
      =\sem{\oc^n B}
      \xrightarrow{g}
      \sem{(\oc\Delta,\Gamma_2)|_N}
    \end{aligned}
    $}
  \UnaryInfC{$
    \begin{aligned}
      &\sem{\oc\Delta,\Gamma_1,\Gamma_2
        \yields \ttup{M,N}^n:\oc^n (A\otimes B)}^{\FV}
      =
      \sem{\oc^n (A\otimes B)}
      \xrightarrow{(d^{\lem^n})^{-1}}
      \sem{\oc^n A}\otimes\sem{\oc^n B}
      \xrightarrow{f\otimes g}
      \\
      &\quad
      \sem{(\oc\Delta,\Gamma_1)|_M}
      \otimes
      \sem{(\oc\Delta,\Gamma_2)|_N}
      \xrightarrow{\iota\otimes\iota}
      \sem{(\oc\Delta,\Gamma_1)|_{\ttup{M,N}}}
      \otimes
      \sem{(\oc\Delta,\Gamma_2)|_{\ttup{M,N}}}
      \xrightarrow{\splmap}
      \sem{(\oc\Delta,\Gamma_1,\Gamma_2)|_{\ttup{M,N}}}
    \end{aligned}
    $}
  \DisplayProof
\MySkipAmount
  \AxiomC{$
    \begin{aligned}
      \sem{\oc\Delta,\Gamma_1,
        x:\oc^n A,y:\oc^n B
        \yields M:C}^{\FV}
      &=\sem{C}
      \xrightarrow{f}
      \sem{(\oc\Delta,\Gamma_1,x:\oc^n A,
        y:\oc^n B)|_M}
      \\
      \sem{\oc\Delta,\Gamma_2
        \yields N:\oc^n (A\otimes B)}^{\FV}
      &=\sem{\oc^n (A\otimes B)}
      \xrightarrow{g}
      \sem{(\oc\Delta,\Gamma_2)|_M}
    \end{aligned}
    $}
  \UnaryInfC{$
    \begin{aligned}
      &\sem{\oc\Delta,\Gamma_1,\Gamma_2
        \yields \tletin{\ttup{x^A,y^B}^n}{N}{M}:C
      }^{\FV}
      =
      \sem{C}
      \xrightarrow{f}
      \sem{(\oc\Delta,\Gamma_1,x:\oc^n A,
        y:\oc^n B)|_M}
      \\
      &\quad
      \xrightarrow{\iota}
      \sem{(\oc\Delta,\Gamma_1)|_{\mathtt{let}\dotso}}
      \otimes
      \sem{\oc^n A}\otimes \sem{\oc^n B}
      \xrightarrow{\id\otimes d^{\lem^n}}
      \sem{(\oc\Delta,\Gamma_1)|_{\mathtt{let}\dotso}}
      \otimes
      \sem{\oc^n (A\otimes B)}
      \xrightarrow{\id\otimes g}
      \\
      &\quad
      \sem{(\oc\Delta,\Gamma_1)|_{\mathtt{let}\dotso}}
      \otimes
      \sem{(\oc\Delta,\Gamma_2)|_N}
      \xrightarrow{\id\otimes\iota}
      \sem{(\oc\Delta,\Gamma_1)|_{\mathtt{let}\dotso}}
      \otimes
      \sem{(\oc\Delta,\Gamma_2)|_{\mathtt{let}\dotso}}
      \xrightarrow{\splmap}
      \sem{(\oc\Delta,\Gamma_1,\Gamma_2)|_{\mathtt{let}\dotso}}
    \end{aligned}
    $}
  \DisplayProof
\MySkipAmount
  \AxiomC{$
    \sem{\Delta\yields M:\oc^n A}^{\FV}=
    \sem{\oc^n A}
    \xrightarrow{f}
    \sem{\Delta|_M}
    $}
  \UnaryInfC{$
    \sem{\Delta\yields
      \tinl[n][A,B]{M}:\oc^n(A\oplus B)}^{\FV}
    =
    \sem{\oc^n(A\oplus B)}
    \xrightarrow{\lem^n\pi_1}
    \sem{\oc^n A}
    \xrightarrow{f}
    \sem{\Delta|_M}
    $}
  \DisplayProof
\MySkipAmount
  \AxiomC{$
    \sem{\Delta\yields N:\oc^n B}^{\FV}=
    \sem{\oc^n B}
    \xrightarrow{g}
    \sem{\Delta|_N}
    $}
  \UnaryInfC{$
    \sem{\Delta\yields
      \tinr[n][A,B]{N}:\oc^n(A\oplus B)}^{\FV}
    =
    \sem{\oc^n(A\oplus B)}
    \xrightarrow{\lem^n\pi_2}
    \sem{\oc^n B}
    \xrightarrow{g}
    \sem{\Delta|_N}
    $}
  \DisplayProof
\MySkipAmount
  \AxiomC{$
    \begin{aligned}
      \sem{\oc\Delta,\Gamma_1,
        x:\oc^n A
        \yields M:C}^{\FV}
      &=\sem{C}
      \xrightarrow{f}
      \sem{(\oc\Delta,\Gamma_1,x:\oc^n A)|_M}
      \\
      \sem{\oc\Delta,\Gamma_1,
        y:\oc^n B
        \yields N:C}^{\FV}
      &=\sem{C}
      \xrightarrow{g}
      \sem{(\oc\Delta,\Gamma_1,y:\oc^n B)|_N}
      \\
      \sem{\oc\Delta,\Gamma_2
        \yields L:\oc^n (A\oplus B)}^{\FV}
      &=\sem{\oc^n (A\oplus B)}
      \xrightarrow{h}
      \sem{(\oc\Delta,\Gamma_2)|_L}
    \end{aligned}
    $}
  \UnaryInfC{$
    \begin{aligned}
      &\sem{\oc\Delta,\Gamma_1,\Gamma_2
        \yields\tmatchwith[n]{L}{x^A}{M}{y^B}{N}:C}^{\FV}
      =
      \\
      &
      \sem{C}
      \xrightarrow{\tup{f,g}}
      \sem{(\oc\Delta,\Gamma_1,x:\oc^n A)|_M}
      \oplus
      \sem{(\oc\Delta,\Gamma_1,y:\oc^n B)|_N}
      \xrightarrow{\iota\oplus\iota}
      \\
      &
      (\sem{(\oc\Delta,\Gamma_1)|_{\mathtt{match}\dotso}}
      \otimes\sem{\oc^n A})
      \oplus
      (\sem{(\oc\Delta,\Gamma_1)|_{\mathtt{match}\dotso}}
      \otimes\sem{\oc^n B})
      \xrightarrow{\theta}
      \sem{(\oc\Delta,\Gamma_1)|_{\mathtt{match}\dotso}}
      \otimes
      (\sem{\oc^n A}\oplus \sem{\oc^n B})
      \\
      &
      \xrightarrow{\id\otimes e^{\lem^n}}
      \sem{(\oc\Delta,\Gamma_1)|_{\mathtt{match}\dotso}}
      \otimes\sem{\oc^n (A\oplus B)}
      \xrightarrow{\id\otimes h}
      \sem{(\oc\Delta,\Gamma_1)|_{\mathtt{match}\dotso}}\otimes
      \sem{(\oc\Delta,\Gamma_2)|_L}
      \\
      &
      \xrightarrow{\id\otimes\iota}
      \sem{(\oc\Delta,\Gamma_1)|_{\mathtt{match}\dotso}}\otimes
      \sem{(\oc\Delta,\Gamma_2)|_{\mathtt{match}\dotso}}
      \xrightarrow{\splmap}
      \sem{(\oc\Delta,\Gamma_1,\Gamma_2)|_{\mathtt{match}\dotso}}
    \end{aligned}
    $}
  \DisplayProof
\MySkipAmount
\caption{Inductive definition of the interpretation of typing judgements}
\label{tab:denotation-typing-judgements}
\end{table}

Note that the interpretation $\sem{\Delta\yields M:A}$ is
defined by induction on typing derivations.
Because we use indexed terms,
it is not hard to prove the following fact
by induction on a typing derivation $\Pi$.

\begin{lemma}
\label{lem:interpretation-well-defd}
Suppose that $\Delta\yields M:A$ is valid
with a derivation $\Pi$, and
so is $\Delta'\yields M:A$ with $\Pi'$.
Then $\sem{\Pi'}^{\FV}=
\sigma\circ\sem{\Pi}^{\FV}$,
where $\sigma\colon\sem{\Delta|_M}
\to\sem{\Delta'|_M}$ is a (unique by coherence) isomorphism
that permutes $\Delta|_M$ to $\Delta'|_M$.
In particular, $\sem{\Delta\yields M:A}$ is well-defined,
not depending on derivations.
\qed
\end{lemma}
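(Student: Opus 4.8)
The plan is to induct on the typing derivation $\Pi$, exploiting the fact that for \emph{indexed} terms the shape of a derivation is almost entirely forced by $M$: apart from interspersed applications of the exchange rule \ruleref{ex}, the last rule of any derivation of $\Delta\yields M:A$ is determined by the outermost constructor of $M$ together with the index $n$ and the type annotation it carries. Thus $\tlam[0]{x^A}{M}$ can only be introduced by \ruleref{limp.I1} and $\tlam[n+1]{x^A}{M}$ only by \ruleref{limp.I2}; an application $MN$ only by \ruleref{limp.E}; and so on for \ruleref{top}, \ruleref{tens.I}, \ruleref{tens.E}, \ruleref{sum.I1}, \ruleref{sum.I2}, \ruleref{sum.E}, \ruleref{ax1} and \ruleref{ax2}. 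I would actually prove the strengthened statement in which $\Delta'$ ranges over \emph{reorderings} of $\Delta$ (the form needed for the induction to pass through exchange and context-splitting); this specialises to $\Delta'=\Delta$ to give the claimed well-definedness. Under this reading each free variable receives the same type in $\Pi$ and $\Pi'$, so $\Delta|_M$ and $\Delta'|_M$ carry the same typed variables and $\sigma$ is genuinely a permutation isomorphism. The only sources of divergence between $\Pi$ and $\Pi'$ are then: (a) applications of \ruleref{ex}; (b) how the ambient context is partitioned into a shared duplicable part $\oc\Delta$ and linear parts $\Gamma_i$ in the multi-premise rules; and (c) context variables not free in $M$.

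First I would dispatch the structural freedoms. For (c), note that $\sem{\Pi}^{\FV}$ factors through $\sem{\Delta|_M}$, so variables outside $\FV(M)$ never enter it at all; they are re-inserted only by the separate map $\iota$ when forming $\sem{\Delta\yields M:A}$. For (a), the exchange clause of Table~\ref{tab:denotation-typing-judgements} post-composes with the symmetry $\gamma$ exactly when both swapped variables are free in $M$ and with the identity otherwise; by naturality of $\gamma$ and coherence for the symmetric monoidal structure $(\otimes,\C)$, an arbitrary sequence of such swaps composes to the unique permutation isomorphism $\sigma\colon\sem{\Delta|_M}\to\sem{\Delta'|_M}$, independently of the order in which the exchanges occur. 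For (b), affineness forces any non-duplicable free variable of a subterm to lie in exactly one linear block, so the partition of the \emph{linear} variables among the premises is fixed by the free-variable sets of the immediate subterms; the remaining latitude concerns only duplicable variables and padding, which are reconciled by $\mergmap$, $d^{\lem}$ and $\iota$ together with the $\FV$-restriction.

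With the structural cases settled, the inductive step for each constructor is routine: apply the induction hypothesis to the immediate subderivations, obtaining equality of their $\FV$-interpretations up to the respective permutations, and then check that the clause of Table~\ref{tab:denotation-typing-judgements} for that constructor is natural in these isomorphisms, i.e.\ that the permutations commute past the structural maps $\varepsilon$, $f\otimes g$, $\theta$, $e^{\lem^n}$, $\lem^n\pi_i$, $\tup{-,-}$, $\splmap$ and the subtyping coercions, assembling into the single $\sigma$ on the conclusion. The coherence property for the idempotent (co)monad cited in \cite[\S8.3.2]{Valiron2008PhD} guarantees that each coercion $\sem{A\subtype B}$ occurring in \ruleref{ax1}, \ruleref{ax2} and the $\unitterm$- and $\lambda$-clauses is itself canonical, contributing no new ambiguity.

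I expect the main obstacle to be the bookkeeping in case (b): verifying that the freedom in distributing duplicable variables between $\oc\Delta$ and the $\Gamma_i$, together with the interaction of the $\mergmap$/$d^{\lem}$ maps and the $\iota$ injections, really does collapse to the single permutation $\sigma$ after restricting to $\FV$. This rests on the coherence of $\lem$ as an idempotent monad for which each $\lem\scrA$ carries a commutative $\otimes$-monoid structure $(\monmap_{\scrA},\invbang_{\lem\scrA})$, so that the associativity, commutativity and unit laws make the precise distribution immaterial; granting the (dual linear-exponential) comonad coherence already invoked for the model, this reduces to a tedious but mechanical diagram chase.
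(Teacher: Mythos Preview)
Your proposal is correct and follows the same approach as the paper: the paper gives no detailed argument, merely noting that ``because we use indexed terms, it is not hard to prove the following fact by induction on a typing derivation~$\Pi$'' and marking the lemma with \qed. Your sketch is a fleshed-out version of precisely this induction, identifying the same sources of non-uniqueness (exchange, context splitting, dummy variables) and dispatching them by coherence; there is nothing to add.
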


Let $\qclos{\ket{\psi}}{\ket{x_1\dotsc x_n}}{M}:A$
be a well-typed quantum closure.
The mapping $A\mapsto \braket{\psi}{A}{\psi}$
defines a normal CPU-map $\braket{\psi}{-}{\psi}\colon\Mat_2^{\otimes m}\to\C$.
The interpretation of the quantum closure is defined by:
\[
\sem{\qclos{\ket{\psi}}{\ket{x_1\dotsc x_n}}{M}:A}
\;\coloneqq\;
\sem{A}
\xrightarrow{\sem{x_1:\tqbit,\dotsc,x_n:\tqbit\yields M:A}}
\Mat_2^{\otimes n}
\xrightarrow{\braket{\psi}{-}{\psi}}
\C
\]

\subsection{Adequacy of the Denotational Semantics}

The next soundness/invariance
for the small-step reduction
is a key result to obtain adequacy.
Note that for normal CPsU-maps $f_1,\dotsc,f_n\colon \scrA\to\scrB$
and $r_i\in[0,1]$ with $\sum_i r_i\le 1$,
the (convex) sum $\sum_i r_if_i$ of maps
is defined in the obvious pointwise manner and is a normal CPsU-map.

\begin{proposition}[Soundness for the small-step reduction]
\label{prop:soundness-small-step}
Let $P:A$ be a well-typed quantum closure.
Then $\sem{P:A}=\sum_{Q}\prob(P,Q)\sem{Q:A}$.
\end{proposition}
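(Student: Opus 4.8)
The plan is to prove the identity by induction on the derivation of the single-step reduction, after first disposing of the non-reducing case using Progress (Lemma~\ref{lem:progress}). If $P$ is a value closure (hence a normal form), then $\prob(P,P)=1$ and $\prob(P,Q)=0$ for $Q\neq P$, so the right-hand side collapses to $\sem{P:A}$. Otherwise $P$ admits single-step reductions whose probabilities sum to $1$, and since $\prob(P,Q)=p$ exactly when $P\redu[p]Q$ and vanishes otherwise, we have $\sum_{Q}\prob(P,Q)\sem{Q:A}=\sum_{P\redu[p]Q}p\,\sem{Q:A}$ (each summand well-formed by subject reduction, $Q:A$); so it suffices to establish $\sem{P:A}=\sum_{P\redu[p]Q}p\,\sem{Q:A}$. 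Throughout I would use that every structure map in Table~\ref{tab:denotation-typing-judgements} is a normal CPsU-map and that composition and $\otimes$ are linear with respect to the pointwise convex sums, so convex combinations propagate through the interpretation.

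For the deterministic classical-control rules in Table~\ref{tab:reduction-rules}(a) the essential ingredient is a \emph{semantic substitution lemma}: for $\oc\Delta,\Gamma_1,x:A\yields M:B$ and a value $\oc\Delta,\Gamma_2\yields V:A$ with disjoint $\Gamma_i$, the map $\sem{\oc\Delta,\Gamma_1,\Gamma_2\yields\tsubst{M}{x}{V}:B}$ equals the composite built from $\sem{M}$ and $\sem{V}$ that plugs $\sem{V}$ into the $x$-slot of $\sem{M}$, duplicating via $\monmap$ when $x$ carries a $\oc$-type and recombining the shared context $\oc\Delta$ via $\mergmap$. I would prove this by induction on the typing derivation of $M$. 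Granting it, the $\limp$-rule follows by unfolding the interpretation of $(\tlam[0]{x^A}{M})V$: the application rule inserts $\varepsilon$ and the abstraction rule inserts $\Lambda f'$, and these cancel by the recorded identity $(\Lambda f\otimes\id)\circ\varepsilon=f$, leaving exactly the composite that the substitution lemma identifies with $\sem{\tsubst{M}{x}{V}}$. The $\otimes$- and $\oplus$-rules are analogous, with $\sem{\ttup{V,W}^n}$ exposing its factors through $(d^{\lem^n})^{-1}$ and $\sem{\tinl{V}},\sem{\tinr{V}}$ exposing theirs through the coprojections $\lem^n\pi_i$.

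For the quantum-data rules in Table~\ref{tab:reduction-rules}(b) the point is that the Heisenberg picture built into $\sem{\tnew},\sem{\tmeas},\sem{U}$ via $f_{\tnew},f_{\tmeas},f_U$ agrees with the Schr\"odinger picture stored in the closures, mediated by $\braket{\psi}{-}{\psi}$. For rule~\ruleref{redu:U} I would first absorb the $\oc$-coercion into $\sem{U}$ (using naturality of $\eta$) so that the interpretation of $U^{\tqbit^{\otimes k}\limp\tqbit^{\otimes k}}$ becomes $\Lambda f_U$, then apply the $\Lambda/\varepsilon$ cancellation to the application $U\ttup{x_1,\dots,x_k}^0$ to reduce it to $f_U$, and finally invoke $\braket{\psi'}{A}{\psi'}=\braket{\psi}{U^*AU}{\psi}=\braket{\psi}{f_U(A)}{\psi}$ (after conjugating by the permutation moving $x_1,\dots,x_k$ into place). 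For the measurement rules I would write $P^i_b$ for the projection recording outcome $b\in\{0,1\}$ on the $i$-th qubit, so that $p_b=\braket{\psi}{P^i_b}{\psi}$ and $\ket{\psi_b}=P^i_b\ket{\psi}/\sqrt{p_b}$ give $p_b\,\braket{\psi_b}{-}{\psi_b}=\braket{\psi}{P^i_b\,(-)\,P^i_b}{\psi}$; one then checks that $p_0\,\sem{Q_0:A}+p_1\,\sem{Q_1:A}$ reassembles into $\braket{\psi}{-}{\psi}$ precomposed with $\sem{\tmeas\,x_i}$, the two summands matching the two projectors in $f_{\tmeas}(\lambda,\varrho)=\lambda\ket{0}\bra{0}+\varrho\ket{1}\bra{1}$. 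Rule~\ruleref{redu:new0}/\ruleref{redu:new1} is a direct calculation with $f_{\tnew}(A)=(\braket{0}{A}{0},\braket{1}{A}{1})$ and the extended state $\ket{\psi}\ket{0}$.

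Finally, the congruence rules in Table~\ref{tab:reduction-rules}(c) form the inductive step. Each embeds the reducing sub-closure into a fixed term-forming context, so by compositionality of Table~\ref{tab:denotation-typing-judgements} the interpretation of the whole closure is a fixed composite of structure maps applied to the interpretation of the sub-closure; since these maps are linear, the induction hypothesis (the convex-sum identity for the sub-closure) propagates unchanged. Some care is needed when the sub-reduction extends the variable list (as in $\tnew$), so that $\Psi$ and the codomain $\Mat_2^{\otimes m}$ grow, but this merely reindexes the fixed context map. I expect the semantic substitution lemma to be the main obstacle, as its proof must track the $\oc$-modality and the affine discipline through every term former and account for duplication and the shared context $\oc\Delta$ through the comonoid/merge structure; the measurement computation is a close second in bookkeeping.
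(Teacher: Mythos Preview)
Your handling of the base cases is essentially what the paper does: the semantic substitution lemma is its Lemma~\ref{lem:denot-subst}, the classical-control rules follow from it (Lemma~\ref{lem:denot-beta-redu}), and the quantum-data calculations are as you describe.

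The gap is in the congruence step. Your claim that ``the interpretation of the whole closure is a fixed composite of structure maps applied to the interpretation of the sub-closure'' is false in general. Take $MN$ with $M$ not a value, and reorder $\Psi=\ket{y_1\dotso y_l\,z_1\dotso z_k\,x_1\dotso x_h}$ so that the $x$'s are the free qubit variables of~$M$ and the $z$'s those of~$N$. The whole closure has
\[
\sem{\qclos{\ket\psi}{\Psi}{MN}:B}
\;=\;
\braket{\psi}{-}{\psi}\circ\bigl(\id_{\Mat_2^{\otimes l}}\otimes\sem{N}\otimes\sem{M}\bigr)\circ(\id\otimes\gamma)\circ(\id\otimes\varepsilon),
\]
whereas the sub-closure's interpretation weakens the $y,z$-slots away via~$\iota$:
\[
\sem{\qclos{\ket\psi}{\Psi}{M}:A\limp B}
\;=\;
\braket{\psi}{-}{\psi}\circ\bigl(\invbang_{\Mat_2^{\otimes(l+k)}}\otimes\sem{M}\bigr).
\]
Because $\braket{\psi}{-}{\psi}$ acts on the \emph{entire} register and $\ket\psi$ may entangle $M$'s qubits with $N$'s, the first map cannot be obtained from the second by pre- or post-composition: in the sub-closure the $z$-slots have already been filled by~$\invbang$, and the induction hypothesis for the sub-closure therefore says nothing about what happens when those slots carry~$\sem{N}$ instead. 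This is not a reindexing issue and arises already when $\Psi$ does not grow.

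The paper repairs this by strengthening the induction hypothesis. It defines
\[
\sem{\qclos{\ket\psi}{\Psi}{M}:A}^{(l)}
\;=\;
\braket{\psi}{-}{\psi}\circ\bigl(\id_{\Mat_2^{\otimes l}}\otimes\sem{x_{l+1}{:}\tqbit,\dotsc,x_m{:}\tqbit\yields M:A}\bigr),
\]
leaving the first $l$ qubits \emph{open}, and proves $\sem{P:A}^{(l)}=\sum_Q\prob(P,Q)\,\sem{Q:A}^{(l)}$ by induction on~$M$ (Lemma~\ref{lem:soundness-strongthened}). In the $MN$ case one applies the induction hypothesis at level $l+k$, so that the $z$-slots stay open and can afterwards receive~$\sem{N}$; linearity then propagates exactly as you intended. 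Your argument becomes correct once you adopt this strengthening.
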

\begin{proof}
See Appendix~\ref{sec:soundness-small-step}.
\end{proof}

\begin{Auxproof}
\begin{lemma}
\label{soundness-n-small-step}
Let $P:A$ be a well-typed quantum closure.
Then for all $n\ge 1$,
\[
\sem{P:A}=\sum_{Q}\prob^n(P,Q)\sem{Q:A}
\enspace.
\]
\end{lemma}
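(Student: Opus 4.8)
The plan is to argue by induction on $n$, taking the single-step soundness as the base case and the recursive definition of $\prob^{n}$ as the engine of the inductive step. For $n=1$ the claim $\sem{P:A}=\sum_{Q}\prob^{1}(P,Q)\sem{Q:A}=\sum_{Q}\prob(P,Q)\sem{Q:A}$ is literally Proposition~\ref{prop:soundness-small-step}, so there is nothing to do.

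For the inductive step I would assume the statement for some $n\ge 1$ and take $P:A$ well-typed. First I would unfold the definition $\prob^{n+1}(P,Z)=\sum_{Q}\prob(P,Q)\,\prob^{n}(Q,Z)$, substitute it into $\sum_{Z}\prob^{n+1}(P,Z)\sem{Z:A}$, and interchange the two summations to reach $\sum_{Q}\prob(P,Q)\bigl(\sum_{Z}\prob^{n}(Q,Z)\sem{Z:A}\bigr)$. Every $Q$ occurring with $\prob(P,Q)>0$ arises from a single-step reduction $P\redu[p] Q$, so by the subject reduction lemma we have $Q:A$; hence the induction hypothesis applies to each such $Q$ and rewrites the inner sum as $\sem{Q:A}$. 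This collapses the expression to $\sum_{Q}\prob(P,Q)\sem{Q:A}$, which equals $\sem{P:A}$ by Proposition~\ref{prop:soundness-small-step}, closing the induction.

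The one point needing care is the interchange of summations. Here Lemma~\ref{lem:progress} is decisive: every well-typed closure has at most two single-step successors, so only finitely many closures are reachable from $P$ within any fixed number of steps, and consequently each sum above has only finitely many nonzero terms. The rearrangement is therefore an elementary manipulation of finite sums, and—since subject reduction keeps all intermediate closures at type $A$—each partial sum is a genuine convex combination of normal CPsU-maps $\sem{\cdot:A}\colon\sem{A}\to\C$ in the sense recalled just before Proposition~\ref{prop:soundness-small-step}. I do not expect any deeper obstacle: the whole content of the lemma is carried by the single-step soundness, with the induction serving only to propagate it along reduction sequences.
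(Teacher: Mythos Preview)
Your proof is correct and follows essentially the same induction as the paper: base case Proposition~\ref{prop:soundness-small-step}, inductive step by unfolding $\prob^{n+1}$, interchanging the (finite) sums, and combining the induction hypothesis with Proposition~\ref{prop:soundness-small-step}. The paper merely runs the chain of equalities in the opposite direction (starting from $\sem{P:A}$ rather than from the $(n{+}1)$-step sum) and leaves implicit the finiteness and subject-reduction remarks you spell out.
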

\begin{proof}
By induction on $n$.
The base case is Proposition~\ref{prop:soundness-small-step}.
For the induction step,
\begin{align*}
\sem{P:A}&=\sum_{Q}\prob(P,Q)\sem{Q:A}
\\
&=\sum_{Q}\prob(P,Q)
\Bigl(\sum_{R}\prob^n(Q,R)\sem{R:A}\Bigr)
&&\text{by IH}
\\
&=\sum_{Q}\sum_{R}\prob(P,Q)\prob^n(Q,R)\sem{R:A}
\\
&=\sum_{R}
\Bigl(\sum_{Q}\prob(P,Q)\prob^n(Q,R)\Bigr)
\sem{R:A}
\\
&=\sum_{R}\prob^{n+1}(P,R)\sem{R:A}
\end{align*}
\end{proof}
\end{Auxproof}

\begin{proposition}[Soundness for the big-step reduction]
\label{prop:soundness-big-step}
Let $P:A$ be a well-typed quantum closure.
Then $\sem{P:A}=\sum_{Z}\Prob(P,Z)\sem{Z:A}$,
where $Z$ runs over well-typed value closures.
\end{proposition}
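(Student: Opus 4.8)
The plan is to lift the single-step soundness of Proposition~\ref{prop:soundness-small-step} to the big-step level by iterating it, and then to pass to the limit using strong normalisation. Throughout, the convex-combination structure on normal CPsU-maps noted just before the proposition makes the arithmetic of the maps $\sem{Q:A}\colon\sem{A}\to\C$ legitimate.

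First I would prove, by induction on $n\ge 1$, the $n$-step soundness $\sem{P:A}=\sum_{Q}\prob^n(P,Q)\sem{Q:A}$, with $Q$ ranging over well-typed quantum closures. The base case $n=1$ is exactly Proposition~\ref{prop:soundness-small-step}. For the induction step I would substitute the small-step identity into each summand, use the recursive definition $\prob^{n+1}(P,R)=\sum_{Q}\prob(P,Q)\prob^{n}(Q,R)$, and rearrange the resulting double sum (which is in fact eventually finite, see below).

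Next I would analyse the reduction behaviour to control the tail. By Progress (Lemma~\ref{lem:progress}) every non-value closure has at most two single-step successors, so the reduction tree rooted at $P$ is finitely branching; by Strong normalisation (Lemma~\ref{lem:str-norm}) it has no infinite branch. König's lemma then yields a uniform bound $N$ on the length of every reduction sequence starting from $P$, and in particular only finitely many closures are reachable from $P$. Consequently, for $n\ge N$ all the probability mass has been absorbed into value closures: $\prob^{n}(P,Q)=0$ whenever $Q$ is not a value closure, while for a value closure $Z$ the quantity $\prob^{n}(P,Z)$ is constant in $n$ (since $\prob(Z,Z)=1$) and therefore already equals $\Prob(P,Z)=\lim_{m}\prob^{m}(P,Z)$.

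Finally I would take $n=N$ in the $n$-step soundness identity, split the sum over $Q$ into value and non-value closures, discard the now-vanishing non-value part, and replace each $\prob^{N}(P,Z)$ by $\Prob(P,Z)$, obtaining $\sem{P:A}=\sum_{Z}\Prob(P,Z)\sem{Z:A}$ with $Z$ ranging over well-typed value closures. I expect the main obstacle to be the bookkeeping that converts strong normalisation plus finite branching into the uniform bound $N$ and the attendant finiteness of all the sums; once that is secured the limit is immediate, because each $\prob^{n}(P,Z)$ has already stabilised at $\Prob(P,Z)$ and no genuine interchange of a limit with an infinite sum is required.
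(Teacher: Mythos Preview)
Your proposal is correct and follows essentially the same approach as the paper: iterate the small-step soundness to obtain the $n$-step identity, then use progress plus strong normalisation to find an $m$ at which the sequence has stabilised, so that $\Prob(P,Z)=\prob^m(P,Z)$ and the non-value contributions vanish. You are just more explicit than the paper about invoking K\"onig's lemma to get the uniform bound and about splitting the sum into value and non-value parts; the paper compresses all of this into two sentences.
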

\begin{proof}
By Lemmas~\ref{lem:progress} and~\ref{lem:str-norm},
$\Prob(P,Z)
\overset{\mathrm{def}}{=}
\lim_{n\to\infty}\prob^n(P,Z)
=
\prob^m(P,Z)$
for some $m$.
It is then easy to obtain
$\sem{P:A}=\sum_{Q}\prob^m(P,Q)\sem{Q:A}$
by induction on $m$, using Proposition~\ref{prop:soundness-small-step}.
\end{proof}

\begin{theorem}[Adequacy]
Let $P:\tbit$ be a quantum closure of type $\tbit$.
For the interpretation $\sem{P:\tbit}\colon \C\oplus\C\to\C$,
we have $\bigredu{P}{\tzero}=\sem{P:\tbit}(1,0)$ and
$\bigredu{P}{\tone}=\sem{P:\tbit}(0,1)$.
\end{theorem}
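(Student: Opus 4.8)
The plan is to derive the result from the soundness for big-step reduction (Proposition~\ref{prop:soundness-big-step}) by evaluating both sides of the soundness equation at the two extreme points $(1,0)$ and $(0,1)$ of $\sem{\tbit}=\C\oplus\C$. Concretely, Proposition~\ref{prop:soundness-big-step} gives
\[
\sem{P:\tbit}=\sum_{Z}\Prob(P,Z)\,\sem{Z:\tbit},
\]
the sum ranging over well-typed value closures $Z$, so the whole theorem reduces to (i) listing the value closures of type $\tbit$ and (ii) computing their interpretations $\sem{Z:\tbit}\colon\C\oplus\C\to\C$.

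For (i) I would run a canonical-forms argument. Inspecting the typing rules of Table~\ref{tab:typing-rules}(b), in a context $x_1:\tqbit,\dotsc,x_m:\tqbit$ of qubits the only value of type $\unittype$ is $\unitterm^0$, and a value whose type is the sum $\oc^0(\unittype\oplus\unittype)=\tbit$ can only be formed by \ruleref{sum.I1} or \ruleref{sum.I2} (there is no subsumption rule, so the index must be $n=0$). Hence the only values of type $\tbit$ are $\tzero^0=\tinl[0][\unittype,\unittype]{\unitterm^0}$ and $\tone^0=\tinr[0][\unittype,\unittype]{\unitterm^0}$, and every value closure of type $\tbit$ lies in $U_{\tzero^0}$ or $U_{\tone^0}$. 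The soundness sum therefore splits cleanly along these two sets.

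For (ii), take $Z=\qclos{\ket{\psi}}{\ket{x_1\dotso x_m}}{\tzero^0}$. Since $\tzero^0$ is closed, $\sem{\Delta|_{\tzero^0}}=\C$, and unfolding the clauses for \ruleref{sum.I1} and for $\unitterm^0$ in Table~\ref{tab:denotation-typing-judgements} (with $n=0$, so $\lem^0=\id$) I would obtain $\sem{x_1:\tqbit,\dotsc,x_m:\tqbit\yields\tzero^0:\tbit}^{\FV}=\pi_1\colon\C\oplus\C\to\C$. Composing with the injection $\iota\colon\C\to\Mat_2^{\otimes m}$ (which sends $1$ to the unit) and then with $\braket{\psi}{-}{\psi}$, and using that $\ket{\psi}$ is normalised so $\braket{\psi}{1}{\psi}=\innp{\psi}{\psi}=1$, the tail composite $\C\xrightarrow{\iota}\Mat_2^{\otimes m}\xrightarrow{\braket{\psi}{-}{\psi}}\C$ is the identity; thus $\sem{Z:\tbit}=\pi_1$. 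Symmetrically $\sem{Z:\tbit}=\pi_2$ for $Z\in U_{\tone^0}$.

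Finally I would evaluate. From $\pi_1(1,0)=1$, $\pi_1(0,1)=0$, $\pi_2(1,0)=0$, $\pi_2(0,1)=1$ and the split sum,
\[
\sem{P:\tbit}(1,0)=\sum_{Z\in U_{\tzero^0}}\Prob(P,Z)=\bigredu{P}{\tzero},
\qquad
\sem{P:\tbit}(0,1)=\sum_{Z\in U_{\tone^0}}\Prob(P,Z)=\bigredu{P}{\tone},
\]
as required. There is no serious obstacle here: all the analytic content is already packaged in Proposition~\ref{prop:soundness-big-step}, which in turn relies on strong normalisation (Lemma~\ref{lem:str-norm}) and progress (Lemma~\ref{lem:progress}) so that the defining limit of $\Prob$ stabilises. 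The only genuinely new local step is the interpretation computation $\sem{Z:\tbit}=\pi_i$, and the one point that must be handled carefully is that the normalisation of $\ket{\psi}$ is precisely what forces $\braket{\psi}{1}{\psi}=1$, collapsing the state part of the closure and leaving the bare projection.
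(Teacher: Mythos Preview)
Your proposal is correct and follows essentially the same route as the paper: invoke Proposition~\ref{prop:soundness-big-step}, observe that the only values of type $\tbit$ in a qubit context are $\tzero^0$ and $\tone^0$, compute $\sem{\qclos{\ket{\psi}}{\Psi}{\tzero^0}:\tbit}(\lambda,\rho)=\lambda$ and $\sem{\qclos{\ket{\psi}}{\Psi}{\tone^0}:\tbit}(\lambda,\rho)=\rho$, and conclude. The paper's proof is terser, merely asserting the canonical-forms step and the interpretation values; you spell out both (including the role of normalisation of $\ket\psi$), but there is no difference in strategy.
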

\begin{proof}
By Proposition~\ref{prop:soundness-big-step} we have
$\sem{P:\tbit}=\sum_{Z}\Prob(P,Z)\sem{Z:\tbit}$.
Note that for each well-typed value closure
$\qclos{\ket{\psi}}{\Psi}{V}:\tbit$,
either $V=\tzero^0$ or $V=\tone^0$.
Then the assertion follows since
$\sem{\qclos{\ket{\psi}}{\Psi}{\tzero^0}:\tbit}(\lambda,\rho)=\lambda$
and $\sem{\qclos{\ket{\psi}}{\Psi}{\tone^0}:\tbit}(\lambda,\rho)=\rho$.
\end{proof}

\section{Technical Results about von Neumann Algebras}
\label{technical-result-vNA}

Let us sketch how we obtained
the two monoidal adjunctions in~\eqref{eq:monoidal-adjunctions}.
\begin{definition}
Let~$\linf(X)$
denote the von Neumann algebra
of bounded maps $f\colon X\to\mathbb{C}$
on a set~$X$.
Addition, multiplication, involution, suprema,  and so on, are computed
coordinatewise in~$\linf(X)$.
In fact, $\linf(X)$ is simply the $X$-fold
product in~$\vNAMIU$ of~$\mathbb{C}$
with $\varphi\mapsto \varphi(x)$
as~$x$-th projection.
We extend $X\mapsto \linf(X)$ to a functor
$\ell^\infty\colon \Set^\op \to \vNAMIU$
by defining
$\linf(f)(\varphi) = 
\varphi\circ f$
for every map $f\colon X\to Y$
(in $\Set$) and $\varphi\in \linf(Y)$.

Let $\nsp(\mathscr{A})$
be the `normal spectrum' of a von Neumann algebra~$\mathscr{A}$,
i.e.\ the set of normal MIU-maps $\varphi\colon \mathscr{A}\to\mathbb{C}$.
We extend $\mathscr{A}\mapsto\nsp(\mathscr{A})$
to a functor
$\nsp\colon \vNAMIU \to \Set^\op$
by defining $\nsp(f)(\varphi)=\varphi\circ f$
for every normal MIU-map $f\colon \mathscr{A}\to\mathscr{B}$
and~$\varphi\in\nsp(\mathscr{B})$ (it is simply a hom-functor $\vNAMIU(-,\C)$).
\end{definition}

Note that any normal MIU-map $f\colon \mathscr{A}\to\ell^\infty(X)$
gives a map $g\colon X\to \nsp(\mathscr{A})$
by ``swapping arguments''
 ---
$g(x)(\varphi)=f(\varphi)(x)$
---
and with a little bit more  work, we get:
\begin{lemma}
There is an adjunction~$\nsp \dashv \linf$.\qed
\end{lemma}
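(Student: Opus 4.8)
The plan is to establish the required hom-set bijection directly and then read off naturality from the universal property of products. Since $\nsp$ is to be the left adjoint, I must produce a bijection
$$\Set^{\op}(\nsp\scrA, X)\;=\;\Set(X,\nsp\scrA)\;\cong\;\vNAMIU(\scrA,\linf X),$$
natural in $\scrA\in\vNAMIU$ and $X\in\Set^{\op}$. The correspondence is the ``argument swap'' already indicated in the text: a normal MIU-map $f\colon\scrA\to\linf X$ is sent to $g\colon X\to\nsp\scrA$ with $g(x)(a)=f(a)(x)$, and conversely.

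The cleanest route exploits the fact, recorded in the preceding definition, that $\linf X$ is the $X$-fold product of $\C$ in $\vNAMIU$, with the evaluations $\mathrm{ev}_x\colon\linf X\to\C$, $\varphi\mapsto\varphi(x)$, as projections. First I would note that $g(x)=\mathrm{ev}_x\circ f$; since each $\mathrm{ev}_x$ is a normal MIU-map (being a product projection) and $f$ is one too, $g(x)$ indeed lies in $\nsp\scrA=\vNAMIU(\scrA,\C)$, so the forward map is well defined. The universal property of the product then yields, essentially for free, the bijection
$$\vNAMIU(\scrA,\linf X)\;=\;\vNAMIU\Bigl(\scrA,\textstyle\prod_{x\in X}\C\Bigr)\;\cong\;\textstyle\prod_{x\in X}\vNAMIU(\scrA,\C)\;=\;(\nsp\scrA)^X\;=\;\Set(X,\nsp\scrA),$$
and unwinding the universal property shows this is exactly the swap $f\mapsto g$. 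In particular the inverse sends $g$ to the unique mediating MIU-map $f$ with $\mathrm{ev}_x\circ f=g(x)$, i.e.\ $f(a)(x)=g(x)(a)$; here one checks only that $f$ lands in $\linf X$ and is a normal MIU-map, but both are automatic from the product framing, since operations and directed suprema in $\linf X$ are coordinatewise and each $g(x)$ is a normal MIU-map (boundedness of $x\mapsto g(x)(a)$ by $\norm{a}$ following from unitality and positivity of the $g(x)$).

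Finally I would verify naturality. Naturality in $\scrA$ reduces to post-composition: for $k\colon\scrA\to\scrA'$ both sides transform by $\nsp(k)=(-)\circ k$, which commutes with the swap. Naturality in $X$ (an object of $\Set^{\op}$) amounts to reindexing the product along a function, which is precisely the coherence of the product universal property in its indexing set. Since the whole bijection is assembled from the natural isomorphism expressing this universal property together with the identification $\nsp=\vNAMIU(-,\C)$, naturality is inherited and requires no separate computation.

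The main obstacle is not conceptual but a matter of care: one must be sure that $\linf X$ really is the categorical product in $\vNAMIU$ --- equivalently, that a family of normal MIU-maps $\scrA\to\C$ assembles into a single \emph{normal} MIU-map $\scrA\to\linf X$, normality being the only non-formal point and guaranteed because suprema in $\linf X$ are computed pointwise. Granting this (which the preceding definition asserts), the adjunction drops out of the product universal property with essentially no further work.
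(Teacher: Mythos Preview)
Your proposal is correct and follows essentially the same approach as the paper: the paper merely records the argument-swap $g(x)(\varphi)=f(\varphi)(x)$ and says ``with a little bit more work'' one obtains the adjunction, while you spell out that ``little bit more work'' by invoking the universal property of $\linf X$ as the $X$-fold product of $\C$ in $\vNAMIU$. Your explicit checks of well-definedness and naturality are exactly what the paper leaves to the reader.
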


The following two lemmas
describe the normal spectrum 
of direct products and tensors
of von Neumann algebras,
and can be proven 
using standard techniques.
\begin{lemma}
Let~$I$ be a set,
and for each~$i\in I$, let~$\mathscr{A}_i$ be a von Neumann algebra.
For each~$\omega\in \nsp(\bigoplus_{i\in I}\mathscr{A}_i)$,
there is~$i\in I$ 
and~$\tilde\omega \in \nsp(\mathscr{A}_i)$
with $\omega = \tilde\omega\circ \pi_{i}$.\qed
\end{lemma}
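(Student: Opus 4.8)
The plan is to show that $\omega$ is ``supported'' on a single summand by testing it against the central projections that pick out the summands, and to use normality precisely to rule out the infinitary phenomenon of a non-principal ultrafilter on $I$. For each $i\in I$ let $p_i\in\bigoplus_{j\in I}\scrA_j$ be the family equal to $1$ in the $i$-th coordinate and $0$ elsewhere, and let $\kappa_i\colon\scrA_i\to\bigoplus_{j\in I}\scrA_j$ be the coordinate insertion ($\kappa_i(a)$ is $a$ at coordinate $i$, and $0$ elsewhere); note $\kappa_i(1)=p_i$, $\pi_i\circ\kappa_i=\id$, and $\kappa_i$ is a normal $*$-homomorphism (though not unital). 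First I would record that each $p_i$ is a projection, so, $\omega$ being a MIU-map, $\omega(p_i)$ is a self-adjoint idempotent in $\C$, hence $\omega(p_i)\in\{0,1\}$; and that the $p_i$ are mutually orthogonal, since for $i\neq j$ we have $p_ip_j=0$ and therefore $\omega(p_i)\omega(p_j)=\omega(p_ip_j)=0$. Thus at most one value $\omega(p_i)$ equals $1$.

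Next I would pin down the support index using normality. The finite partial sums $e_F=\sum_{i\in F}p_i$, over finite $F\subseteq I$, form a bounded directed family of projections whose supremum, computed coordinatewise in the direct sum, is the unit $1$. Since $\omega$ is normal, $1=\omega(1)=\bigvee_F\omega(e_F)=\bigvee_F\sum_{i\in F}\omega(p_i)$. As each finite sum $\sum_{i\in F}\omega(p_i)$ lies in $\{0,1\}$ (at most one summand being nonzero), the supremum equalling $1$ forces some finite $F$, and hence a necessarily unique index $i_0$, with $\omega(p_{i_0})=1$; consequently $\omega(p_j)=0$ for all $j\neq i_0$. This is the step I expect to be the only nontrivial point, and the main obstacle: for finite $I$ it is immediate, but for infinite $I$ it is exactly where normality is indispensable, since a merely multiplicative (non-normal) character on $\bigoplus_{i}\C$ could correspond to a non-principal ultrafilter on $I$ and fail to factor through any single projection.

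Finally I would exhibit the factorisation. Set $\tilde\omega=\omega\circ\kappa_{i_0}\colon\scrA_{i_0}\to\C$. It is a normal MIU-map: multiplicativity and $*$-preservation are inherited from $\kappa_{i_0}$ and $\omega$; it is unital because $\tilde\omega(1)=\omega(p_{i_0})=1$; and it is normal as a composite of normal maps, so $\tilde\omega\in\nsp(\scrA_{i_0})$. To verify $\omega=\tilde\omega\circ\pi_{i_0}$, write $q=1-p_{i_0}$, a projection with $\omega(q)=1-\omega(p_{i_0})=0$. For any $a\in\bigoplus_{j}\scrA_j$, multiplicativity of $\omega$ gives $\omega(aq)=\omega(a)\omega(q)=0$, so $\omega(a)=\omega(ap_{i_0})$. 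Since $ap_{i_0}=\kappa_{i_0}(\pi_{i_0}(a))$, I conclude $\omega(a)=\omega(\kappa_{i_0}(\pi_{i_0}(a)))=\tilde\omega(\pi_{i_0}(a))$, i.e.\ $\omega=\tilde\omega\circ\pi_{i_0}$, as required.
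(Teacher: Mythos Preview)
Your argument is correct and is precisely the ``standard technique'' the paper alludes to; the paper itself omits the proof entirely. The use of the central projections $p_i$, the observation that $\omega(p_i)\in\{0,1\}$ with at most one value equal to~$1$, and the appeal to normality on the net of finite partial sums $e_F$ to force $\omega(p_{i_0})=1$ for some~$i_0$, followed by the factorisation $\omega(a)=\omega(ap_{i_0})=\omega(\kappa_{i_0}(\pi_{i_0}(a)))$, is exactly the expected proof. Your remark that normality is what excludes non-principal-ultrafilter characters on $\bigoplus_i\C$ is also to the point.
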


\begin{lemma}
Let~$\mathscr{A}_1$ and~$\mathscr{A}_2$
be von Neumann algebras.
Then for every~$\omega\in \nsp(\mathscr{A}_1\otimes \mathscr{A}_2)$
there are unique~$\omega_1\in\mathscr{A}_1$
and~$\omega_2\in\mathscr{A}_2$
with $\omega (a_1\otimes a_2) = \omega_1(a_1)\,\cdot\,\omega_2(a_2)$
for all~$a_i\in\mathscr{A}_i$.\qed
\end{lemma}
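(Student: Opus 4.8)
The plan is to define $\omega_1$ and $\omega_2$ as the restrictions of $\omega$ along the two canonical embeddings of $\mathscr{A}_1$ and $\mathscr{A}_2$ into the tensor product, and then to read off the factorisation from the multiplicativity of $\omega$. Concretely, I would set $\omega_1(a_1) = \omega(a_1\otimes 1)$ and $\omega_2(a_2) = \omega(1\otimes a_2)$, so that the claimed elements are $\omega_1\in\nsp(\mathscr{A}_1)$ and $\omega_2\in\nsp(\mathscr{A}_2)$ (note that the statement asks for normal MIU-maps, not bare algebra elements, since they are applied to the $a_i$).

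First I would check that the embeddings $\iota_1\colon\mathscr{A}_1\to\mathscr{A}_1\otimes\mathscr{A}_2$, $a_1\mapsto a_1\otimes 1$, and $\iota_2\colon\mathscr{A}_2\to\mathscr{A}_1\otimes\mathscr{A}_2$, $a_2\mapsto 1\otimes a_2$, are normal MIU-maps. The MIU properties are immediate from $(a a')\otimes 1 = (a\otimes 1)(a'\otimes 1)$, $(a\otimes 1)^* = a^*\otimes 1$, and $1\otimes 1 = 1$; normality of $\iota_1$ and $\iota_2$ is the standard fact that tensoring with the identity preserves suprema of bounded directed sets of self-adjoint elements. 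Since $\omega$ is itself a normal MIU-map, the composites $\omega_1 = \omega\circ\iota_1$ and $\omega_2 = \omega\circ\iota_2$ are again normal MIU-maps, hence genuine elements of the respective normal spectra. For the factorisation, the key observation is the algebraic identity $a_1\otimes a_2 = (a_1\otimes 1)(1\otimes a_2)$ in $\mathscr{A}_1\otimes\mathscr{A}_2$, which one verifies by evaluating both sides on elementary tensors $x\otimes y$. Applying the multiplicative map $\omega$ then yields $\omega(a_1\otimes a_2) = \omega(a_1\otimes 1)\cdot\omega(1\otimes a_2) = \omega_1(a_1)\cdot\omega_2(a_2)$, as required.

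Uniqueness is forced by unitality: any pair $(\omega_1',\omega_2')$ satisfying the factorisation must, upon setting $a_2 = 1$, obey $\omega_1'(a_1) = \omega_1'(a_1)\cdot\omega_2'(1) = \omega(a_1\otimes 1)$ because $\omega_2'(1) = 1$, and symmetrically $\omega_2'(a_2) = \omega(1\otimes a_2)$; so $\omega_1' = \omega_1$ and $\omega_2' = \omega_2$. I do not anticipate a real obstacle here, which is consistent with the remark that the lemma follows by standard techniques: the only step that draws on external input is the normality of the embeddings $\iota_1,\iota_2$, for which I would cite the treatment of the spatial tensor product in \cite{Takesaki1} or \cite{KadisonRingrose2}; everything else is pure multiplicativity and unitality of $\omega$.
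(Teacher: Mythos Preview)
Your proposal is correct and is precisely the standard argument the paper has in mind: the lemma is stated with a bare \qed\ and the preceding sentence says only that it ``can be proven using standard techniques,'' so there is no detailed proof in the paper to compare against. Your construction via the normal MIU-embeddings $a_i\mapsto a_i\otimes 1$ and $a_i\mapsto 1\otimes a_i$, together with the factorisation $a_1\otimes a_2=(a_1\otimes 1)(1\otimes a_2)$ and multiplicativity of~$\omega$, is exactly that standard technique, and your uniqueness argument is the right one; you also correctly flag that the ``$\omega_i\in\mathscr{A}_i$'' in the statement is a typo for $\omega_i\in\nsp(\mathscr{A}_i)$.
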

\begin{corollary}
\label{cor:sp-preserves-coproducts-and-tensors}
The functor~$\nsp\colon \vNAMIU \to \Set^\op$
preserves  products, and tensors.\qed
\end{corollary}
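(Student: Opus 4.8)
The plan is to read off the corollary from the two preceding lemmas, the only remaining work being to match the bijections they provide with the canonical comparison maps and to keep the variance straight: since $\nsp\colon\vNAMIU\to\Set^{\op}$ lands in $\Set^{\op}$, ``preserving products'' means turning the direct sums $\bigoplus$ of $\vNAMIU$ into coproducts (disjoint unions) in $\Set$, while ``preserving tensors'' compares the spatial tensor $\otimes$ against the cartesian product $\times$ that serves as the monoidal product of $\Set^{\op}$.

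For products, the projections $\pi_i\colon\bigoplus_j\mathscr{A}_j\to\mathscr{A}_i$ are normal MIU-maps, so $\nsp$ sends them to functions $\nsp(\pi_i)\colon\nsp(\mathscr{A}_i)\to\nsp(\bigoplus_j\mathscr{A}_j)$, $\tilde\omega\mapsto\tilde\omega\circ\pi_i$, which are the candidate coproduct injections. First I would check each $\nsp(\pi_i)$ is injective, which holds because $\pi_i$ is surjective. Next I would show the images are pairwise disjoint using the central projection $e_i\in\bigoplus_j\mathscr{A}_j$ that is the unit in slot $i$ and $0$ elsewhere: if $\omega=\tilde\omega\circ\pi_i=\tilde\omega'\circ\pi_{i'}$ with $i\neq i'$ then $1=\tilde\omega(1)=\omega(e_i)=\tilde\omega'(\pi_{i'}(e_i))=\tilde\omega'(0)=0$, a contradiction. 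Joint surjectivity is exactly the first lemma. Hence the $\nsp(\pi_i)$ exhibit $\nsp(\bigoplus_j\mathscr{A}_j)$ as $\coprod_i\nsp(\mathscr{A}_i)$, so $\nsp$ carries the product cone to a coproduct cocone. (The empty index set is covered too: $\nsp$ of the terminal algebra $\{0\}$ is empty, matching the initial object of $\Set$.)

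For tensors, the monoidal unit is handled by noting $\nsp(\C)=\{\id_{\C}\}$ is a singleton, the unit of $\Set^{\op}$. For the binary case, the MIU-maps $a_1\mapsto a_1\otimes 1$ and $a_2\mapsto 1\otimes a_2$ induce under $\nsp$ the comparison $\nsp(\mathscr{A}_1\otimes\mathscr{A}_2)\to\nsp(\mathscr{A}_1)\times\nsp(\mathscr{A}_2)$ sending $\omega$ to $(\omega(-\otimes 1),\,\omega(1\otimes-))$. Putting $a_2=1$ (resp.\ $a_1=1$) in the second lemma gives $\omega_1=\omega(-\otimes 1)$ and $\omega_2=\omega(1\otimes-)$, so this comparison is precisely $\omega\mapsto(\omega_1,\omega_2)$; the existence and uniqueness clauses of that lemma then say exactly that it is a bijection, which is the claim that $\nsp$ preserves tensors.

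I do not expect a genuine obstacle, since both lemmas are already granted and the content is essentially bookkeeping. The one point demanding care is the variance --- making sure the induced functions $\nsp(\pi_i)$ and the tensor comparison run in the correct direction in $\Set$, so that a direct sum in $\vNAMIU$ really becomes a disjoint union and the tensor a cartesian product --- together with the small check that the lemmas' bijections coincide with these canonical comparison maps rather than being merely abstract isomorphisms.
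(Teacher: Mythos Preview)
Your proposal is correct and follows the same approach as the paper, which simply records the corollary with a \qed\ after the two lemmas and gives no further argument. You have in fact supplied more detail than the paper does, carefully verifying that the bijections from the lemmas agree with the canonical comparison maps and keeping track of the variance, all of which is sound.
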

Using that $\linf(X)$
is the $X$-fold product of~$\mathbb{C}$
in~$\vNAMIU$ we get:
\begin{corollary}
\label{cor:first-adjunction-unit-iso}
The counit 
of the adjunction~$\nsp\dashv\linf$ is an isomorphism.\qed
\end{corollary}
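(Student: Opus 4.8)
The plan is to identify the counit explicitly and then verify that it is a bijection on underlying sets, with the preceding two lemmas doing the real work in the surjectivity step.

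First I would pin down the counit concretely. Since $\nsp$ is left adjoint to $\linf$, the transpose bijection $\vNAMIU(\mathscr{A},\linf(X))\cong\Set(X,\nsp(\mathscr{A}))$ is the ``swapping of arguments'' recorded just before the statement of the adjunction: a normal MIU-map $f\colon\mathscr{A}\to\linf(X)$ corresponds to the function $g\colon X\to\nsp(\mathscr{A})$ with $g(x)(\varphi)=f(\varphi)(x)$. The counit $\varepsilon_X\colon\nsp(\linf(X))\to X$ is a morphism of $\Set^{\op}$, hence a function $X\to\nsp(\linf(X))$ of sets, and it is the image of $\id_{\linf(X)}$ under this bijection taken at $\mathscr{A}=\linf(X)$. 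Substituting $f=\id$ into the formula gives $\varepsilon_X(x)(\varphi)=\varphi(x)$; that is, $\varepsilon_X$ sends $x$ to the $x$-th product projection $\pi_x\colon\linf(X)\to\C$ (evaluation at $x$). Naturality of $\varepsilon$ is automatic, so it remains only to show that $\varepsilon_X$ is a bijection of sets for every $X$, since a morphism of $\Set^{\op}$ is an isomorphism precisely when the underlying $\Set$-function is.

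For injectivity I would separate points by indicator functions: if $\pi_x=\pi_{x'}$ in $\nsp(\linf(X))$, then evaluating both sides at the bounded function $\mathbf{1}_{\{x\}}\in\linf(X)$ yields $1=\mathbf{1}_{\{x\}}(x')$, forcing $x=x'$. Surjectivity is the only substantive point, and it is exactly where the product structure enters. Recall that $\linf(X)$ is the $X$-fold product $\bigoplus_{x\in X}\C$ in $\vNAMIU$, with the $\pi_x$ as projections. Applying the lemma on the normal spectrum of a product (with every factor equal to $\C$), any $\omega\in\nsp(\linf(X))$ factors as $\omega=\tilde\omega\circ\pi_x$ for some $x\in X$ and some $\tilde\omega\in\nsp(\C)$. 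But a normal MIU-map $\C\to\C$ is linear and unit-preserving, hence equal to $\id$, so $\nsp(\C)$ is a singleton and $\omega=\pi_x=\varepsilon_X(x)$; thus $\varepsilon_X$ is onto.

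The main obstacle is surjectivity, namely ruling out ``exotic'' normal characters on $\linf(X)$ (such as those associated with non-principal ultrafilters). However, all of that difficulty has already been absorbed into the cited lemma on $\nsp$ of products, where normality is precisely the hypothesis that forces a character to factor through a single coordinate. Given that lemma, together with the triviality of $\nsp(\C)$, the corollary follows immediately.
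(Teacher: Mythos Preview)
Your proof is correct and follows the same route the paper indicates: write $\linf(X)$ as the $X$-fold product $\bigoplus_{x\in X}\C$ in $\vNAMIU$, apply the lemma on $\nsp$ of products to see that every normal character is a projection $\pi_x$, and observe that $\nsp(\C)$ is a singleton. The paper compresses this to a single line, while you have spelled out the identification of the counit and the injectivity step explicitly; only the product lemma (not the tensor lemma) is actually needed.
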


\begin{lemma}
\label{lem:ell-tensor}
Let~$X$ and~$Y$ be sets.
There is a normal MIU-isomorphism
\begin{equation*}
\varphi\colon 
 \ell^\infty(X)\otimes \ell^\infty(Y)
\longrightarrow
\ell^\infty(X\times Y)
\qquad\text{given by}\qquad
\varphi(f\otimes g)(x,y) \ =\  f(x)\cdot g(y).
\end{equation*}
\end{lemma}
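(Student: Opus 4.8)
The plan is to prove the statement by realising both sides as concrete algebras of diagonal operators and exhibiting $\varphi$ as conjugation by the obvious unitary. First I would represent $\linf(X)$ faithfully and normally as the algebra of diagonal (multiplication) operators on the Hilbert space $\ell^2(X)$ with orthonormal basis $(e_x)_{x\in X}$, writing $M_f$ for the operator $e_x\mapsto f(x)e_x$; similarly for $\linf(Y)$ on $\ell^2(Y)$ and $\linf(X\times Y)$ on $\ell^2(X\times Y)$. The spatial tensor product $\linf(X)\otimes\linf(Y)$ then lives inside $B(\ell^2(X)\otimes\ell^2(Y))$, and I would transport everything along the canonical unitary $U\colon\ell^2(X)\otimes\ell^2(Y)\to\ell^2(X\times Y)$ determined by $U(e_x\otimes e_y)=e_{(x,y)}$. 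Since conjugation $\mathrm{Ad}_U\colon B(\ell^2(X)\otimes\ell^2(Y))\to B(\ell^2(X\times Y))$ by a unitary is automatically a normal $*$-isomorphism preserving the unit, the whole MIU- and normality-content of the lemma comes for free; it remains only to check that $\mathrm{Ad}_U$ carries $\linf(X)\otimes\linf(Y)$ exactly onto $\linf(X\times Y)$ and agrees with the stated formula.

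For the formula and the inclusion $\subseteq$, I would compute $\mathrm{Ad}_U(M_f\otimes M_g)=M_{f\boxtimes g}$, where $(f\boxtimes g)(x,y)=f(x)g(y)$, directly on the basis vectors $e_{(x,y)}$; this is exactly $\varphi(f\otimes g)(x,y)=f(x)g(y)$. As every such $M_{f\boxtimes g}$ is a bounded diagonal operator, the generators $M_f\otimes M_g$ of the spatial tensor product land in $\linf(X\times Y)$; and since $\linf(X\times Y)$ is closed in the weak operator topology, it follows that $\mathrm{Ad}_U$ maps the whole spatial tensor product $\linf(X)\otimes\linf(Y)$ into $\linf(X\times Y)$.

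The reverse inclusion is the heart of the argument, and is where I expect the only real work. Taking $f=\mathbf 1_{\{x\}}$ and $g=\mathbf 1_{\{y\}}$ shows that every rank-one diagonal projection $M_{\mathbf 1_{\{(x,y)\}}}=\mathrm{Ad}_U(M_{\mathbf 1_{\{x\}}}\otimes M_{\mathbf 1_{\{y\}}})$ lies in the image. For arbitrary $h\in\linf(X\times Y)$ I would then write $M_h$ as the weak-operator limit of the net of finite partial sums $\sum_{(x,y)\in F}h(x,y)\,M_{\mathbf 1_{\{(x,y)\}}}$ indexed by finite $F\subseteq X\times Y$; convergence in the weak operator topology is immediate from the estimate $\sum_{(x,y)}|h(x,y)|\,|\overline{\eta(x,y)}\,\xi(x,y)|\le\|h\|_\infty\,\|\eta\|_2\,\|\xi\|_2$ (Cauchy--Schwarz), which makes $\langle\eta,\,(\cdot)\,\xi\rangle$ of the partial sums converge absolutely to $\langle\eta,M_h\xi\rangle$. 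Since each partial sum lies in $\mathrm{Ad}_U(\linf(X)\otimes\linf(Y))$ and this image is weak-operator closed (being a von Neumann algebra), we conclude that $M_h$ lies in the image, establishing $\linf(X\times Y)\subseteq\mathrm{Ad}_U(\linf(X)\otimes\linf(Y))$.

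Combining the two inclusions, $\varphi:=\mathrm{Ad}_U$ restricts to a bijection $\linf(X)\otimes\linf(Y)\to\linf(X\times Y)$, and by the opening remark it is automatically a normal MIU-isomorphism realising the stated formula. The one point to keep honest is that the spatial tensor product is independent of the chosen faithful normal representations, so that computing it on $\ell^2(X)\otimes\ell^2(Y)$ is legitimate; this is a standard fact about the spatial tensor product and poses no difficulty here.
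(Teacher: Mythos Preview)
Your argument is correct. Representing $\ell^\infty(X)$ as multiplication operators on $\ell^2(X)$, transporting along the canonical unitary $U\colon \ell^2(X)\otimes\ell^2(Y)\to\ell^2(X\times Y)$, and then checking both inclusions via WOT approximation by finitely supported diagonal operators is a clean and complete proof. The one phrase to tighten is the parenthetical ``being a von Neumann algebra'' justifying WOT-closedness of the image: the real reason is that $\mathrm{Ad}_U$ is \emph{spatial}, hence a WOT-homeomorphism of the ambient $B(H)$'s, so it carries WOT-closed subalgebras to WOT-closed subalgebras. You clearly have this in mind from your opening paragraph, so this is only a matter of wording.

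As for comparison with the paper: the paper does not actually prove the lemma but defers to Proposition~9.2 of~\cite{cho2014}, so there is no in-paper argument to compare against. Your self-contained proof via the multiplication representation and the unitary $\ell^2(X)\otimes\ell^2(Y)\cong\ell^2(X\times Y)$ is the standard direct route and is entirely appropriate here.
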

\begin{proof}
Use the proof of Proposition~9.2 from~\cite{cho2014}.
\end{proof}

\begin{corollary}
\label{cor:first-adjunction-monoidal}
The adjunction~$\nsp \dashv \linf$ 
is strong monoidal.\qed
\end{corollary}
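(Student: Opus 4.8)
The plan is to show that both $\nsp$ and $\linf$ are strong symmetric monoidal and that their comparison maps are mates of one another, so that the left-hand adjunction of~\eqref{eq:monoidal-adjunctions} is a strong symmetric monoidal adjunction by the doctrinal-adjunction principle: once the left adjoint $\nsp$ carries an oplax monoidal structure whose mate is the lax structure of the right adjoint $\linf$, the unit and counit are automatically monoidal transformations, and the adjunction is strong precisely when all of these comparison maps are isomorphisms.

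I would first dispatch the right adjoint $\linf$, which is lax monoidal. Its unit comparison $\C\to\linf(1)$ is invertible because $\linf(1)$ is the algebra of bounded functions on a single point, namely $\C$ itself; and its tensor comparison $\linf(X)\otimes\linf(Y)\to\linf(X\times Y)$ is exactly the isomorphism $\varphi$ of Lemma~\ref{lem:ell-tensor}. Hence $\linf$ is strong monoidal. Dually, the left adjoint $\nsp$ is oplax monoidal, and its two comparison maps are again isomorphisms: $\nsp(\C)$ is a one-point set since the identity is the only normal MIU-map $\C\to\C$, and the tensor comparison $\nsp(\scrA\otimes\scrB)\cong\nsp(\scrA)\times\nsp(\scrB)$ is an instance of Corollary~\ref{cor:sp-preserves-coproducts-and-tensors}. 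So $\nsp$ is strong monoidal as well.

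It then remains to confirm that these two families of isomorphisms correspond under the adjunction, i.e.\ that the oplax structure of $\nsp$ is the mate of the lax structure of $\linf$; this is what makes the two strong monoidal functors fit into a single coherent symmetric monoidal adjunction. Both comparisons are determined by the same elementary rule --- ``a product state multiplies its factors'', $\omega(a_1\otimes a_2)=\omega_1(a_1)\cdot\omega_2(a_2)$ on the $\nsp$ side and $\varphi(f\otimes g)(x,y)=f(x)\cdot g(y)$ on the $\linf$ side --- so the mate equation, and compatibility with the associativity, unit and symmetry constraints, reduce to pointwise checks; here the fact that the counit of $\nsp\dashv\linf$ is an isomorphism (Corollary~\ref{cor:first-adjunction-unit-iso}) makes identifying the mate especially clean. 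I expect the only real work to be this bookkeeping: verifying that the concrete isomorphisms supplied by the preceding lemmas genuinely are the canonical monoidal comparison maps, and that they satisfy the symmetric-monoidal coherence axioms. No new idea should be needed, since once the comparisons are recognised as the ``multiply the factors'' maps, naturality and coherence follow by evaluating on elementary tensors and on points.
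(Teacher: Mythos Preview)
Your proposal is correct and is precisely the argument the paper leaves implicit behind the bare ``\qed'': the corollary is meant to follow immediately from Lemma~\ref{lem:ell-tensor} (giving the strong monoidal structure on~$\linf$), Corollary~\ref{cor:sp-preserves-coproducts-and-tensors} (giving it on~$\nsp$), and Corollary~\ref{cor:first-adjunction-unit-iso}, combined via doctrinal adjunction. One minor streamlining: once you know $\linf$ is strong monoidal, doctrinal adjunction already hands you the oplax structure on the left adjoint $\nsp$ as its mate, so you need only check that \emph{this induced} structure is invertible (which is Corollary~\ref{cor:sp-preserves-coproducts-and-tensors}) rather than separately constructing a strong structure on $\nsp$ and then verifying compatibility.
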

Let us turn to the second adjunction in~\eqref{eq:monoidal-adjunctions}.
In~\cite{bram2014} 
it is shown how 
the following result follows
from Freyd's Adjoint Functor Theorem
(see Theorem V.6.2 of~\cite{maclane1978}).
\begin{theorem}[\cite{bram2014}]
\label{thm:second-adjoint}
The inclusion~$\mathcal{J}\colon \vNAMIU
\to \vNACPsU$
has a left adjoint.\qed
\end{theorem}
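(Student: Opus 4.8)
The plan is to invoke Freyd's General Adjoint Functor Theorem (Theorem V.6.2 of~\cite{maclane1978}): since the locally small domain category $\vNAMIU$ is complete, the inclusion $\incfun\colon\vNAMIU\to\vNACPsU$ acquires a left adjoint as soon as (i)~$\vNAMIU$ is complete, (ii)~$\incfun$ preserves all small limits, and (iii)~$\incfun$ satisfies the solution set condition for each object of $\vNACPsU$. The left adjoint produced in this way is the functor $\qcomp$ appearing in~\eqref{eq:monoidal-adjunctions}.

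For (i) and (ii) I would check that $\vNAMIU$ has, and $\incfun$ preserves, all small products and all equalizers; these together give completeness and continuity by the standard reduction of limits to products and equalizers. The products are the direct sums $\bigoplus_i\scrA_i$ with coordinate projections, and since these are the products in both $\vNAMIU$ and $\vNACPsU$, the functor $\incfun$ preserves them on the nose. For equalizers, given normal MIU-maps $f,g\colon\scrA\to\scrB$ the subset $\mathscr{E}=\{a\in\scrA : f(a)=g(a)\}$ is a unital $*$-subalgebra, and it is $\sigma$-weakly closed because $f$ and $g$ are $\sigma$-weakly continuous (normality is exactly ultraweak continuity); hence $\mathscr{E}$ is a von Neumann subalgebra whose inclusion is the equalizer in $\vNAMIU$. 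The very same inclusion is the equalizer in $\vNACPsU$: any normal CPsU-map $h$ with $fh=gh$ takes values in $\mathscr{E}$, and its corestriction to $\mathscr{E}$ is again completely positive, subunital, and normal, the last point because directed suprema computed in the von Neumann subalgebra $\mathscr{E}$ agree with those computed in $\scrA$. Thus $\incfun$ preserves equalizers as well.

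The heart of the argument is the solution set condition (iii), which I expect to be the main obstacle, as it is the only step demanding genuine operator-algebraic input. Fix $\scrB\in\vNACPsU$. Given an arbitrary normal CPsU-map $h\colon\scrB\to\scrA$, I would factor it through the von Neumann subalgebra $\mathscr{M}\subseteq\scrA$ generated by the image $h(\scrB)$: the corestriction $\scrB\to\mathscr{M}$ is a normal CPsU-map, while the inclusion $\mathscr{M}\hookrightarrow\scrA$ is a normal MIU-map, so $h=\incfun(\mathscr{M}\hookrightarrow\scrA)\circ(\scrB\to\mathscr{M})$. What remains is to show that, as $\scrA$ and $h$ range over all possibilities, only a \emph{set} of isomorphism classes of such $\mathscr{M}$ can arise; this reduces to a uniform cardinality bound stating that a von Neumann algebra generated by a subset of cardinality $\mu$ has cardinality at most some $\kappa(\mu)$ independent of the ambient algebra.

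For that bound I would argue as follows. The $*$-algebra $\mathscr{A}_0$ generated by $h(\scrB)$ has cardinality at most $\lambda:=\max(\abs{\scrB},\mathfrak{c})$. By Kaplansky's density theorem the unit ball of $\mathscr{M}$ is the $\sigma$-strong closure of that of $\mathscr{A}_0$, so every normal functional on $\mathscr{M}$ is determined, isometrically, by its restriction to $\mathscr{A}_0$; hence the predual $\mathscr{M}_*$ embeds isometrically into the Banach dual of the normed space $\mathscr{A}_0$, and its density character $d$ satisfies $d\le 2^{\lambda}$. Since each element of $\mathscr{M}=(\mathscr{M}_*)^*$ is determined by its values on a norm-dense subset of $\mathscr{M}_*$ of size $d$, we get $\abs{\mathscr{M}}\le\mathfrak{c}^{\,d}\le 2^{2^{\lambda}}=:\kappa(\abs{\scrB})$. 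This (deliberately wasteful) bound depends only on $\abs{\scrB}$, so a set of representatives of the von Neumann algebras of cardinality at most $\kappa(\abs{\scrB})$, together with all normal CPsU-maps from $\scrB$ into each of them (a set by local smallness), constitutes the required solution set. Every $h$ then factors through one of these representatives via the MIU-isomorphism identifying $\mathscr{M}$ with it, completing the verification and yielding the left adjoint $\qcomp$.
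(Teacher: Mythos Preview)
Your proposal is correct and matches the paper's approach exactly: the paper does not give a proof in situ but states that the result is obtained in~\cite{bram2014} via Freyd's Adjoint Functor Theorem (Theorem~V.6.2 of~\cite{maclane1978}), which is precisely the strategy you outline and carry out. Your verification of completeness and limit-preservation is standard, and your cardinality bound for the solution set condition---factoring through the von Neumann subalgebra generated by the image and bounding its size via Kaplansky density and the predual---is a valid way to discharge the one nontrivial hypothesis.
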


\begin{corollary}
\label{cor:second-adjoint}
The category~$\vNACPsU$ is isomorphic
to the co-Kleisli category
of the comonad~$\mathcal{F}\circ \mathcal{J}$
on~$\vNAMIU$
induced by~$\mathcal{F}\dashv \mathcal{J}$.\qed
\end{corollary}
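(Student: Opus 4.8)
The plan is to recognise this as the dual of the standard comparison theorem for Kleisli categories (cf.\ \cite{maclane1978}): for any adjunction the induced co-Kleisli comparison functor is automatically full and faithful, and it upgrades to an \emph{isomorphism} of categories exactly when the relevant adjoint is bijective on objects. Here the right adjoint $\incfun$ is the inclusion, hence literally the identity on objects, so I expect an honest isomorphism rather than a mere equivalence. First I would fix notation, writing $\varepsilon\colon\qcomp\incfun\To\id_{\vNAMIU}$ and $\eta\colon\id_{\vNACPsU}\To\incfun\qcomp$ for the counit and unit of $\qcomp\dashv\incfun$ (Theorem~\ref{thm:second-adjoint}), so that $\qcomp\incfun$ is a comonad on $\vNAMIU$ with counit $\varepsilon$ and comultiplication $\qcomp\eta\incfun$.

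Next I would set up the comparison functor explicitly. Because $\incfun$ is the identity on objects, the comonad $\qcomp\incfun$ sends an object $\scrA$ to $\qcomp\scrA$, and for all $\scrA,\scrB$ the adjunction bijection $\theta$ gives
\[
\Hom_{\mathrm{coKl}(\qcomp\incfun)}(\scrA,\scrB)
=\vNAMIU(\qcomp\scrA,\scrB)
\xrightarrow[\ \cong\ ]{\theta}
\vNACPsU(\scrA,\incfun\scrB)
=\vNACPsU(\scrA,\scrB),
\]
the final equality again using $\incfun\scrB=\scrB$. I would then define $\Psi\colon\mathrm{coKl}(\qcomp\incfun)\to\vNACPsU$ to be the identity on objects and $\theta$ on hom-sets, and verify that it is a functor: the co-Kleisli identity on $\scrA$ is $\varepsilon_{\scrA}$, whose transpose is $\id_{\scrA}$ by the triangle identity $\incfun\varepsilon\circ\eta\incfun=\id_{\incfun}$; and the co-Kleisli composite of $f$ and $g$, namely $g\circ\qcomp\incfun(f)\circ(\qcomp\eta\incfun)_{\scrA}$, transposes under $\theta$ to $\theta(g)\circ\theta(f)$ by naturality of $\theta$ together with the triangle identities.

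Since $\Psi$ is identity on objects (hence bijective on objects) and fully faithful (each $\theta$ is a bijection), it is an isomorphism of categories, with inverse the identity on objects and $\theta^{-1}$ on morphisms. The only genuine work is the functoriality check of the previous paragraph, i.e.\ that transposition carries co-Kleisli composition to composition in $\vNACPsU$; this is the standard, mildly fiddly diagram chase underlying any co-Kleisli comparison functor, and I expect it to be the main (though routine) obstacle. I would stress that the reason we obtain a strict isomorphism rather than only an equivalence is precisely that $\incfun$ is the identity on objects, so that $\incfun\scrB$ in the display collapses to $\scrB$ and $\Psi$ is strictly identity on objects.
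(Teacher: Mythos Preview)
Your proposal is correct and follows essentially the same approach as the paper: the paper's proof simply cites Theorem~9 of~\cite{bram2014} or Exercise~VI.5.2 of~\cite{maclane1978} together with the observation that an equivalence bijective on objects is an isomorphism, and you have spelled out precisely that argument. Your direct route via ``fully faithful $+$ identity on objects $\Rightarrow$ isomorphism'' is a mild streamlining of the paper's phrasing, but the content is the same standard co-Kleisli comparison.
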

\begin{proof}
See Theorem~9 of~\cite{bram2014},
or do Exercise VI.5.2 of~\cite{maclane1978}
(and use the fact
that an equivalence of categories 
which is bijective on objects
is an isomorphism).
\end{proof}

\begin{corollary}
\label{cor:second-adjoint-monoidal}
The adjunction 
$\mathcal{F}\dashv\mathcal{J}$ is symmetric monoidal.
\end{corollary}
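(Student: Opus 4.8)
The plan is to obtain the monoidal structure on the adjunction for free from Kelly's \emph{doctrinal adjunction}, exploiting that $\incfun$ is already strict (hence lax) symmetric monoidal. Recall the general fact: for an adjunction $\qcomp\dashv\incfun$ whose right adjoint $\incfun$ carries a lax symmetric monoidal structure, taking mates along the adjunction endows the left adjoint $\qcomp$ with a canonical \emph{oplax} symmetric monoidal structure, and with these structures the unit $\eta$ and counit $\varepsilon$ of $\qcomp\dashv\incfun$ become monoidal natural transformations. That is exactly what it means for $\qcomp\dashv\incfun$ to be a symmetric monoidal adjunction, so the whole statement reduces to checking that the mate construction goes through in our concrete categories. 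I stress that $\qcomp$ is only oplax, not strong: indeed $\qcomp\C\not\cong\C$, since $\Hom_{\vNAMIU}(\qcomp\C,\scrA)\cong\Hom_{\vNACPsU}(\C,\scrA)=\{a\in\scrA:0\le a\le 1\}$ classifies effects, whereas $\C$ is initial in $\vNAMIU$.

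Concretely, since $\incfun$ is strict we have $\incfun(\scrA\otimes\scrB)=\incfun\scrA\otimes\incfun\scrB$ and $\incfun\C=\C$ on the nose, so the lax structure maps of $\incfun$ are identities, and their mates are the oplax structure maps of $\qcomp$. The oplax structure map $\qcomp(\scrA\otimes\scrB)\to\qcomp\scrA\otimes\qcomp\scrB$ in $\vNAMIU$ is the transpose under $\qcomp\dashv\incfun$ of
\[
\scrA\otimes\scrB\xrightarrow{\;\eta_{\scrA}\otimes\eta_{\scrB}\;}\incfun\qcomp\scrA\otimes\incfun\qcomp\scrB=\incfun(\qcomp\scrA\otimes\qcomp\scrB),
\]
and the oplax unit map $\qcomp\C\to\C$ is the transpose of $\id_{\C}$. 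The one genuine verification is that $\eta_{\scrA}\otimes\eta_{\scrB}$ is a legitimate morphism of $\vNACPsU$: the units $\eta_{\scrA},\eta_{\scrB}$ are normal CPsU-maps, so their tensor exists and is completely positive by the tensor-of-CP-maps fact recorded at the end of Section~\ref{sec:quantum-computation}, and it is subunital because $(\eta\otimes\eta)(1)=\eta(1)\otimes\eta(1)\le 1\otimes 1=1$. Hence the mates are well defined.

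It then remains to observe that the oplax coherence axioms for $\qcomp$ (associativity and unit), the compatibility with the symmetries $\gamma$, and the monoidality of $\eta$ and $\varepsilon$ all hold; these are precisely the content of the doctrinal-adjunction theorem and follow formally from naturality of the mate correspondence together with the (symmetric) coherence of $\incfun$, with no further computation inside $\vNAMIU$ or $\vNACPsU$. The main difficulty is therefore not computational but conceptual: one must keep the variance straight and work with an \emph{oplax} (not strong) structure on $\qcomp$, and then supply the single concrete input—existence and subunitality of $\eta\otimes\eta$—after which symmetry is inherited from $\incfun$ and everything else is automatic. I expect this oplax structure to be exactly what makes the induced comonad $\qcomp\circ\incfun$ an oplax symmetric monoidal comonad, so that the co-Kleisli presentation of $\vNACPsU$ from Corollary~\ref{cor:second-adjoint} is compatible with the tensor product.
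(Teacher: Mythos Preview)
Your proposal is correct and follows essentially the same approach as the paper: both note that $\incfun$ is strict symmetric monoidal and then invoke doctrinal adjunction to conclude (the paper cites Prop.~14 of Melli\`es~\cite{Mellies2009}, you cite Kelly, but these are the same abstract result). Your explicit unpacking of the mate construction, the check that $\eta_{\scrA}\otimes\eta_{\scrB}$ is a morphism of $\vNACPsU$, and the observation that $\qcomp$ is genuinely only oplax (since $\qcomp\C\not\cong\C$) are correct elaborations that the paper leaves implicit.
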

\begin{proof}
Clearly, $\mathcal{J}\colon \vNAMIU
\to \vNACPsU$
is strict symmetric monoidal.
From this fact alone,
it follows that the adjunction
$\mathcal{F}\dashv\mathcal{J}$
is symmetric monoidal, see Prop.~14 of~\cite{Mellies2009}.
\end{proof}

In our model of the quantum lambda calculus
the von Neumann algebras of the form $\ell^\infty(X)$
serve as the interpretation of the \emph{duplicable types}
(of the form~$!A$),
because $\ell^\infty(X)$ carries a $\otimes$-monoid structure.
Among all von Neumann algebras~$\linf(X)$
is arguably quite special
and  one might wonder if
there is a broader class of von Neumann algebras
that might serve as the interpretation of duplicable types
(such as the class of all commutative von Neumann algebras,
which includes~$L^\infty[0,1]$).
The following result
settles this matter: no.
Due to space constraints,
the proof will appear somewhere else.

\begin{theorem}
\label{thm:duplicable}
For a von Neumann algebra~$\mathscr{A}$
the following are equivalent.
\begin{enumerate}
\item
There is a \emph{duplicator} on~$\mathscr{A}$,
that is, 
a normal positive unital map $\mu\colon \mathscr{A}\otimes\mathscr{A}
\to\mathscr{A}$
such that $\mu(1\otimes a) = a = \mu(a\otimes 1)$
and
$\mu(a \otimes  \mu(b\otimes c)) = \mu(\mu(a\otimes b)\otimes c)$
for all~$a,b,c\in\mathscr{A}$.
\label{thm:duplicable-1}
\item
\label{thm:duplicable-2}
$\mathscr{A}$ is isomorphic to~$\ell^\infty(X)$
for some set~$X$.
\end{enumerate}
Moreover,
there is at most one duplicator on~$\mathscr{A}$.
\qed
\end{theorem}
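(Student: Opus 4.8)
The plan is to prove the two implications separately and to read uniqueness off the proof of \eqref{thm:duplicable-1}$\Rightarrow$\eqref{thm:duplicable-2}. For the easy direction \eqref{thm:duplicable-2}$\Rightarrow$\eqref{thm:duplicable-1}, the diagonal supplies everything: one takes $\mu$ to be the $\otimes$-monoid multiplication $\monmap_{\ell^\infty(X)}$, which under the isomorphism $\ell^\infty(X)\otimes\ell^\infty(X)\cong\ell^\infty(X\times X)$ of Lemma~\ref{lem:ell-tensor} is $(\monmap f)(x)=f(x,x)$, and checks directly that it is positive, unital, and satisfies the unit and associativity laws. The only mild point is normality, which holds because suprema in $\ell^\infty$ are computed coordinatewise.

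The substance is \eqref{thm:duplicable-1}$\Rightarrow$\eqref{thm:duplicable-2}, which I would split into commutativity and then atomicity. First I would show that a duplicator is forced to be the Jordan product on self-adjoint elements. Applying Kadison's generalised Schwarz inequality $\mu(s^2)\ge\mu(s)^2$ (valid for positive unital maps on self-adjoint $s$) to $s=a\otimes 1\pm 1\otimes a$, the unit laws $\mu(a\otimes 1)=\mu(1\otimes a)=a$ squeeze out $\mu(a\otimes a)=a^2$ for every self-adjoint $a$; applying the same inequality to $x+ty$ with $x=a\otimes 1$ and $y=1\otimes b$ (for which $\mu(x^2)=\mu(x)^2$ holds) and reading off the coefficient of $t$ yields $\mu(a\otimes b)=\tfrac12(ab+ba)$. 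Thus on self-adjoints $\mu$ is the Jordan product. The associativity law then forces this product to be associative; a short computation identifies the associator with $\tfrac14[b,[a,c]]$, so every commutator of $\mathscr{A}$ is central. Finally, for a projection $p$ and any $x$, centrality of $z=[x,p]$ together with $p^2=p^3=p$ gives $z=2zp$ and $z=3zp$, whence $zp=0$ and $z=0$; since projections generate $\mathscr{A}$, it follows that $\mathscr{A}$ is commutative and $\mu(a\otimes b)=ab$.

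It remains to show the commutative algebra $\mathscr{A}$ is \emph{atomic}, for then $\mathscr{A}\cong\ell^\infty(X)$ with $X$ its set of minimal projections. Passing to a corner $q\mathscr{A}q$, on which $\mu$ visibly restricts to a duplicator, I may assume $\mathscr{A}$ is non-atomic and aim for a contradiction. For a finite partition of the unit into projections $\sum_i p_i=1$, the ``diagonal'' projection $D=\sum_i p_i\otimes p_i$ satisfies $\mu(D)=\sum_i p_i=1$, and refining the partition only shrinks $D$. Choosing refining partitions whose diagonal projections have infimum $D_\infty=0$ — this is precisely where non-atomicity enters, the diagonal being null for a non-atomic measure — normality of $\mu$ applied to $1\otimes 1-D_n\uparrow 1$ gives $1=\mu(1\otimes 1-D_\infty)=\sup_n\mu(1\otimes 1-D_n)=0$, which is absurd. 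Uniqueness is then immediate: the argument above shows that \emph{any} duplicator equals $a\otimes b\mapsto ab$ on elementary tensors, and two normal maps agreeing on the weak*-dense linear span of elementary tensors coincide.

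The hard part is the atomicity step: producing, for an arbitrary (possibly non-separable, non-$\sigma$-finite) non-atomic commutative von Neumann algebra, a chain of refining finite partitions whose diagonal projections decrease to $0$, so that normality of $\mu$ can be brought to bear. This is where one must invoke the structure theory of commutative von Neumann (measure) algebras. By contrast, the Jordan-product and commutativity computations are elementary consequences of Kadison's inequality and the idempotent identity for projections, and the easy direction and uniqueness are essentially formal.
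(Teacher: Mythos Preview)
The paper does not prove this theorem; it explicitly defers the proof (``Due to space constraints, the proof will appear somewhere else''), so there is nothing in the paper to compare your argument against.

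That said, your outline is correct. The Kadison--Schwarz squeeze giving $\mu(a\otimes a)=a^{2}$ for self-adjoint~$a$, the multiplicative-domain argument at $x=a\otimes 1$ yielding $\mu(a\otimes b)=\tfrac12(ab+ba)$, the identification of the Jordan associator with $\tfrac14[b,[a,c]]$, and the projection trick $z=2zp=3zp\Rightarrow zp=0\Rightarrow z=0$ are all valid and combine to force $\mathscr{A}$ commutative with $\mu(a\otimes b)=ab$; uniqueness then follows by normality and weak* density of the span of elementary tensors. The easy direction is fine as stated.

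On the atomicity step, which you rightly flag as the crux: you do not need a countable \emph{chain} of partitions, nor Maharam-type structure theory. The finite partitions of~$1$ by projections, ordered by refinement, already form a directed set, and the corresponding diagonals $D=\sum_i p_i\otimes p_i$ are a decreasing net with $\mu(D)=1$ throughout; normality gives $\mu(E)=1$ for $E=\inf D$. If $\mathscr{A}$ is non-atomic one shows $E=0$ directly: product vector states separate positive elements of $\mathscr{A}\otimes\mathscr{A}$, so if $E\neq 0$ there is a product normal state $\phi\otimes\psi$ with $(\phi\otimes\psi)(E)=\delta>0$, whence $\sum_i\phi(p_i)\psi(p_i)\geq\delta$ for every finite partition; but $\sum_i\phi(p_i)\psi(p_i)\leq\max_i\phi(p_i)$, and non-atomicity lets you iteratively split projections on the support of~$\phi$ (where $\phi$ is faithful and the algebra is still non-atomic) until $\max_i\phi(p_i)<\delta$. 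This handles the non-separable and non-$\sigma$-finite cases uniformly, and your corner reduction to the non-atomic summand then finishes the general case.
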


\begin{corollary}
$\linf(\nsp(\mathscr{A}))$
is the free $\otimes$-monoid
on~$\mathscr{A}$ from~$\vNAMIU$.\qed
\end{corollary}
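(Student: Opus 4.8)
The plan is to exhibit $\lem\mathscr{A}=\linf(\nsp(\mathscr{A}))$ as the value at $\mathscr{A}$ of a reflector onto the full subcategory of $\otimes$-monoids, exploiting that $\lem$ is an idempotent monad. Concretely, I would show that the category $\mathrm{Mon}(\vNAMIU)$ of $\otimes$-monoids and monoid homomorphisms coincides, both on objects and on morphisms, with the reflective subcategory of $\lem$-local objects (those $\scrB$ for which $\eta_{\scrB}$ is an isomorphism), and then read off the universal property from the reflection $\eta$.

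For the objects I would argue as follows. A $\otimes$-monoid multiplication is a normal MIU-map $\mathscr{A}\otimes\mathscr{A}\to\mathscr{A}$, which is in particular a normal positive unital map satisfying the unit and associativity laws, i.e.\ a duplicator in the sense of Theorem~\ref{thm:duplicable}. Hence by that theorem a von Neumann algebra admits a $\otimes$-monoid structure only if it is isomorphic to some $\linf(X)$, and the structure is then unique. Conversely each $\linf(X)$ carries such a structure, namely $\monmap_X=\linf(\Delta_X)$ induced by the diagonal $\Delta_X\colon X\to X\times X$ (via Lemma~\ref{lem:ell-tensor}), which is a genuine MIU-map; so the unique duplicator on $\linf(X)$ is this MIU monoid structure. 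On the other hand, since $\lem$ is idempotent, its local objects are exactly those isomorphic to some $\lem\scrB=\linf(\nsp(\scrB))$, hence of the form $\linf(X)$; and each $\linf(X)$ is local because the counit of $\nsp\dashv\linf$ is an isomorphism (Corollary~\ref{cor:first-adjunction-unit-iso}), giving $\nsp(\linf(X))\cong X$ and thus $\eta_{\linf(X)}$ invertible. So the $\otimes$-monoids and the local objects are the same objects.

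The step I expect to be the crux is matching the morphisms: I must check that every MIU-map between two such algebras is automatically a monoid homomorphism, so that $\mathrm{Mon}(\vNAMIU)$ sits inside $\vNAMIU$ as a \emph{full} subcategory. Here I would again use that the counit of $\nsp\dashv\linf$ is an isomorphism, which makes $\linf$ fully faithful; hence any MIU-map $\linf(X)\to\linf(Y)$ is $\linf(g)$ for a unique function $g\colon Y\to X$. Units are preserved since MIU-maps are unital, and the multiplication is preserved because $\Delta_X\circ g=(g\times g)\circ\Delta_Y$ holds in $\Set$, which under $\linf$ (together with the monoidal isomorphism of Lemma~\ref{lem:ell-tensor}) is exactly the compatibility $\linf(g)\circ\monmap_X=\monmap_Y\circ(\linf(g)\otimes\linf(g))$. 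Thus the monoid homomorphisms between these algebras are precisely the MIU-maps.

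Finally I would assemble these facts. The forgetful functor $\mathrm{Mon}(\vNAMIU)\to\vNAMIU$ is the inclusion of the reflective subcategory of $\lem$-local objects, whose reflector is $\lem$ with unit $\eta$. For any monoid $M\cong\linf(X)$, precomposition with $\eta_{\mathscr{A}}$ yields a bijection between MIU-maps $\mathscr{A}\to M$ and MIU-maps $\lem\mathscr{A}\to M$ (the reflection property of an idempotent monad), and by the previous paragraph the latter are exactly the monoid homomorphisms $\lem\mathscr{A}\to M$. This bijection, natural in $M$, is precisely the universal property exhibiting $\lem\mathscr{A}=\linf(\nsp(\mathscr{A}))$ as the free $\otimes$-monoid on $\mathscr{A}$, with unit $\eta_{\mathscr{A}}$.
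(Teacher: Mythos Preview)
Your argument is correct and is precisely the natural unpacking of what the paper leaves implicit (the paper gives no proof beyond the \qed): Theorem~\ref{thm:duplicable} identifies the $\otimes$-monoids in $\vNAMIU$ with the algebras of the form $\linf(X)$ carrying their unique duplicator, full faithfulness of $\linf$ (Corollary~\ref{cor:first-adjunction-unit-iso}) makes every MIU-map between them a monoid homomorphism, and these are exactly the $\lem$-local objects, so the idempotent-monad reflection $\eta_{\mathscr{A}}\colon\mathscr{A}\to\lem\mathscr{A}$ supplies the free-monoid universal property.
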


\section{Final Remarks}
\label{sec:final-remarks}

We have given a  rather concrete proof of adequacy
for the sake of clarity.
However, it seems that we only used the fact that~$\vNAMIU$
is a `concrete model of the quantum lambda calculus'
(see Remark~\ref{rem:concrete-model-of-qlc}),
and that~$\vNACPsU$ is `suitably' enriched over convex sets.
Thus an abstract result
might be distilled
from our work
stating that any concrete model of the quantum lambda calculus
is adequate
when suitably enriched over convex sets,
but we have not pursued this.

We believe selling points of our model
are that
it is a model for Selinger and Valiron's original
quantum lambda calculus~\cite{SelingerV2006}
(in Selinger and Valiron's linear fragment~\cite{SelingerV2008a}
the $\bang$ modality is absent;
in Hasuo and Hoshino's
language~\cite{HasuoH2011}
the tensor type $\tqbit\otimes\tqbit$
does not represent two qubits;
and
only function types may be duplicable, $\oc(A\limp B)$,
in the language of Pagani et al.~\cite{PaganiSV2014});
that it is adequate
(Malherbe's model~\cite{Malherbe2010,MalherbeSS2013}
is not known to be);
that the interpretation of~$!$ is rather simple;
and that it is formed using von Neumann algebras,
a mathematical classic.

We believe our model could be improved
by a more concrete description of~$\sem{A\limp B}$
(as all the other models have),
and by features
such as recursion and inductive types
(present in e.g.~Hasuo and Hoshino's
and Pagani's models),
which leaves us with ample
material for future research.

\subparagraph*{Acknowledgements}

We thank Chris Heunen for spotting a typo.


\appendix

\section{Proof of Soundness for the Small-Step Reduction}
\label{sec:soundness-small-step}

\auxproof{
\begin{lemma}
Let $\Delta\yields V:A$
be a valid typing judgement with $V$ a value.
Then $\sem{\Delta\yields V:A}^{\FV}$ (and hence
$\sem{\Delta\yields V:A}$) is a MIU-map.
\end{lemma}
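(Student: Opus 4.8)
The plan is to argue by induction on the typing derivation $\Pi$ of $\Delta\yields V:A$. Because $V$ is a value, the last rule of $\Pi$ is one of \ruleref{ex}, \ruleref{ax1}, \ruleref{ax2}, \ruleref{top}, \ruleref{limp.I1}, \ruleref{limp.I2}, \ruleref{tens.I}, \ruleref{sum.I1}, \ruleref{sum.I2}; the elimination rules \ruleref{limp.E}, \ruleref{tens.E} and \ruleref{sum.E} conclude applications, let-bindings and matches, which are never values. This is exactly what makes the statement work, since the clauses for the elimination rules are the only ones in Table~\ref{tab:denotation-typing-judgements} that either feed in the interpretation of an arbitrary (possibly non-value) subterm or invoke the co-evaluation $\varepsilon$, the prototypical structural map that is a CPsU-map but not a MIU-map.

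First I would record the stock of MIU building blocks. The maps $\invbang$ are MIU by definition, hence so are the injections $\iota$; the symmetry $\gamma$, the subtyping maps, the monad data $\eta$ and $\mu$, and the structure isomorphisms $d^{\lem}$, $e^{\lem}$ and their iterates $d^{\lem^n}$, $e^{\lem^n}$ all live in $\vNAMIU$; the monoid multiplication $\monmap$ lives in $\vNAMIU$, so the derived maps $\mergmap=\splmap$ do too; and each projection $\pi_i$ is a MIU-map since $\oplus$ is the categorical product in $\vNAMIU$ (not merely in $\vNACPsU$), whence $\lem^n\pi_i$ is MIU because $\lem$ is an endofunctor on $\vNAMIU$. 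Crucially, $\Lambda f$ is a MIU-map for \emph{every} CPsU-map $f$, directly by the definition of $\Lambda$. Finally, MIU-maps are closed under composition, under the tensor $\otimes$ (as $\vNAMIU$ is monoidal), and under applying $\lem$; in particular the constants $\sem{\tnew}$, $\sem{\tmeas}$, $\sem{U}$ are MIU, being composites of the form $\eta_{\C}^{-1}\circ\lem\Lambda(\cdots)$.

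With this in hand the induction is routine. For \ruleref{ax1}, \ruleref{ax2} and \ruleref{top} the interpretation is a composite of subtyping maps, constants and (inverse) structure isomorphisms, hence MIU. For \ruleref{limp.I1} it is $\Lambda f'$, and for \ruleref{limp.I2} it is the composite $\mu\circ(d^{\lem})^{-1}\circ\lem(\Lambda f')$ preceded by a subtyping map; both are MIU, and here we crucially do \emph{not} need the body of the abstraction to be interpreted by a MIU-map. For \ruleref{tens.I} the value is $\ttup{V,W}^n$, so its two components are values; the induction hypothesis gives that their interpretations $f,g$ are MIU, and the remaining factors $(d^{\lem^n})^{-1}$, $f\otimes g$, $\iota\otimes\iota$ and $\mergmap$ are MIU, so the composite is MIU. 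For \ruleref{sum.I1} the value is $\tinl[n][A,B]{V}$ with $V$ a value, and the interpretation is $f\circ\lem^n\pi_1$ with $f$ MIU by the induction hypothesis; the case \ruleref{sum.I2} is symmetric. For \ruleref{ex} one precomposes the inductively-MIU map with $\id\otimes\gamma\otimes\id$, which is MIU. The final claim that $\sem{\Delta\yields V:A}$ is MIU is then immediate, as it equals $\iota\circ\sem{\Delta\yields V:A}^{\FV}$ with $\iota$ MIU.

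The only genuine work is the bookkeeping of the second paragraph: checking that every structural map appearing in the value clauses of Table~\ref{tab:denotation-typing-judgements} really does lie in $\vNAMIU$. The slightly less transparent instances are $\mergmap$, which must be unwound into copies of $\monmap$ and $\gamma$, and $\lem^n\pi_i$, which relies on $\oplus$ being a product already in $\vNAMIU$; and one must be careful to observe that $\Lambda$ always produces a MIU-map, so the two abstraction rules go through with no hypothesis on the body. There is no deeper obstacle.
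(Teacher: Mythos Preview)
Your proposal is correct and follows essentially the same approach as the paper: the paper's proof is simply ``By induction on values $V$. Note that the interpretation of each constant $\sem{c}\colon\sem{\oc A_c}\to\C$ is by definition MIU.'' Your argument is a careful elaboration of exactly this induction, correctly identifying that only the value-forming rules can occur and that every structural map in the corresponding clauses of Table~\ref{tab:denotation-typing-judgements} (including $\Lambda$, $\mergmap$, $\lem^n\pi_i$, and the constants) lies in $\vNAMIU$.
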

\begin{proof}
By induction on values $V$.
Note that the interpretation
of each constant $\sem{c}\colon\sem{\oc A_c}\to\C$ is by definition
MIU.
\end{proof}
}%

We need some results on the denotational semantics.

\begin{lemma}
\label{lem:denot-subst}
Suppose that $\oc\Delta,\Gamma_1,x:A\yields M:B$
and $\oc\Delta,\Gamma_2\yields V:A$,
so that $\oc\Delta,\Gamma_1,\Gamma_2\yields \tsubst{M}{x}{V}:B$
by Lemma~\ref{lem:subst}.
Then the following diagram commute.
\[
\xymatrix@C+5pc@R-1pc{
\sem{B} \ar[r]^-{\sem{\oc\Delta,\Gamma_1,\Gamma_2\yields \tsubst{M}{x}{V}:B}}
\ar[d]_{\sem{\oc\Delta,\Gamma_2,x:A\yields M:B}}&
\sem{\oc\Delta,\Gamma_1,\Gamma_2}
\\
\sem{\oc\Delta,\Gamma_1}\otimes\sem{A}
\ar[r]^-{\id\otimes\sem{\oc\Delta,\Gamma_2\yields V:A}} &
\sem{\oc\Delta,\Gamma_1}\otimes
\sem{\oc\Delta,\Gamma_2}
\ar[u]_{\splmap}
}
\]
\end{lemma}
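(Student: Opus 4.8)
The plan is to prove the lemma by induction on the derivation of $\oc\Delta,\Gamma_1,x:A\yields M:B$, equivalently by structural induction on $M$. It is cleanest to establish the statement for the $\FV$-indexed maps $\sem{-}^{\FV}$ and recover the displayed square afterwards by composing with the injections $\iota$. Before starting the induction I would record one auxiliary fact, proved by a short separate induction on values: for every value $V$ the map $\sem{\Delta\yields V:A}^{\FV}$ is a MIU-map, since each constant $\sem{c}$ is MIU by definition, each $\Lambda f$ lands in $\vNAMIU$, and the remaining value constructors are built from the MIU structural isomorphisms. This is the decisive ingredient, because a MIU-map is a homomorphism of $\otimes$-monoids and therefore commutes with the duplication maps $\monmap$, with the deletion maps $\invbang$, and hence with $\splmap$ and $\iota$; it is exactly this compatibility that lets a single value be copied into, or erased from, the occurrences of $x$.

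First I would dispatch the base cases. For the axiom \ruleref{ax1} with $M=x^B$, the term $\tsubst{M}{x}{V}$ is $V$ up to the coercion $\sem{A\subtype B}$, and the square collapses to the definition of $\sem{V}$ together with the unit law for $\splmap$ on the (empty) $\Gamma_1$. When $M=y^B$ with $y\neq x$, or $M=c^B$ is a constant, then $x\notin\FV(M)$ and the substitution is vacuous; here the $\sem{A}$-slot is filled through $\invbang$, and $\sem{V}\circ\invbang=\invbang$ by uniqueness of MIU-maps out of $\C$, so the factor is absorbed coherently.

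The inductive cases all follow one pattern, one for each term-former in Table~\ref{tab:denotation-typing-judgements}. In the rule used to derive $M$, the context $\oc\Delta,\Gamma_1,x:A$ is distributed over the premises: the variable $x$ is routed to the unique premise containing it when $A$ is linear, and is shared across premises (alongside $\oc\Delta$) when $A$ is duplicable. One applies the induction hypothesis to the premise(s) containing $x$ and then slides $\sem{V}$ outward through the structural maps the rule assembles. This uses naturality repeatedly --- of the co-evaluation $\varepsilon$ and of $\Lambda$ for \ruleref{limp.E} and the abstraction rules, of $d^{\lem^n}$ for \ruleref{tens.I} and \ruleref{tens.E}, of $e^{\lem^n}$, the distributor $\theta$ and the tupling $\tup{-,-}$ for the $\oplus$-rules, and of the symmetry $\gamma$ for \ruleref{ex} --- together with the monad laws for $\mu$ and $d^{\lem}$ in the $\oc$-introduction rule \ruleref{limp.I2}.

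The hard part will be the duplicable case, where $A=\oc A'$ and $x$ occurs in more than one premise, so $\sem{V}$ must be pushed through the copies that $\splmap$ (assembled from the duplications $\monmap$) produces. Here $V$ has duplicable type and its interpretation is a map between objects that both carry $\otimes$-monoid structure; the auxiliary fact that $\sem{V}$ is MIU, and hence a $\otimes$-monoid homomorphism, is precisely what is needed to copy it coherently through $\splmap$ and recombine the several substituted copies of $V$ into the single copy on the right-hand edge of the square. Once this compatibility of $\splmap$ with MIU-maps is in hand, each inductive case reduces to a diagram chase combining it with the naturality squares listed above, and no case raises a genuinely new difficulty.
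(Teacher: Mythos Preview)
Your proposal is correct and follows essentially the same approach as the paper: the paper's proof reads in its entirety ``By induction on $M$. Note that the interpretation of a value is MIU.'' You have expanded precisely this outline, isolating the same auxiliary fact (values have MIU interpretations) and explaining why it is the crux (compatibility with $\monmap$, $\invbang$, and hence $\splmap$ and $\iota$), so there is nothing to add.
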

\begin{proof}
By induction on $M$.
Note that
the interpretation of a value is MIU.
\end{proof}

\begin{lemma}
\label{lem:denot-beta-redu}
We have the following equations,
when terms $M,N$ and values $V,W$ are appropriately well-typed.
\begin{align*}
\sem{(\tlam[0]{x^A}{M})V}
&= \sem{\tsubst{M}{x}{V}}
\\
\sem{\tletin{\ttup{x^A,y^B}^n}{\ttup{V,W}^n}{M}}
&= \sem{\tsubsttwo{M}{x}{V}{y}{W}}
\\
\sem{\tmatchwith[n]{\tinl[n][A,B]{V}}{x^A}{M}{y^B}{N}}
&= \sem{\tsubst{M}{x}{V}}
\\
\sem{\tmatchwith[n]{\tinr[n][A,B]{W}}{x^A}{M}{y^B}{N}}
&= \sem{\tsubst{N}{y}{W}}
\end{align*}
Here we abbreviate $\sem{\Delta\yields M:A}$ to $\sem{M}$.
\end{lemma}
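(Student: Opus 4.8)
The plan is to verify each of the four equations by expanding both sides according to the inductive definition of the interpretation in Table~\ref{tab:denotation-typing-judgements}, and then invoking the substitution lemma (Lemma~\ref{lem:denot-subst}) to identify the result with the interpretation of the substituted term. In every case the reduction erases an introduction--elimination pair, and the corresponding categorical structure on the semantic side obeys a matching $\beta$-cancellation law; once that cancellation is performed, the surviving composite is exactly the left/bottom path of the commuting square in Lemma~\ref{lem:denot-subst}, so commutativity of that square closes the case.

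For the first equation I would unfold $\sem{(\tlam[0]{x^A}{M})V}$ using the application clause: it begins with the co-evaluation $\varepsilon\colon\sem{B}\to(\sem{A}\limp\sem{B})\otimes\sem{A}$, followed by $\sem{\tlam[0]{x^A}{M}}\otimes\sem{V}$ and then the merge map. By the abstraction clause $\sem{\tlam[0]{x^A}{M}}=\Lambda f'$ for the map $f'$ built from $\sem{M}$, so by bifunctoriality of $\otimes$ the leading factor contains $(\Lambda f'\otimes\id)\circ\varepsilon$, which equals $f'$ by the $\beta$-law $(\Lambda f\otimes\id)\circ\varepsilon=f$ recorded above. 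What remains is precisely $\sem{M}$ with $\sem{V}$ fed into the $x$-slot through the merge map, matching Lemma~\ref{lem:denot-subst} and yielding $\sem{\tsubst{M}{x}{V}}$.

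The other three cases follow the same template with different cancellation laws. For the $\otimes$ case the tuple clause for $\sem{\ttup{V,W}^n}$ introduces $(d^{\lem^n})^{-1}$ while the let clause applies $d^{\lem^n}$; these are mutually inverse, so they cancel, leaving $\sem{M}$ fed by $\sem{V}$ and $\sem{W}$ in the $x$- and $y$-slots, which a two-variable analogue of Lemma~\ref{lem:denot-subst} (obtained by iterating it, using that $V,W$ live in disjoint contexts) identifies with $\sem{\tsubsttwo{M}{x}{V}{y}{W}}$. For the two $\oplus$ cases, $\sem{\tinl[n][A,B]{V}}$ factors through $\lem^n\pi_1$ and $\sem{\tinr[n][A,B]{W}}$ through $\lem^n\pi_2$, while the match clause tuples the two branches via $\tup{-,-}$ and recombines them through $e^{\lem^n}$; here the key fact is that a coprojection composed with the matching projection is the identity while with the opposite projection it factors through $\C$, so exactly one branch survives, selecting $\sem{M}$ (resp.\ $\sem{N}$), and Lemma~\ref{lem:denot-subst} finishes.

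I expect the main obstacle to be purely the bookkeeping: each side is a long composite of structural isomorphisms ($d^{\lem^n}$, $e^{\lem^n}$, $\theta$, the injections $\iota$, the symmetries $\gamma$, and the merge maps), and one must check that after the $\beta$-cancellation the surviving composite coincides \emph{on the nose} with the map appearing in Lemma~\ref{lem:denot-subst}, not merely up to coherence. Because we work with indexed terms, Lemma~\ref{lem:interpretation-well-defd} guarantees that these coherence isomorphisms are uniquely determined, so the verification reduces to a diagram chase using naturality of $\varepsilon$, $d^{\lem}$, $e^{\lem}$ and $\theta$ together with the monoid laws for $\monmap$ and the universal property of $\invbang$; no genuinely new idea beyond Lemma~\ref{lem:denot-subst} is required.
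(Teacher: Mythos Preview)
Your proposal is correct and is precisely the approach the paper takes: the paper's proof reads in its entirety ``Straightforward, using Lemma~\ref{lem:denot-subst}.'' Your write-up simply spells out what that straightforward verification entails.
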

\begin{proof}
Straightforward, using Lemma~\ref{lem:denot-subst}.
\end{proof}

To prove Proposition~\ref{prop:soundness-small-step} by induction,
we need to strengthen the statement into Lemma~\ref{lem:soundness-strongthened}.
Note that $\sem{P:A}^{(0)}=\sem{P:A}$.

\begin{definition}
Let $\qclos{\ket{\psi}}{\ket{x_1\dotsc x_m}}{M}:A$
be a well-typed quantum closure
such that $x_i\notin\FV(M)$ for all $i\le l$.
Then we define:
\[
\sem{\qclos{\ket{\psi}}{\ket{x_1\dotsc x_m}}{M}:A}^{(l)}
=
\Mat_2^{\otimes l}
\otimes \sem{A}
\xrightarrow{\id\otimes
\sem{x_{l+1}:\tqbit,\dotsc,x_m:\tqbit\yields M:A}}
\Mat_2^{\otimes m}
\xrightarrow{\braket{\psi}{-}{\psi}}
\C
\]
\end{definition}

\begin{lemma}
\label{lem:soundness-strongthened}
Let $P=\qclos{\ket{\psi}}{\ket{x_1\dotsc x_m}}{M}:A$
be a well-typed quantum closure
such that $x_i\notin\FV(M)$ for all $i\le l$.
Then
$\sem{P:A}^{(l)}=\sum_{Q}\prob(P,Q)\sem{Q:A}^{(l)}$.
\end{lemma}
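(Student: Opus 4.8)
The plan is to proceed by induction on the structure of $M$ (equivalently, on the derivation of the reduction), splitting into cases according to which rule of Table~\ref{tab:reduction-rules} can fire $P$. By Lemma~\ref{lem:progress}, if $P$ is a value closure there is nothing to prove: no rule applies, $\prob(P,P)=1$, and the right-hand side collapses to $\sem{P:A}^{(l)}$. So assume $M$ is not a value.

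For the base cases --- the rules in Table~\ref{tab:reduction-rules}(a) and (b) --- the claim reduces to a direct computation. The four classical-control rules leave $\ket{\psi}$ and the variable list untouched and merely rewrite $M$ by a $\beta$-step with a single successor of probability $1$; here I would precompose the denotational identities of Lemma~\ref{lem:denot-beta-redu} with $\id_{\Mat_2^{\otimes l}}$ and postcompose with $\braket{\psi}{-}{\psi}$, noting that the set of qubits used is preserved by the substitution. For the quantum-data rules I would unfold the definitions of $\sem{U},\sem{\tnew},\sem{\tmeas}$ together with the maps $f_U(A)=U^*AU$, $f_{\tnew}(A)=(\braket{0}{A}{0},\braket{1}{A}{1})$ and $f_{\tmeas}(\lambda,\varrho)=\lambda\ket{0}\bra{0}+\varrho\ket{1}\bra{1}$ from \S\ref{sec:quantum-computation}. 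Rule $(U)$ then becomes the identity $\braket{\psi}{U^*(-)U}{\psi}=\braket{\psi'}{-}{\psi'}$ for the rotated state $\ket{\psi'}$ (with a permutation $\bar{\sigma}$ recording the positions of $x_1,\dots,x_k$); the rules $(\tnew_b)$ become the statement that $\braket{\psi}{f_{\tnew}(-)}{\psi}$, padded on the new tensor factor, equals $\braket{\psi'}{-}{\psi'}$ for the extended state $\ket{\psi'}=\ket{\psi}\ket{b}$; and the two rules $(\tmeas_b)$ become the convex decomposition $\braket{\psi}{\lambda\ket{0}\bra{0}+\varrho\ket{1}\bra{1}}{\psi}=p_0\lambda+p_1\varrho$ together with $p_b\braket{\psi_b}{-}{\psi_b}=\braket{\psi}{P^i_b(-)P^i_b}{\psi}$, which is precisely where the two probabilities $p_0,p_1$ of rule $(\tmeas_b)$ are produced.

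For the inductive (congruence) cases the reduction fires in a proper subterm, and I would invoke the induction hypothesis on the corresponding subclosure. This is exactly where the parameter $l$ earns its keep. Consider $\qclos{\ket{\psi}}{\Psi}{MN}$ reducing in $M$; by affineness the qubits of $N$ are disjoint from those of $M$, so after reordering $\Psi$ by exchange to place the $l$ original spectators and the $k$ qubits of $N$ in front, none of these first $l'=l+k$ variables occurs in $\FV(M)$, and the induction hypothesis applies to $\qclos{\ket{\psi}}{\Psi}{M}:A\limp B$ at level $l'$. Reading off Table~\ref{tab:denotation-typing-judgements}, one checks (up to a fixed symmetry on tensor factors) that $\sem{\qclos{\ket{\psi}}{\Psi}{MN}:B}^{(l)}$ factors as $\sem{\qclos{\ket{\psi}}{\Psi}{M}:A\limp B}^{(l')}\circ(\id\otimes\sem{N})\circ(\id\otimes\varepsilon)$, feeding the $k$ qubits produced by $\sem{N}$ and the $l$ spectators into the $l'$ input slots. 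Since $\sem{N}$ and $\varepsilon$ do not change as $M$ reduces, and every structural map in the table is linear, the convex sum supplied by the induction hypothesis commutes past this fixed precomposition; the corresponding recombination turns each $\sem{Q'}^{(l')}$ back into $\sem{Q}^{(l)}$ for the matching reduction $Q$ of $MN$, and the probabilities agree by the congruence rules. The remaining six congruence rules are handled identically, with $(\id\otimes\varepsilon)$, $\sem{N}$ replaced by the appropriate fixed maps ($d^{\lem^n}$, $e^{\lem^n}$, $\theta$, the injections $\iota$, and the merge $\splmap$). Crucially, because $\ket{\psi}$ may be entangled across the reducing and non-reducing qubits, one must \emph{not} try to factor the state; leaving $N$'s qubits as explicit inputs via the $(l')$-device and deferring the single global contraction against $\braket{\psi'}{-}{\psi'}$ to the innermost application of the hypothesis is exactly what makes this legitimate.

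The main obstacle is this congruence bookkeeping. The conceptual difficulty --- entanglement between the subterm and its context --- is dissolved by the $\sem{\cdot}^{(l)}$ generalisation, and linearity of the structural maps makes the convex sum pass through. What remains, and constitutes the bulk of the work, is to verify uniformly across all seven congruence rules that the relabelling of $\Psi$ and the adjustment of $l$ to $l'$ genuinely identify $\sem{\cdot}^{(l)}$ of the filled closure with the stated precomposite of $\sem{\cdot}^{(l')}$ of the subclosure, and that qubits freshly created by a $(\tnew_b)$ step deep inside a context are appended consistently so that the spectator prefix is unaffected. I expect each such verification to be routine once the spectator-qubit reordering is in place.
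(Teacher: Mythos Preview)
Your proposal is correct and follows essentially the same approach as the paper: induction on the term $M$, with the value case trivial, the classical $\beta$-rules handled via Lemma~\ref{lem:denot-beta-redu}, the quantum rules by direct unfolding of $f_U$, $f_{\tnew}$, $f_{\tmeas}$, and the congruence rules by reordering $\Psi$ so that the $k$ qubits of the non-reducing subterm sit just after the $l$ spectators and then invoking the induction hypothesis at level $l'=l+k$. The paper carries out in detail exactly the two cases you highlight (the congruence rule for $MN$ with $M$ not a value, and the base case $U\ttup{x_1,\dots,x_k}$), and your factorisation of $\sem{MN}^{(l)}$ through $\sem{M}^{(l+k)}$ matches theirs.
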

\begin{proof}
We prove it by induction on terms $M$.
If $M$ is a value, then it holds by the definition of $\prob$.
In the other induction steps, we prove the assertion by cases.

Consider the induction step for $MN$,
and the case where $M$ is not a value.
Then the only possible reductions
from $P=\qclos{\psi}{\Psi}{MN}$
are $\qclos{\psi}{\Psi}{MN}\redu[p]\qclos{\psi'}{\Psi'}{M'N}$
when $\qclos{\psi}{\Psi}{M}\redu[p]\qclos{\psi'}{\Psi'}{M'}$.
Without loss of generality,\footnote{%
A permutation of variables in $\Psi$ which keeps the first $l$ variables,
with the permutation of the corresponding qubits in $\ket{\psi}$, does not change
$\sem{P:A}^{(l)}$. The same is true for the operational
semantics~\cite[\S3.2]{SelingerV2008a}.} we may assume that
\[
l=\ket{y_1\dotso y_l z_1\dotso z_k x_1\dotso x_h}
\]
such that $x_1:\tqbit,\dotsc,x_h:\tqbit\yields M:A\limp B$
and $z_1:\tqbit,\dotsc,z_k:\tqbit\yields N:A$.
We will simply write $\sem{M}$ for
$\sem{x_1:\tqbit,\dotsc,x_h:\tqbit\yields M:A\limp B}$
and $\sem{N}$ similarly.
Then
\begin{align*}
&\sem{\qclos{\psi}{\Psi}{MN}:B}^{(l)}
\\
&=\Mat_2^{\otimes l}\otimes \sem{B}
\xrightarrow{\id\otimes\varepsilon}
\Mat_2^{\otimes l}\otimes \sem{A\limp B}\otimes \sem{A}
\xrightarrow{\id\otimes\sem{M}\otimes\sem{N}}
\Mat_2^{\otimes l}\otimes \Mat_2^{\otimes h}
\otimes \Mat_2^{\otimes k}
\\
&\qquad\xrightarrow{\id\otimes\gamma}
\Mat_2^{\otimes l}\otimes \Mat_2^{\otimes k}
\otimes \Mat_2^{\otimes h}
\xrightarrow{\braket{\psi}{-}{\psi}}
\C
\\
&=\Mat_2^{\otimes l}\otimes \sem{B}
\xrightarrow{\id\otimes\varepsilon}
\Mat_2^{\otimes l}\otimes \sem{A\limp B}\otimes \sem{A}
\xrightarrow{\id\otimes\gamma}
\Mat_2^{\otimes l}\otimes\sem{A}\otimes\sem{A\limp B}
\\
&\qquad\xrightarrow{\id\otimes\sem{N}\otimes\id}
\Mat_2^{\otimes (l+k)}\otimes\sem{A\limp B}
\xrightarrow{\id\otimes\sem{M}}
\Mat_2^{\otimes (l+k)}
\otimes \Mat_2^{\otimes h}
\xrightarrow{\braket{\psi}{-}{\psi}}
\C
\end{align*}
Let $\qclos{\psi}{\Psi}{M}\redu[p_i]
\qclos{\psi_i}{\Psi_i}{M_i}$ ($i\in I$)
be all the reductions from $\qclos{\psi}{\Psi}{M}$.
By IH, we have
\begin{align*}
&\Mat_2^{\otimes(l+k)}\otimes\sem{A\limp B}
\xrightarrow{\id\otimes\sem{M}}
\Mat_2^{\otimes(l+k)}
\otimes \Mat_2^{\otimes h}
\xrightarrow{\braket{\psi}{-}{\psi}}
\C
\\
&=\sem{\qclos{\psi}{\Psi}{M}:A\limp B}^{(l+k)}
\\
&=
\sum_{i\in I} p_i\sem{\qclos{\psi_i}{\Psi_i}{M_i}:A\limp B}^{(l+k)}
\\
&=\sum_{i\in I} p_i\Bigl(
\Mat_2^{\otimes(l+k)}\otimes\sem{A\limp B}
\xrightarrow{\id\otimes\sem{M_i}}
\Mat_2^{\otimes(l+k)}
\otimes \Mat_2^{\otimes h_i}
\xrightarrow{\braket{\psi_i}{-}{\psi_i}}
\C
\Bigr)
\end{align*}
It is then straightforward to see that
$\sem{\qclos{\psi}{\Psi}{MN}:B}^{(l)}=\sum_i p_i\sem{\qclos{\psi_i}{\Psi_i}{M_iN}:B}^{(l)}$.

Next consider the case where $M=U$
and $N=\ttup{x_1,\dotsc,x_k}^0$.
Without loss of generality we may assume $P=\qclos{\ket{\psi}}{\Psi}{U\ttup{x_1,\dotsc,x_k}}$ with
$l=\ket{y_1\dotso y_l x_1\dotso x_k z_1\dotso z_h}$.
The only reduction from $P$ is
$\qclos{\ket{\psi}}{\Psi}{U\ttup{x_1,\dotsc,x_k}}\redu[1]
\qclos{\ket{\psi'}}{\Psi}{\ttup{x_1,\dotsc,x_k}}
\eqqcolon Q$,
where $\ket{\psi'}=(\Iop_l\otimes U\otimes\Iop_h)\ket{\psi}$
($\Iop_n$ denotes the $2^n\times 2^n$ identity matrix).
We need to show that $\sem{P:\tqbit^{\otimes k}}^{(l)}=\sem{Q:\tqbit^{\otimes k}}^{(l)}$.
Note that
\[
\sem{x_1:\tqbit,\dotsc,x_k:\tqbit\yields U\ttup{x_1,\dotsc,x_k}:\tqbit^{\otimes k}}
=f_U\colon\Mat_2^{\otimes k}\to\Mat_2^{\otimes k}
\]
Thus we have
\[
\sem{P:\tqbit^{\otimes k}}^{(l)}=
\Mat_2^{\otimes l}\otimes
\Mat_2^{\otimes k}
\xrightarrow{\id\otimes f_U}
\Mat_2^{\otimes l}\otimes
\Mat_2^{\otimes k}
\xrightarrow{\id\otimes \iota}
\Mat_2^{\otimes (l+k+h)}
\xrightarrow{\braket{\psi}{-}{\psi}}
\C
\]
On the other hand, we have
\[
\sem{x_1:\tqbit,\dotsc,x_k:\tqbit\yields \ttup{x_1,\dotsc,x_k}:\tqbit^{\otimes k}}
=\id\colon\Mat_2^{\otimes k}\to\Mat_2^{\otimes k}
\]
and hence
\[
\sem{Q:\tqbit^{\otimes k}}^{(l)}=
\Mat_2^{\otimes l}\otimes
\Mat_2^{\otimes k}\xrightarrow{\id\otimes\iota}
\Mat_2^{\otimes (l+k+h)}
\xrightarrow{\braket{\psi'}{-}{\psi'}}
\C
\]
For each elementary tensor
$A\otimes B\in\Mat_2^{\otimes l}\otimes\Mat_2^{\otimes k}$,
\begin{align*}
\sem{P:\tqbit^{\otimes k}}^{(l)}(A\otimes B)
&=\braket{\psi}{(\id\otimes\iota)((\id\otimes f_U)(A\otimes B))}{\psi}
\\
&=\braket{\psi}{A\otimes (U^{\dagger}BU)\otimes\Iop_h}{\psi}
\\
&=\braket{\psi}{
(\Iop_l\otimes U^{\dagger}\otimes \Iop_h)
(A\otimes B\otimes\Iop_h)(\Iop_l\otimes U\otimes \Iop_h)}{\psi}
\\
&=\braket{\psi'}{(\id\otimes\iota)(A\otimes B)}{\psi'}
\\
&=\sem{Q:\tqbit^{\otimes k}}^{(l)}(A\otimes B)
\end{align*}
We conclude that $\sem{P:\tqbit^{\otimes k}}^{(l)}=\sem{Q:\tqbit^{\otimes k}}^{(l)}$.

Consider the case where $MN$ is of the form $(\tlam{x}{M})V$.
Only the reduction is $\qclos{\ket{\psi}}{\Psi}{(\tlam{x}{M})V}
\redu[1]\qclos{\ket{\psi}}{\Psi}{\tsubst{M}{x}{V}}$.
The assertion holds immediately by Lemma~\ref{lem:denot-beta-redu}.

The other cases in the induction step $MN$ can be shown similarly.
We can prove the other induction steps similarly by cases.
\end{proof}

\end{document}
